\def\BibTeX{{\rm B\kern-.05em{\sc i\kern-.025em b}\kern-.08em
    T\kern-.1667em\lower.7ex\hbox{E}\kern-.125emX}}
\newcommand{\Real}{\mathbb{R}}
\newcommand{\xb}{\mathbf{x}}
\newcommand{\commentout}[1]{}
\newtheorem{theorem}{Theorem}
\newtheorem{lemma}[theorem]{Lemma}
\providecommand{\customgenericname}{}
\newcommand{\newcustomtheorem}[2]{%
  \newenvironment{#1}[1]
  {%
   \renewcommand\customgenericname{#2}%
   \renewcommand\theinnercustomgeneric{##1}%
   \innercustomgeneric
  }
  {\endinnercustomgeneric}
}
\newcommand*{\shifttext}[2]{%
  \settowidth{\@tempdima}{#2}%
  \makebox[\@tempdima]{\hspace*{#1}#2}%
}
\begin{document}
\bstctlcite{IEEEexample:BSTcontrol}

%%%%%%%%% TITLE
\title{RED-PSM: Regularization by Denoising \\ 
of {Factorized Low Rank} Models\\
for Dynamic Imaging}

\author{Berk Iskender$^1$\footnotemark $\quad$ Marc L. Klasky$^2$ $\quad$ Yoram Bresler$^1$\\
\small{$^1$University of Illinois at Urbana-Champaign,
IL, USA $\quad$ $^2$Los Alamos National Laboratory, NM, USA}\\
}

\maketitle
\makeatletter{
\renewcommand*\@makefnmark{}
\footnotetext{This research was supported in part by Los Alamos National Labs under Subcontract No. 599416/CW13995.}\makeatother}

%%%%%%%%% ABSTRACT
\begin{abstract}
Dynamic imaging addresses the recovery of a time-varying 2D or 3D object at each time instant using its undersampled measurements. In particular, in the case of dynamic tomography, only a single projection at a single view angle may be available at a time, making the problem severely ill-posed. We propose an approach, RED-PSM, which combines for the first time two powerful techniques to address this challenging imaging problem. The first, are non-parametric factorized low rank models, also known as partially separable models (PSMs), which have been used to efficiently introduce a low-rank prior for the spatio-temporal object. The second is the recent \textit{Regularization by Denoising (RED)}, which provides a flexible framework to exploit the impressive performance of state-of-the-art image denoising algorithms, for various inverse problems. We propose a partially separable objective with RED and a computationally efficient and scalable optimization scheme with variable splitting and ADMM. Theoretical analysis proves the convergence of our objective to a value corresponding to a stationary point satisfying the first-order optimality conditions. Convergence is accelerated by a particular projection-domain-based initialization. We demonstrate the performance and computational improvements of our proposed RED-PSM with a learned image denoiser by comparing it to a recent deep-prior-based method known as TD-DIP. Although the main focus is on dynamic tomography, we also show {performance advantages} of RED-PSM in a cardiac dynamic MRI setting.
\end{abstract}

\begin{IEEEkeywords}
Dynamic imaging, Regularization by denoising, Low rank modeling.
\end{IEEEkeywords}

%%%%%%%%% BODY TEXT
\section{Introduction}
\IEEEPARstart{T}{ime-varying} or dynamic tomography involves the reconstruction of a dynamic object using its projections acquired sequentially in time. The problem arises in micro tomography \cite{maire201620}, myocardial perfusion imaging \cite{10.1093/ehjci/jew044}, thoracic CT \cite{keall2004acquiring}, imaging of fluid flow processes \cite{scanziani2018situ, o2011dynamic} and dynamic imaging of material samples undergoing compression\cite{patterson2016situ, patterson2020synchrotron}. Also, it is closely related to the dynamic MRI (dMRI) problem, which typically arises in cardiac imaging \cite{salerno2017recent}.

Dynamic tomography is a challenging ill-posed inverse problem: since the measurements are inconsistent due to the evolving object, traditional reconstruction algorithms lead to significant artifacts. As discussed further below, although numerous methods have been proposed to address this problem, they suffer from various limitations, or provide less than satisfactory reconstruction quality, especially in the scenario of main interest in this paper, when at each time instant only one projection is available.

\subsection{Proposed Approach}
\label{sec:intro_prop_method}
We propose an approach, RED-PSM, which combines for the first time two powerful techniques to address this challenging imaging problem. 
%{
The first technique, are {non-parametric} {\textit{factorized low-rank}} object models, 
% or particularly the
% \textit{partially separable models (PSMs)}
{popularized under the name} \emph{partially separable models (PSMs)} \cite{liang2007spatiotemporal}.
{PSM was} 
%first 
combined with low-rank matrix recovery in \cite{Haldar2010, zhao2010low, goud2010real}, and used since then to represent or motivate a \textit{low-rank object prior} in {tens of works on dynamic imaging.}
The second technique is the recent \textit{Regularization by Denoising (RED)} \cite{romano2017little}, which provides a flexible framework to exploit the impressive performance of state-of-the-art image denoising algorithms, for various inverse problems.

We propose a {non-convex} partially separable objective with RED for {learning-regularized low-rank spatio-temporal object recovery} and a computationally efficient and scalable optimization scheme with {bi-convex} ADMM.

Theoretical analysis proves the convergence of our objective to a value corresponding to a stationary point satisfying the first-order optimality conditions. Convergence is accelerated by a particular projection-domain-based initialization.

We demonstrate the performance and computational improvements of our proposed RED-PSM with a learned image denoiser by comparing it to a recent deep-prior-based method known as TD-DIP \cite{yoo2021time}, and to spatial and spatiotemporal total variation (TV) regularized versions of PSM {with a bi-convex objective}.

Although the main focus is on dynamic tomography, we also show the performance advantages of RED-PSM in a cardiac dynamic MRI setting.

RED was originally inspired by the Plug-and-Play (PnP) approach \cite{Venkatakrishnan2013}, which was the first to exploit powerful denoisers to represent the prior on the object in inverse problems but without explicitly defining the regularizer. 
Instead, in PnP the denoiser is incorporated by replacing a proximal mapping.
Both PnP and RED have shown strong empirical performance in various applications. A recent survey \cite{kamilov2023plug} provides a detailed discussion of the main results and different applications of PnP and the relation to RED. For the algorithmic variations PnP-ADMM and PnP-ISTA there are convergence results to unique fixed points. Moreover, for PnP-ISTA, similar to our result, convergence is shown to a stationary point of a possibly non-convex objective when the denoiser is an MMSE estimator \cite{xu2020provable}. This result requires the prior to be non-degenerate, i.e., not lie on a lower dimensional manifold, which, unfortunately, would be violated by the low-rank bilinear representation in our PSM.
We believe that a PnP-based method similar to RED-PSM can be formulated for our problem, but a proof of convergence to a stationary point may not directly follow, and a different approach and analysis may be needed.

Our motivation for using the original RED formulation in RED-PSM rather than more recent formulations and analyses (e.g., as in \cite{cohen_regularization_2021} and the references therein) {{are}} its simple gradient expression, and explicit regularizer expression facilitating implementation and theoretical analysis. 
The objective is to demonstrate the large improvement in performance that this simple learned regularizer can bring to the PSM scheme, and the improvement over much more complex and slower state-of-the-art methods. 
Recent variations on RED such as in \cite{cohen_regularization_2021}, and PnP methods (as covered in \cite{kamilov2023plug}) have improved over the original methods in reconstruction quality in various static imaging problems and/or in theoretical guarantees. 
However, theoretical guarantees for the case of a non-convex data fidelity term have been provided recently for only one such non-convex scenario - phase-retrieval \cite{xue_convergence_2022}. 
As this is the first work to extend RED to spatio-temporal imaging with a PSM, and to provide a convergence guarantee for the method in the non-convex bilinear scenario, we chose to limit our attention in this paper to the original RED version \cite{romano2017little, reehorst2018regularization}.

\subsection{Previous Work}
{We limit our discussion to dynamic CT and dMRI, although similar or analogous methods have been or can be applied in other dynamic imaging modalities. The same can be said of the method proposed in this paper.}

\subsubsection{Dynamic MRI (dMRI)}
{In spite of major advances in hardware and signal acquisition methods, MRI has remained a relatively slow modality, posing significant challenges for dynamic imaging. This, coupled with the high flexibility in the acquisition strategy, has motivated extensive work on dMRI, with hundreds of papers on the subject. We highlight some key developments in the algorithm-based dMRI techniques.}

{Fundamentally, algorithmic approaches to dMRI are based on identifying and exploiting redundancy in the dynamic object, which enable its recovery from an incomplete, and usually sparse set of samples acquired in the joint k-space and time domain -- the $k$-$t$-space. Many of the proposed methods have extensions to parallel imaging, which introduces additional such redundancy by additional hardware.}

{A large set of approaches rely on modeling the support of the object in the domain dual to the $k$-$t$-domain --  the ($x$-$f$) domain (the space -- temporal frequency domain). These approaches may be classified into three groups. 
The first involves generic modeling of the ($x$-$f$) support as in reduced field-of-view (FOV) techniques \cite{Hu1994,Brummer2004,Madore1999}. The second involves sparse and unknown support modeling in a transform domain, and sampling and recovery using the methods of compressed sensing \cite{Lustig2006,Gamper2008,jungImprovedBLASTSENSE2007,Jung2009}. The third involves adaptive sparse support modeling  \cite{Zhao2001,Aggarwal2002,aggarwalPARADIGM2008,SharifPARADISE2010}, where the support information is adapted to the dynamics of the imaged slice, and the theory of time-sequential sampling \cite{Willis1997,Willis1997a} is used to design an optimal $k$-$t$ sampling and reconstruction scheme.  Other related approaches model prior information such as motion periodicity and related harmonic structure \cite{Liang1994,Liang1997}, or the object’s spatial and temporal correlation functions \cite{Tsao2003}.}

{{Since its introduction,} the partially-separable model (PSM) \cite{liang2007spatiotemporal}, with a {non-parametric} {factorized} low-rank form separating the spatial and temporal structure, has been used {extensively in many works} {on dMRI} to represent the underlying spatio-temporal object. 
{A group of methods {recover the PSM in two steps: (a) estimate the temporal subspace using data (e.g., a navigator signal) sampled at low spatial, but high temporal resolution; and (b) estimate the spatial subspace using the entire acquired data set. Such methods include} 
e.g., \cite{liang2007spatiotemporal, pedersen2009k, brinegar2010improving, haldar2011low, zhao2011further}, {with followup versions} promoting sparsity of the object in a transform domain (e.g., \cite{zhao2012image}). 
Other PSM approaches {use a low-rank matrix recovery formulation} to estimate spatial and temporal components jointly {(free of conditions or additional data acquisition enabling a separate reconstruction of the temporal subspace)}, e.g., \cite{Haldar2010, zhao2010low}), also with {versions} promoting sparsity \cite{goud2010real, lingala2013blind}.}}

{{Manifold models have also been considered for dMRI. Recent \ybedit{such} work \cite{djebra2022manifold} surveys previous related works, and proposes a new method that is related to PSM. The method}} partitions the dynamic object into temporal subsets each lying on a low-dimensional manifold, approximated by its tangent subspace. As a result, the object is represented by a temporally partitioned PSM. A simple sparsity penalty is used for the spatial representation. The spatial basis functions and linear transformations aligning the temporal subsets are determined by optimizing a non-convex objective using an ADMM algorithm. {{The subset temporal partitioning and temporal bases are estimated using an initial reconstruction obtained from training data. In some applications, such data may not be available.}}

{Another set of methods, {including \cite{lingala2011accelerated, Majumdar2011, majumdar2013non, majumdar2015learning},} promotes low-rankness {of the entire spatio-temporal object matrix or of patches thereof \cite{Trzasko2011,Trzasko2013}  implicitly by using the nuclear norm or Schatten-p functional with $p<1$,} {often promoting sparsity in a transform domain as well.} Methods combining motion estimation and compensation with other priors are also used widely for dMRI, e.g., \cite{lingala2014deformation, chen2014motion, yoon_motion_2014, tolouee2018nonrigid}.}

{A different approach in dMRI {decomposes}  the object into 
the sum of low-rank and sparse components, {with the low-rankness} encouraged implicitly using nuclear or a Schatten-p functional \cite{tremoulheac2014dynamic, chiew2015k, otazo2015low, ravishankar2017low}. A recent related fast method \cite{babu2023fast} decomposes the object representation into {the sum of} three components: the mean signal; a low-rank PSM; {and a residual sparse in the Fourier domain.}} Although benchmark competing methods are either faster or more accurate on some of the diverse data sets used for the comparison in \cite{babu2023fast}, the method provides the lowest reconstruction error and runtimes when these quantities are averaged over these data sets.

{{Recent approaches to dMRI reconstruction have used deep neural networks (DNNs). A majority of these methods, such as
\cite{Schlemper2018,Qin2019,Biswas2019,hauptmann2019real,Kuestner2020,Huang2021,Ke2021,Ke2021a,Sandino2021,Ghodrati2021,Kuestner2021,Qi2021,Hammernik2021,Wang2022}, are supervised, requiring a large set of high-resolution
artifact-free images for network training. However,
acquiring such training data can be difficult for some  dMRI applications, and may be impossible for many other dynamic imaging applications, where clinicial dMRI strategies for freezing motion such as ECG synchronization and breath-hold by a cooperative patient, are not available.}}

{{The need  for a training data set is overcome by} 
recent object-domain deep image prior (DIP)-based {DNN algorithms for dMRI} \cite{yoo2021time, zou2021dynamic}, {which are unsupervised}. Providing impressive results, DIP-based algorithms such as \cite{yoo2021time} suffer from overfitting and usually require handcrafted early stopping criteria during the optimization of generator network parameters. To overcome these drawbacks, \cite{zou2021dynamic} includes regularization constraining the geodesic distances between generated objects. However, this requires the computation of the Jacobian of the generator network at each iteration of the update of the weights, significantly increasing the computation and run time.}

{A different unsupervised approach \cite{ahmed2022dynamic} combines factorized low rank (i.e., partially separable) {with} generative models.} Although it combines the PSM with the recent DIP framework, this method has the following limitations: (i) The spatial generator is an artifact removal network taking the full-sized spatial basis functions as input. Since this prevents a patch-based implementation,  it may be computationally problematic for high-resolution 3D+temporal settings; (ii) The CNN {structural} prior for natural images {that is used in the spatial generator} may not be useful as a prior for the individual spatial basis functions, since the least-squares optimal spatial basis functions are the left singular vectors of the complete object, and as such may not have the structure of natural images; (iii) As with the other DIP-based methods, if the additional penalties on the generator parameters are insufficient, this method can be prone to overfitting. {{These various limitations are overcome by the proposed approach.}}

\subsubsection{Dynamic Tomography (dCT)}

{CT is faster than MRI, thanks to acquiring an entire projection at one view angle at once. However, because of the angular scanning required in almost all systems, in spite of significant advances in hardware, {dCT} is still a relatively slow modality. It is also the problem that motivated much of our work reported in this paper.}
 
{In perhaps the earliest algorithmic approaches to dCT, the time-sequential sampling problem in d-CT was formulated and studied in \cite{willis1990tomographic}, and \cite{willis95, willis95_2} provided a solution using time-sequential sampling of bandlimited spatio-temporal signals including an optimal view angle order for the scan and theoretical guarantees for unique and stable reconstruction. However, the approach is limited by its bandlimitedness assumptions.}
 
{A class of algorithms for dCT {are based on modeling} the time variation by a motion field, and {motion-compensated reconstruction.} {Many of} these algorithms, e.g., \cite{ma11081395, zang2018space, jailin2021projection, capostagnoDeformableMotionCompensation2021c}  alternate between estimating the motion field and the time-varying object. {Some algorithms, e.g., \cite{burger2017variational} use the optical flow (mass continuity)} PDE. A different method \cite{hahnUsingNavierCauchyEquation2022} {estimates the motion field separately by evolving a linear elastic deformation by the Navier-Cauchy PDE, assuming known boundary evolution of the object and initial density distribution.} However, these methods assume the total mass {or density} to be preserved over time, which may be limiting assumptions. For instance, {flow of a contrast agent into the imaged volume, or} imaging a fixed slice of a 3D time-varying object under compression {with cross-slice motion} may violate either of these assumptions. As a {hardware-based, problem-specific} approach, optical tracking of fiducial marks {has been used} for motion correction \cite{brombalMotionArtifactsAssessment2021}.}

{{Various methods approach dCT reconstruction as Bayesian estimation. These include methods} \cite{georgeTimeresolvedCardiacCT2008a, butalaTomographicImagingDynamic2009, hakkarainenUndersampledDynamicXRay2019, caoUndersampledDynamicTomography2022} that employ a state-space formulation and use Kalman filter techniques {to approximate an MMSE estimate}. {A different Bayesian method} \cite{mohan2015timbir} incorporates a spatiotemporal Markov random field object prior, and models measurement imperfections. An iterative algorithm {is used to compute a MAP estimate.} An interlaced projection acquisition scheme is also proposed.}

{Several dCT methods {motivated by} compressed sensing {use} a total-variation penalty along both spatial and temporal coordinates {as in} \cite{luckaEnhancingCompressedSensing2018c}, while other methods, e.g., \cite{bubbaSparseDynamicTomography2020} propose efficient sparse representations for the {dynamic object.} {Incorporating} low-rank modeling \cite{gao2011robust} decomposes {the object} into {low-rank+sparse} components, {promoting the low rankness} implicitly using the nuclear norm. The method {relies on the object being approximately} static for groups of projections, enabling an approximate reconstruction from each such group.}

{The PSM model has been introduced into dynamic tomography \cite{iskender2022dynamic, iskender2022dynamic_long} by carrying the PSM to the projection domain using the harmonic representation of projections, and estimating the spatial and temporal basis functions jointly. The work provides a theoretical analysis of uniqueness and stability and the choice of time-sequential angular scan scheme for the problem. Despite the advantages of this approach, its performance is still limited by the null space of the measurement operator.}

{{To help address the underdetermined problem, the method of \cite{iskender2023factorized}} improves over \cite{iskender2022dynamic, iskender2022dynamic_long} by combining a PSM with  basic spatial regularization {by total variation.} {Like \cite{iskender2022dynamic, iskender2022dynamic_long} this method too} estimates the temporal and spatial PSM components jointly. This makes it applicable to the time-sequential acquisition scenario considered in this paper{, where only a single projection is available per time frame.} It imposes low rankness as a soft constraint, using a hybrid {biconvex} objective with a data fidelity term and a TV regularizer expressed in terms of the unfactorized object, and a penalty for object-PSM mismatch. It {uses} \cite{iskender2022dynamic, iskender2022dynamic_long} {for fast initialization.} The method \cite{iskender2023factorized} can be considered a precursor to the method of this paper, which combines {a hard rank constrained PSM with a sophisticated learned spatial regularizer,} and provides theoretical convergence analysis.}

{Recent methods \cite{zhiArtifactsReductionMethod2019, huangUnetbasedDeformationVector2020, madestaSelfcontainedDeepLearningbased2020,  zhangDeepLearningbasedMotion2023} introduce deep learning {into} {dCT}. {These methods require an initial reconstruction from the subsets of data from which the motion can be estimated accurately. However, {without a periodicity assumption,} this is not applicable to the scenarios in this paper, where only a single projection is available at a time. {Another method} \cite{majee2021multi} {for cone-beam dCT} proposes a multi-agent consensus equilibrium \cite{buzzard2018plug}-based technique using separate deep denoisers for axial, sagittal, and coronal slices of the object.}}

{{A} deep-learning-based method \cite{lozenski2022memory} {for dynamic photoacoustic tomography (PACT)}  introduced a neural field (NF) to represent the dynamic object with {fewer} parameters, {combining it} with a simple total variation (TV)-based regularizer. Despite providing an efficient representation for the dynamic object, the method outperforms its comparison benchmarks only in some of the projection-per-frame settings.}

\subsection{Contributions}
\label{sec:contributions}
\textbf{1.} {To the best of our knowledge, RED-PSM is the first PSM-based approach to dynamic imaging that 
incorporates a {particular}
(RED-based \cite{romano2017little}) spatial prior {that is} learned {and} pre-trained.}

\textbf{2.} {We are not aware of any prior work that uses RED with an explicit low-rankness constraint. In RED-PSM we achieve parsimonious representation for the dynamic object using PSM, {{and} reduce the data requirements further by incorporating the RED prior.}}
   
\textbf{3.}
Unlike supervised {learning-based} methods for spatio-temporal imaging \cite{Schlemper2018,Qin2019,Biswas2019,hauptmann2019real,Kuestner2020,Kuestner2021,Huang2021,Ke2021,Ke2021a,Sandino2021,Ghodrati2021,Qi2021,Hammernik2021,Wang2022,shi2015convolutional,wang2022spatial,ginio2023efficient} that learn a spatio-temporal model from training data, RED-PSM does not require ground-truth spatio-temporal training data, which is often not available or expensive to acquire. Thus RED-PSM offers the best of both worlds: a learned spatial model using readily available \textit{static} training data, and unsupervised single-instance adaptation to the spatio-temporal measurement.
    
\textbf{4.} A novel and effective ADMM algorithm for the resulting new minimization problem enforces PSM as a hard constraint and incorporates the learned RED spatial regularizer.
   
\textbf{5.} The method is supported by theoretical analysis:
we show convergence of the proposed PSM-based objective to a stationary point, and do so for a highly non-trivial learned regularizer. To date, the convergence of RED has been analyzed in the {nonlinear measurement} setting only for the phase retrieval problem \cite{xue_convergence_2022}. 
Our analysis of RED-PSM is the first convergence result of RED in the bilinear (bi-convex), non-convex scenario.
This convergence guarantee is of practical importance.
First, it provides a mathematical justification to terminate the algorithm after sufficiently many iterations. Second, it eliminates concerns about the solution degrading after too many iterations as in the case of DIP-based methods.
    
\textbf{6.} Compared to a recent DIP-based \cite{yoo2021time} algorithm, RED-PSM achieves better reconstruction accuracy with orders of magnitude faster run times.

\textbf{7.} To improve and speed up the empirical convergence of RED-PSM, we use a particular fast projection-domain PSM initialization scheme \cite{iskender2022dynamic, iskender2022dynamic_long} for the spatial and temporal basis functions. The accelerated and reliable convergence with this initialization scheme is important for applications that require fast turn-around between acquisition and reconstruction.
   
\textbf{8.} A version of the approach with a patch-based regularizer is shown to provide almost equivalent reconstruction accuracy. This makes the proposed method conveniently scalable to high-resolution 3D or 4D settings.

An earlier and partial version of this work, missing, among other things, the detailed discussion of previous work, convergence analysis, and some of the experimental results, was presented at a conference \cite{iskender2023red}.

\section{Problem Statement}
\label{sec:prob_statement}
\begin{figure}[b]%[hbtp!]
    \centering
    \includegraphics[width=0.9\linewidth]{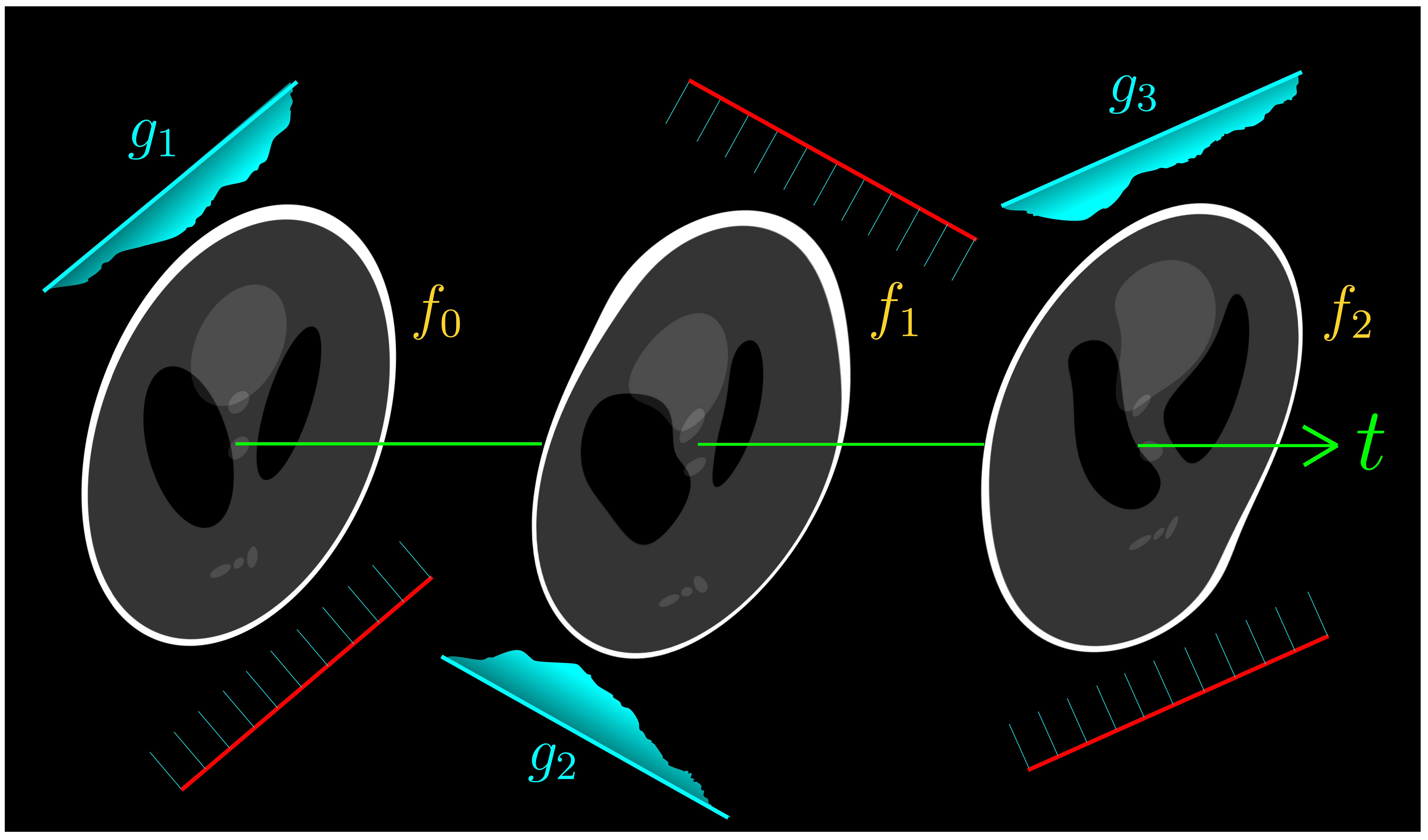}
    \caption{\small Imaging geometry for time-varying tomography of the object $f_t$ with single measurement $g_t$ at each time instant for $t \in \{0,1,2\}$.}
    \label{fig:problem_statement}
\end{figure}

In a 2D setting, illustrated in Figure \ref{fig:problem_statement}, the goal in the dynamic tomography problem is to reconstruct a time-varying object $f(\xb,t)$, $\xb \in \Real^2$ vanishing outside a disc of diameter $D$, from its projections  
\begin{align*}
    g(\cdot,\theta,t) = \mathcal{R}_\theta \{f(\mathbf{x},t)\}
\end{align*}
obtained using the Radon transform operator $\mathcal{R}_\theta$ at angle $\theta$. Considering time-sequential sampling, in which only one projection is acquired at each time instant, and sampling uniform in time, the acquired measurements are
\begin{equation}
\label{eq:sampled_proj}
    \{g(s,\theta_p, t_p)\}_{p=0}^{P-1}, \,\, \forall s, t_p = p\Delta_t,
\end{equation}
where $s$ is the offset of the line of integration from the origin (i.e., detector position), and $P$ is the total number of projections (and temporal samples) acquired. The sampling of the $s$ variable is assumed fine enough and is suppressed in the notation. The angular sampling scheme, the sequence $\{\theta_p\}_{p=0}^{P-1}$, with $\theta_p \in[0, 2 \pi]$, is %considered as 
a free design parameter.

Our objective in dynamic tomography is to reconstruct the underlying dynamic object  $\{f(\xb,t_p)\}_{p=0}^{P-1}$ from the time-sequential projections in \eqref{eq:sampled_proj}. 
The challenge is that because each projection belongs to a different object, the projections in \eqref{eq:sampled_proj} are inconsistent. Therefore, a conventional, e.g., filtered backprojection (FBP) reconstruction as for a static object results in significant reconstruction artifacts. This is to be expected, as the problem is severely ill-posed: an image with a $D$-pixels diameter requires more than $D$ projections for artifact-free reconstruction, whereas only one projection is available per time-frame in the time-sequential acquisition \eqref{eq:sampled_proj}.
{Several} dynamic tomography methods \cite{mohan2015timbir, zang2018space, majee2021multi} group temporally neighboring projections and assume the object is static during their acquisition. However, this reduces the temporal resolution, and any violation of this assumption (as in the case of time-sequential sampling) leads to mismodeling in the data fidelity terms. 
\section{Partially Separable Models (PSM)}
\label{sec:psm}
For spatio-temporal inverse problems such as dynamic MRI and tomography, the underlying object can be accurately represented using a partially-separable model (PSM), which effectively introduces a {{factorized}} low-rank prior to the problem. For dynamic tomography, a PSM can represent the imaged object $f$, or its full set of projections $g$. In this paper we use an object-domain PSM.

The representation of a dynamic object $f(\mathbf{x},t)$ by a $K$-th order partially separable model (PSM) is the series expansion
\begin{equation} \label{eq:parsep}
    f(\mathbf{x},t) = \sum_{k=0}^{K-1} \Lambda_{k}(\mathbf{x}) \psi_{k}(t).
\end{equation}
This model facilitates interpretability by separating the spatial structure from the temporal dynamics. Empirically, modest values of $K$ provide high accuracy in applications to MR cardiac imaging \cite{Haldar2010,ma2019dynamic,ma2021dynamic}. Theoretical analysis  \cite{iskender2022dynamic_long} shows that for a spatially bandlimited object undergoing a time-varying affine transformation (i.e, combination of time-varying translation, scaling, and rotation) of bounded magnitude, a low order PSM provides a good approximation.

As detailed later, as a standard choice in modeling temporal functions, we use a $d$-dimensional representation with $d \ll P$. 
A similar idea was used in \cite{zhao2010low} to represent the low temporal bandwidth of cardiac motion in dMRI. Together with the PSM for the object this leads to a significant reduction in the number of representation parameters for a spatio-temporal object with a $D$-pixel diameter and $P$ temporal samples: from $\approx PD^2$ to $\approx KD^2 + Kd$, with $K \ll P$. 
Because $d \ll D$, this corresponds to a factor of $P/K \gg 1$ compression of the representation, providing an effective parsimonious model for the spatio-temporal object $f$.

Also, by propagating the PSM object model to the projection domain, it enables quantitative analysis \cite{iskender2022dynamic_long} of the choice of optimal sequential projection angular schedule.
\section{Proposed Method: RED-PSM}
\label{sec:proposed_method}
The overall RED-PSM framework for recovering dynamic objects explained in this section is illustrated in Figure \ref{fig:framework}.
\begin{figure}
    \centering
    \includegraphics[width=\linewidth]{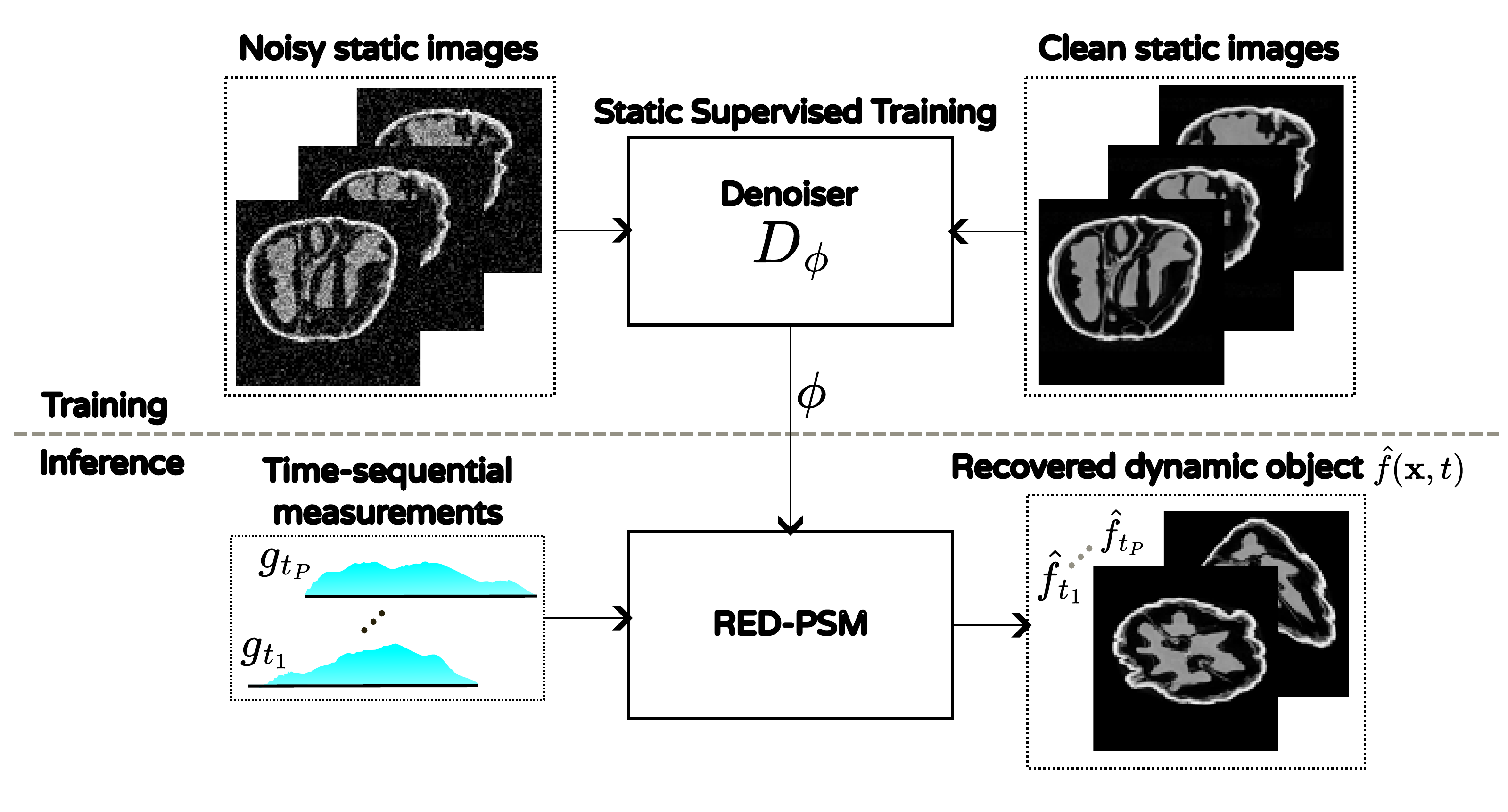}
    \setlength{\abovecaptionskip}{-16pt}
    \setlength{\belowcaptionskip}{-16pt}
    \caption{\small 
    The RED-PSM framework. The deep denoiser $D_\phi$ is trained on slices of  \textit{static} objects similar to the object of interest, and the learned spatial prior is used at inference time.}
    \label{fig:framework}
\end{figure}
\subsection{Variational Formulation}
\label{sec:var_form}
We use a discretized version of the PSM \eqref{eq:parsep} for the dynamic object, with the object $f(\cdot, t)$ at each time instant $t=0,1, \ldots, P-1$, represented by a $N\times N$-pixel image (a ``time frame", or ``snapshot"). 
Vectorizing these images to vectors $f_t \in \mathbb{R}^{N^2}$, the entire dynamic object is the $N^2 \times P$ matrix $f = [ f_0 \ldots f_{P-1}]$. %
{It will also be useful to extract individual time frames from $f$. Denoting  the $t$-th column of the $P \times P$ identity matrix by $e_t$, we have $fe_t= f_t$, i.e., $e_t$ extracts the $t$-th column of $f$. Applying the PSM model, we assume $ f = \Lambda\Psi^T \in \mathbb{R}^{N^2 \times P}$, where} the columns $\Lambda_k$ and $\Psi_k$ of $\Lambda \in \mathbb{R}^{N^2 \times K}$ and $\Psi \in \mathbb{R}^{P \times K}$ are the discretized spatial and temporal basis functions for the PSM representation, respectively.

Assuming that the x-ray detector has $N$ bins, the projection of the object at time $t$ is $g_t = g(\cdot, \theta_t, t) = R_{\theta_t} f_t \in \mathbb{R}^{N}$, where the measurement matrix $R_{\theta_t} \in \mathbb{R}^{N \times N^2}$ computes the projection at view angle $\theta_t$.

We formulate the recovery of $f$ as the solution $\hat{f}= \hat{\Lambda}\hat{\Psi}^T$ to the following variational problem
\begin{align}
\label{eq:variational_PSM_reg}
  (\hat{\Lambda}, \hat{\Psi})=\arg &\min_{\Lambda, \Psi}  \sum_{t=0}^{P-1} \big( \big\| R_{\theta_t} \Lambda \Psi^T e_t
  - g_t \big\|_2^2 + \lambda \rho(\Lambda\Psi^T   {e_t}) \big)
  \nonumber \\ &
  + \xi \|\Psi\|_F^2 + \xi\|\Lambda\|_F^2 \quad   \text{where}
  \quad \Psi = UZ.
\end{align}

The first term is the data fidelity term measuring the fit between available undersampled measurements $g_t$ of the true object and the measurements obtained from the estimated object $f = \Lambda\Psi^T \in \mathbb{R}^{N^2 \times P}$. 
The second term with weight $\lambda > 0$ is a \emph{spatial} regularizer injecting relevant spatial prior to the problem. It is applied to the PSM $\Lambda\Psi^T$ column by column, that is, to individual temporal image frames. The last two terms with weight $\xi > 0$ prevent the bilinear factors from growing without limit.
\footnote{Inclusion of these Frobenius norm {terms} also happens to lead to connections with nuclear norm, but, as discussed in detail in Section~\ref{sec:conv_analysis}, our problem formulation (even if the RED regularizer is disregarded) is \emph{not} equivalent to a formulation encouraging a low-rank solution by a nuclear norm or other Schatten norm penalty such as used in \cite{lingala2011accelerated, majumdar2013non, majumdar2015learning}. Instead, the factorized form $f = \Lambda\Psi^T$ (as in other PSM-based matrix recovery works such as \cite{Haldar2010, zhao2010low}) introduces this rank constraint explicitly.}

Finally, the identity $\Psi = UZ$ is an implicit \emph{temporal} regularizer that restricts the temporal basis functions $\Psi$ to a $d$-dimensional subspace of $\mathbb{R}^{P}$ spanned by a fixed basis $U \in \mathbb{R}^{P \times d}$ 
where $d \geq K$. The action of $U$ on columns of $Z$ may be interpreted as interpolation from $d$ samples to the $P$-sample temporal basis functions. In practice, we incorporate this identity by explicit substitution (reparametrization of $\Psi$ in terms of the free variable $Z\in \mathbb{Z}^{d \times K}$) into the objective, and the minimization in \eqref{eq:variational_PSM_reg} over $\Psi$ is thus replaced by minimization over $Z$. 
This reduces the number of degrees of freedom in $\Psi$ to a fixed number $dK$, independent of the number $P$ of temporal sampling instants. 
For notational conciseness, we do not display this constraint/reparametrization in the sequel, but it is used throughout.

\subsection{Incorporating Regularization by Denoising}
\label{sec:red}
For the {spatial regularizer $\rho(\cdot)$} we consider ``Regularization by Denoising (RED)" \cite{romano2017little}.
RED proposes a recovery method using a {denoising operator $D: \mathbb{R}^{N^2} \rightarrow \mathbb{R}^{N^2}$} in an explicit regularizer of the form
\begin{align}
\label{eq:RedReg}
    \rho(f_t) = \frac{1}{2}f_t^T(f_t - D(f_t)).
\end{align}
Recent studies using this regularizer provide impressive performances for various \textit{static} image reconstruction tasks including high-dimensional cases \cite{sun2019block} {and phase retrieval \cite{metzler2018prdeep}}. {This regularizer was also combined with a DIP-based fidelity term in \cite{mataev2019deepred}. However, we avoid this due to disadvantages related to speed, overfitting, and convergence guarantees.}

While it provides significant flexibility for the type of denoisers that can be used, RED still requires $D$ to be differentiable and locally homogeneous, and to satisfy the passivity condition $\| D(f_t) \| \leq  \| f_t \|$, for the theoretical analysis of RED to apply.
\footnote{While many powerful denoisers have been demonstrated to satisfy these conditions in \cite{romano2017little}, 
recent work \cite{reehorst2018regularization} provides another framework to explain the good performance of RED with denoisers not satisfying them.}

{Next, we consider the optimization in \eqref{eq:variational_PSM_reg}.} For the conventional variational formulation
\begin{align}
  \hat{f}_t =  \arg\min_{f_t} 
  % &\sum_{t=0}^{P-1} 
  \| R_{\theta_t}f_t - g_t \|_2^2 + \lambda \rho(f_t), \nonumber
\end{align}
an efficient choice are iterative algorithms \cite{romano2017little} that use the standard ``early termination" approach \cite{boyd2011distributed}, and only require a single use of the denoiser per iteration. However, the regularized PSM objective in \eqref{eq:variational_PSM_reg}
does not allow to propagate the RED updates on $f$ to the respective basis functions efficiently. To overcome this difficulty, we perform a bilinear variable splitting $f = \Lambda\Psi^T$ and obtain our final formulation
\begin{align}
\label{eq:hard_cnst_RED_obj}
    \min_{f, \Lambda, \Psi} &H(f, \Lambda, \Psi) \quad \text{s.t.} \quad f = \Lambda \Psi^T \\
\label{eq:obj_fidelity_psm}
\text{where} \,\, H(f, \Lambda, \Psi) = &\sum_{t=0}^{P-1} \Big( \Big\| R_{\theta_t} \Lambda \Psi^T e_t - g_t \Big\|_2^2 + \lambda \rho(f e_t) \Big) \nonumber \\ &\quad  + \xi\|\Lambda\|_F^2 + \xi\|\Psi\|_F^2.
\end{align}
Since the PSM is enforced as a hard constraint, the estimated object $f$ is constrained to have $\text{rank}(f) \leq K$. Problem \eqref{eq:hard_cnst_RED_obj} is non-convex even if $\rho$ is convex, because of the product between unknowns $\Lambda$ and $\Psi$.

We propose an algorithm based on ADMM to solve \eqref{eq:hard_cnst_RED_obj}. %
{To this end,} we form the augmented Lagrangian %
{in the scaled form}, \cite{gabay1976dual, eckstein1992douglas}
\begin{align}
\label{eq:aug_lag_2}
\mathcal{L}_\beta[\Lambda, \Psi, f; \gamma] = &\sum_t \Big( \Big\| R_{\theta_t} \Lambda \Psi^T e_t - g_t \Big\|_2^2 + \lambda \rho(f e_t) \Big) \nonumber \\
+ \xi\|\Psi\|_F^2 + \xi\|\Lambda\|_F^2& - %
{\frac{\beta}{2}\| \gamma \|_F^2} + \frac{\beta}{2}\| \Lambda \Psi^T - f + %
{\gamma}\|_F^2,
\end{align}
where $\gamma \in \mathbb{R}^{P \times N^2}$ represents the scaled dual variable associated with the constraint $f=\Lambda \Psi^T$ and $\beta > 0$ is the penalty parameter.

Then, ADMM can be used to solve \eqref{eq:aug_lag_2} as in Algorithm \ref{alg:admm_psm}.

\begin{algorithm}[H]
\caption{RED-PSM}\label{alg:admm_psm}
% \hspace*{\algorithmicindent} 
\textbf{input:} $\Lambda^{(0)}$, $\Psi^{(0)}$, $\gamma^{(0)}$, $f^{(0)} = \Lambda^{(0)} \Psi^{(0)T}$, $\beta>0$, $\lambda>0$, $\xi>0$
\begin{algorithmic}[1]
\FOR{$i\in\{1,\ldots,I\}$}
    \STATE $\Lambda^{(i)} =%$ \par $
    \arg\min_{\Lambda} \{ \sum_t \| R_{\theta_t} \Lambda \Psi^{(i-1)T}e_t - g_t \|_2^2 $ \par $ + \frac{\beta}{2}\| \Lambda \Psi^{(i-1)T} - f^{(i-1)} + %
    {\gamma^{(i-1)}} \|_F^2 + \xi\|\Lambda\|_F^2\}$ \label{line:lambda_step_psm_fidelity_alg1}
    \smallskip
    
    \STATE $\Psi^{(i)} =%$ \par $
    \arg\min_{\Psi} \{ \sum_t \| R_{\theta_t}\Lambda^{(i)} \Psi^T e_t - g_t \|_2^2 $ \par $ + \frac{\beta}{2}\| \Lambda^{(i)} \Psi^T + %
    {\gamma^{(i-1)} - f^{(i-1)} } \|_F^2 + \xi\|\Psi\|_F^2 \}$ \label{line:psi_step_psm_fidelity_alg1}
    \smallskip

    \STATE %
    {$\forall t:\,\, f_t^{(i)} = \arg\min_{f_t} \{ \lambda \rho(f_t) $ \par $+ \frac{\beta}{2}\| (\Lambda^{(i)} \Psi^{(i)T} + \gamma^{(i-1)})e_t - f_t  \|_2^2 \}$} \label{line:f_step_psm_fidelity_alg1}
    \smallskip
    
    \STATE $\gamma^{(i)} = \gamma^{(i-1)} + %
    {\Lambda^{(i)} \Psi^{(i)T} - f^{(i)}}$ \label{line:dual_step_psm_fidelity_alg1}
\ENDFOR
\end{algorithmic}
\end{algorithm}

Line~\ref{line:f_step_psm_fidelity_alg1} in Algorithm~\ref{alg:admm_psm} then corresponds to the variational denoising for all $t$ of %
{the ``pseudo image frame"} %
$(\Lambda^{(i)} \Psi^{(i)T} + \gamma^{(i-1)})e_t$ with regularization $\lambda \rho(f_t)$. 
Instead of solving this denoising problem by running an iterative algorithm to convergence, we follow the RED approach \cite{romano2017little,reehorst2018regularization}, and for each $t$ replace the $f_t$ update in Step~\ref{line:f_step_psm_fidelity_alg2} by a single fixed-point iteration step using the approach of early stopping \cite{boyd2011distributed} (Sec. 4.3.2), taking advantage of the gradient rule
\begin{align}
\label{eq:red_grad_rule}
    \nabla \rho(f_t) = f_t - D_\phi(f_t)
\end{align}
where $D_\phi$ is the denoiser. 
This results in Algorithm~\ref{alg:admm_psm_red}, which requires only a single use of the denoiser per iteration of ADMM. Furthermore, as in Algorithm \ref{alg:admm_psm}, the modified Line 4 in Algorithm~\ref{alg:admm_psm_red} can be performed in parallel for all $t$.

\begin{algorithm}[htb]
\caption{RED-PSM with efficient $f$ step}\label{alg:admm_psm_red}
\small{\textbf{Notes:} Inputs, and Lines 1-3 and 5-6 are the same as Algorithm \ref{alg:admm_psm}. 
The $\smash{f}$ step is applied $\smash{\forall t}$.
}
\begin{algorithmic}[1]
\setcounter{ALC@line}{3}
    
    \STATE \normalsize{$\forall t:\,\, f_t^{(i)} =  \frac{\lambda}{\lambda + \beta} D_\phi(f_t^{(i-1)}) + %
    {\frac{\beta}{\lambda + \beta} \big( \Lambda^{(i)} \Psi^{(i)T} +  \gamma^{(i-1)} \big)e_t}$ \label{line:f_step_psm_fidelity_alg2}}
\vspace{-0.25cm}
\end{algorithmic}
\end{algorithm}
\vspace{-0.5cm}

\subsection{Regularization Denoiser}
\label{sec:reg_denoiser}
The regularization denoiser $D_\phi$ has a {deep neural network CNN (DnCNN)} \cite{zhang2017beyond} architecture and is trained in a supervised manner on a training set of 
{2D slices $f_i~\in~\mathbb{R}^{N^2}, i= 1, \ldots N$ of one or more \textit{static} objects similar to the object of interest}, assuming that such data will be available in the settings of interest. Thus, the RED steps are agnostic to the specific motion type. The training objective for the denoiser is
\begin{align}
\label{eq:red_denoiser_train}
\min_{\phi} \sum_i \| f_i - D_\phi (\tilde{f}_i) \|_F^2 \quad s.t. \quad \tilde{f}_i = f_i + \eta_i, \,\, \forall i,
\end{align}
where the injected noise $\eta_i \sim \mathcal{N}(0, \sigma_i^2 I)$ has noise level $\sigma_i \sim U[0, \sigma_{\max}]$ spanning a range of values, so that the denoiser learns to denoise data with various noise levels.

\subsection{Computational Cost}
\label{sec:complexity_analysis}

Space (memory requirements) and time (operation count) complexities for two variants of the RED-PSM bilinear ADMM algorithm are shown in Table~\ref{tab:space_time_comp}. The operation counts are for a single outer iteration. As the number of outer iterations typically has a weak dependence on the size of the problem, the scaling shown tends to determine the run time. Furthermore, thanks to its structure, the algorithm also offers many opportunities for easy parallelization, so that actual runtime can be proportionally reduced by allocating greater computational resources. See the Supplementary Material Section~\ref{sec:time_space_comp} for the detailed analysis of computational requirements.

The complexities for the proposed Algorithm~\ref{alg:admm_psm_red} are given in the first column {of Table~\ref{tab:space_time_comp}.} Space complexity is dominated by the %
{storage of $\Lambda$, and scales proportionally to image cross-section size in pixels $N^2$, and the order $K$ of the PSM. The time complexity is dominated by the  computations of the gradient with respect to $\Lambda$, and scales  proportionally to the size $N^2P$ of the spatio-temporal object,   PSM order $K$, and the number $M_i$ of inner iterations used to solve optimization subproblems  (Lines 2 and 3 in Algorithm~\ref{alg:admm_psm_red}).} 

Finally, in the second column of Table \ref{tab:space_time_comp}, we investigate the patch-based version of the proposed algorithm described in Section \ref{sec:patch-based_denoiser}. Given a patch size $N_B \ll N$ and stride $s \leq N_B$, this alternative %
{increases the operation count by $(\frac{N_B}{s})^2$ since it operates on overlapping patches. For example, for the settings in our experiments, $s=\frac{N_B}{2}$, increasing the operation count
by a factor of $4$. This is a modest price to pay, because this alternative also reduces the space complexity by a factor of $(\frac{N_B}{N})^2$. For example, for $N_B$=$8$ and $N$=$128$ as used in our experiments, this corresponds to a reduction by a factor of $144$ in space complexity. Thus, this variant of RED-PSM enables scaling for high-dimensional and high-resolution settings.

% TABLE %
\begin{table}[hbtp!]
\footnotesize
\setlength{\tabcolsep}{2.5pt}
\renewcommand{\arraystretch}{1.05}
\centering
\begin{tabular}{@{}lccc@{}}
\toprule
\multicolumn{1}{c}{Complexity}&
\multicolumn{1}{c}{PSM-fidelity}&
% \multicolumn{1}{c}{$f$-fidelity}&
\multicolumn{1}{c}{Patch-based}\\
\midrule
   Space & $O(KN^2)$ 
   & $O(KN_B^2)$ \\
   Time & $O(K N^2 P M_i)$ 
   & $O(K N^2 P M_i (\frac{N_B}{s})^2)$  \\
 \bottomrule
\end{tabular}
% \vspace{-0.15cm}
\setlength{\abovecaptionskip}{2pt}
\setlength{\belowcaptionskip}{-12.0pt}
\caption{\small Time and space complexities for two variants of the RED-PSM algorithm for a single outer iteration of the bilinear ADMM. $M_i$ is the number of inner iterations for each iteratively solved subproblem, and  $N_B \ll N$,  $s \leq N_B$.}
\label{tab:space_time_comp}
\end{table}

\section{Convergence Analysis}
\label{sec:conv_analysis}
In this section, we follow an approach %
{similar to recent work on ADMM for a bilinear model} \cite{hajinezhad2018alternating} to analyze convergence. We show that under mild technical conditions, the objective in Algorithm \ref{alg:admm_psm} 
is guaranteed to converge (with increasing number $I$ of iterations) to a value corresponding to a stationary point of the Lagrangian, that is, satisfying the necessary conditions for first order optimality. 

In practice, Algorithm \ref{alg:admm_psm_red} with the efficient $f$ step version, which we implemented and used in the experiments reported in Section \ref{sec:experiments}, has better run times, and rapid empirical convergence. However, its analysis requires additional steps, which are not particularly illuminating. Therefore we focus on the analysis of the nominal Algorithm \ref{alg:admm_psm}.
 
{In spite of the similarity, at a high level, to the problem and analysis in \cite{hajinezhad2018alternating}, our problem formulation and algorithm differ in several  aspects, which require modifications in the analysis and proof. In particular, different to \cite{hajinezhad2018alternating}, where the bilinear form only appears in the constraint, the bilinear form appears both in our objective function and constraint. {To satisfy strong convexity for the respective subproblems, our objective also uses the Frobenius norm terms for the PSM factors, instead of the interim proximal terms in \cite{hajinezhad2018alternating}. Moreover, to allow the efficient $f$-step in Algorithm \ref{alg:admm_psm_red}, we use a different order of updates than \cite{hajinezhad2018alternating}. Importantly, we account for the explicit RED regularization and its required properties in the proof of convergence.} Our analysis includes a proof (similar to \cite{wang2019global}) of the boundedness of the iterates to justify the existence of points of accumulation of the iterate sequence, which appears to have been inadvertently left out in \cite{hajinezhad2018alternating}. 
{However, different to \cite{wang2019global}, which has objectives with linear constraints,  our objective \eqref{eq:variational_PSM_reg} has a bilinear constraint.}}

To simplify the notation, we replace the {separate computation of the projections of {time $t$-frames in the data fidelity term} by using the operator $\bar{R}: \mathbb{R}^{N^2 \times P} \rightarrow \mathbb{R}^{N \times P}$ that computes the entire set of $P$ projections at view angles $\theta_t$ of the image frames at times $t$, $t=1, \ldots, P$ of dynamic image $f$, i.e, of each of its columns indexed by $t$, producing $g= \bar{R} f \in \mathbb{R}^{N \times P}$. When applied to the PSM, $\bar{R}$ performs {$R_{\theta_t} \Lambda \Psi^T e_t$} for each $t$. We also {aggregate the contribution} of the RED regularizer into $\bar{\rho}(f) \triangleq \sum_{t=0}^{P-1} \rho(f_t): \mathbb{R}^{N^2 \times P} \rightarrow \mathbb{R}$, and the denoiser into} $\bar{D}: \mathbb{R}^{N^2 \times P} 
\rightarrow \mathbb{R}^{N^2 \times P}$, which performs $D$ for each column of $f$ indexed by $t$.
}
Then, the Lagrangian function with dual variable $\gamma$ can be rewritten as 
\begin{align}
\label{eq:lag_1} 
\mathcal{L}[\Lambda, \Psi, &f; \gamma] = \| \bar{R}\Lambda\Psi^T - g \|_F^2 + \lambda %
{\bar{\rho}}(f) \nonumber \\
& + \xi\|\Lambda\|_F^2 + \xi\|\Psi\|_F^2 + %
{\beta}\langle \gamma, (\Lambda \Psi^T - f) \rangle ,
\end{align}
where the inner product is defined as $\langle A, B \rangle = \text{Tr}(A^T B)$.

The corresponding augmented Lagrangian is
\begin{align}
\label{eq:aug_lag_1}
&\mathcal{L}_\beta[\Lambda, \Psi, f; \gamma] = \| \bar{R}\Lambda\Psi^T - g \|_F^2 + \lambda %
{\bar{\rho}}(f) \\
& + \xi\|\Lambda\|_F^2 + \xi\|\Psi\|_F^2 + %
{\beta}\langle \gamma, (\Lambda \Psi^T - f) \rangle + \frac{\beta}{2}\| \Lambda \Psi^T - f \|_F^2. \nonumber
\end{align}

Then, we can state the subproblems with respect to each primal variable in Algorithm \ref{alg:admm_psm} as
\begin{align}
    S_\Lambda &= \| \bar{R}\Lambda\Psi^{(i-1)T} - g \|_F^2 + {\xi} \| \Lambda \|_F^2 \nonumber \\ &+ \frac{\beta}{2}\| \Lambda \Psi^{(i-1)T} - f^{(i-1)} + \gamma^{(i-1)} \|_F^2 \label{eq:SLambda} \\
    S_\Psi &= \| \bar{R}\Lambda^{(i)}\Psi^{T} - g \|_F^2 + {\xi} \| \Psi \|_F^2 \nonumber \\ &+ \frac{\beta}{2}\| \Lambda^{(i)}\Psi^{T} - f^{(i-1)} + \gamma^{(i-1)} \|_F^2, \text{ and} \label{eq:SPsi}\\
    S_f &= \lambda %
    {\bar{\rho}}(f) + \frac{\beta}{2}\| \Lambda^{(i)} \Psi^{(i)T} - f + %
    {\gamma^{(i-1)}} \|_F^2. \label{eq:Sf}
\end{align}

Algorithm \ref{alg:admm_psm} will be shown in Theorem \ref{thm:stat_pt_conv_thm} to converge to a stationary point of Problem \eqref{eq:hard_cnst_RED_obj}, which is defined below.

\textbf{Definition 1 (Stationary solution for Problem \eqref{eq:hard_cnst_RED_obj}).} The point $W^* = (\Lambda^*, \Psi^*, f^*, \gamma^*)$ is a stationary solution of the problem \eqref{eq:hard_cnst_RED_obj} if it satisfies the stationarity and primal feasibility conditions for the variables of the Lagrangian in \eqref{eq:lag_1}:
\begin{subequations}
\begin{align}
    \nabla_f\mathcal{L}(\Lambda^*, \Psi^*, f^*; \gamma^*) &= \lambda \nabla %
    {\bar{\rho}}(f^*) - %
    {\beta}\gamma^* = 0; %\nonumber 
    \label{eq:LagStationary_f} \\
    \nabla_\Lambda\mathcal{L}(\Lambda^*, \Psi^*, f^*; \gamma^*) &= (2\bar{R}^T(\bar{R}\Lambda^*\Psi^{*T} - g) + %
    {\beta}\gamma^*) \Psi^* \nonumber 
    \\ &\quad + %
    {2} \xi \Lambda^*= 0; %\nonumber 
   \label{eq:LagStationary_Lambda} \\
    \nabla_\Psi\mathcal{L}(\Lambda^*, \Psi^*, f^*; \gamma^*) &= (2\bar{R}^T(\bar{R}\Lambda^*\Psi^{*T} -g) + %
    {\beta}\gamma^*)^T\Lambda^{*} \nonumber \\ &\quad + %
    {2}\xi \Psi^* = 0; %\nonumber 
    \label{eq:LagStationary_Psi}\\
    f^* &= (\Lambda\Psi^T)^{(*)}.
    \label{eq:LagStationary_Feasible}
\end{align}
\end{subequations}

By its definition, a stationary solution of \eqref{eq:hard_cnst_RED_obj} satisfies the necessary first-order conditions for optimality.

To state Theorem \ref{thm:stat_pt_conv_thm}, we need to introduce some additional definitions pertaining to the denoiser $D$ used in RED.

\textbf{Definition 2 (Strong Passivity \cite{romano2017little}).} 
A function $D$ is strongly passive if $\|D(f)\| \leq \|f\|$.

Moreover, we assume the gradient rule of RED \eqref{eq:red_grad_rule}.
This is shown to hold true when the denoiser $D$ is assumed locally homogeneous (LH) \cite{romano2017little} and Jacobian symmetric (JS) \cite{reehorst2018regularization}. {It was shown \cite{romano2017little}} that several popular denoisers are practically LH, and that for \eqref{eq:red_grad_rule} to hold for the explicit regularizer $\rho$ in \eqref{eq:RedReg}, JS is necessary {\cite{reehorst2018regularization}}.

In our analysis, we only require the gradient rule \eqref{eq:red_grad_rule} to hold, rather than the explicit form \eqref{eq:RedReg} of the regularizer. Even for denoisers without LH and JS, the gradient rule can be used to define the fixed point iteration in the algorithm \cite{reehorst2018regularization}, and, 
more recent works on using denoisers for regularization 
\cite{cohen_it_2021,hurault_gradient_2021, PnP-Reg-Fermanian2023, berk_deep_2023, chand_multi-scale_2023} 
(in particular \cite{PnP-Reg-Fermanian2023}) shows how to learn a denoiser jointly with the gradient so that the gradient rule is satisfied.

Finally, we will also assume Lipschitz continuity of the denoiser with some Lipschitz constant $L_D$, i,e., $\| D_\phi(f_1) - D_\phi(f_2) \|_2 \leq L_D \| f_1 - f_2 \|_2 $ for all images $f_1, f_2 \in \mathbb{R}^{N^2}$. This means that the denoiser has some finite gain $L_D$, which is satisfied by any reasonable denoiser.
\smallskip

Our main convergence result is the following.
\begin{theorem}
\label{thm:stat_pt_conv_thm}
    Suppose that the denoiser $D_\phi$ {satisfies the gradient rule in \eqref{eq:red_grad_rule},} {and is} Lipschitz continuous with some Lipschitz constant $L_D$, and strongly passive.

    Then, if $\beta > 2L$ where $L \triangleq\lambda(1+L_D)$, Algorithm \ref{alg:admm_psm} converges globally (i.e., regardless of initialization) to a stationary solution $(f^*, \Lambda^*, \Psi^*)$ of \eqref{eq:hard_cnst_RED_obj}
    in the following sense:\\
    (i) The sequence of values of the objective $H$ converges to a limit $H^*=H(f^*, \Lambda^*, \Psi^*)$;
    and \\
    (ii) The iterates generated by Algorithm \ref{alg:admm_psm} converge subsequentially to a stationary solution $(f^*, \Lambda^*, \Psi^*, \gamma^*)$ of \eqref{eq:hard_cnst_RED_obj}, that is, any accumulation point (of which there is at least one) of the sequence $(f^{(i)}, \Lambda^{(i)}, \Psi^{(i)},  \gamma^{(i)})$ is a stationary solution of \eqref{eq:hard_cnst_RED_obj}.
\end{theorem}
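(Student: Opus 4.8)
The plan is to follow the now-standard descent-plus-boundedness template for nonconvex ADMM, as in \cite{hajinezhad2018alternating, wang2019global}, treating the augmented Lagrangian $\mathcal{L}_\beta$ in \eqref{eq:aug_lag_1} as a monotonically decreasing, lower-bounded potential function along the iterates, and then extracting a stationary accumulation point. The linchpin of the whole argument is an exact identity for the dual variable. Combining the first-order optimality condition of the $f$-subproblem $S_f$ in \eqref{eq:Sf}, namely $\lambda\nabla\bar{\rho}(f^{(i)}) - \beta(\Lambda^{(i)}\Psi^{(i)T} - f^{(i)} + \gamma^{(i-1)}) = 0$, with the dual update in Line~\ref{line:dual_step_psm_fidelity_alg1}, which gives $\Lambda^{(i)}\Psi^{(i)T} - f^{(i)} + \gamma^{(i-1)} = \gamma^{(i)}$, I would first establish $\beta\gamma^{(i)} = \lambda\nabla\bar{\rho}(f^{(i)})$ for every $i$. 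This immediately yields the $f$-stationarity condition \eqref{eq:LagStationary_f} in the limit and, via the gradient rule \eqref{eq:red_grad_rule} written as $\nabla\bar{\rho}(f) = f - \bar{D}(f)$ together with the $L_D$-Lipschitz continuity of $\bar{D}$, it lets me control the dual increments by the primal ones: $\|\gamma^{(i)} - \gamma^{(i-1)}\|_F \le (L/\beta)\|f^{(i)} - f^{(i-1)}\|_F$ with $L = \lambda(1 + L_D)$.

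Next I would prove sufficient decrease. Because of the Frobenius terms $\xi\|\Lambda\|_F^2$ and $\xi\|\Psi\|_F^2$, the subproblems $S_\Lambda$ and $S_\Psi$ are strongly convex with modulus $2\xi$, and because $\lambda\nabla\bar{\rho}$ is $L$-Lipschitz the subproblem $S_f$ is strongly convex with modulus $\beta - L$; each block minimization therefore decreases $\mathcal{L}_\beta$ by at least a quadratic in the corresponding iterate gap. The only term that can increase $\mathcal{L}_\beta$ is the dual update, whose contribution is exactly $\beta\|\gamma^{(i)} - \gamma^{(i-1)}\|_F^2$ since $\gamma^{(i)} - \gamma^{(i-1)} = \Lambda^{(i)}\Psi^{(i)T} - f^{(i)}$. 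Bounding this increase with the dual estimate above gives a net per-iteration decrease of the form $\xi\|\Delta\Lambda^{(i)}\|_F^2 + \xi\|\Delta\Psi^{(i)}\|_F^2 + (\tfrac{\beta - L}{2} - \tfrac{L^2}{\beta})\|\Delta f^{(i)}\|_F^2$, and the coefficient of $\|\Delta f^{(i)}\|_F^2$ is positive precisely when $\beta^2 - L\beta - 2L^2 = (\beta - 2L)(\beta + L) > 0$, i.e. when $\beta > 2L$. This is exactly where the hypothesis on $\beta$ is consumed.

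I would then secure the two remaining ingredients. For lower-boundedness of $\mathcal{L}_\beta$, I would substitute $\beta\gamma^{(i)} = \lambda\nabla\bar{\rho}(f^{(i)})$ into the coupling terms and apply the descent lemma to $\lambda\bar{\rho}$ at $f^{(i)}$ in the direction $\Lambda^{(i)}\Psi^{(i)T} - f^{(i)}$; this rewrites $\mathcal{L}_\beta^{(i)}$ as the data-fidelity term plus the Frobenius terms plus $\lambda\bar{\rho}(\Lambda^{(i)}\Psi^{(i)T})$ plus a nonnegative remainder $\tfrac{\beta - L}{2}\|\Lambda^{(i)}\Psi^{(i)T} - f^{(i)}\|_F^2$. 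Strong passivity of $D$ makes $\bar{\rho} \ge 0$, so every summand is nonnegative and $\mathcal{L}_\beta^{(i)} \ge 0$. Monotone decrease plus this bound forces $\mathcal{L}_\beta^{(i)}$ to converge and, by telescoping, $\|\Delta\Lambda^{(i)}\|_F, \|\Delta\Psi^{(i)}\|_F, \|\Delta f^{(i)}\|_F \to 0$; the dual estimate then gives $\|\gamma^{(i)} - \gamma^{(i-1)}\|_F \to 0$, hence asymptotic feasibility $\|\Lambda^{(i)}\Psi^{(i)T} - f^{(i)}\|_F \to 0$. The same decomposition also yields \emph{boundedness of the iterates}, the step noted as inadvertently omitted in \cite{hajinezhad2018alternating}: since $\mathcal{L}_\beta^{(i)} \le \mathcal{L}_\beta^{(0)}$ and all summands are nonnegative, $\xi\|\Lambda^{(i)}\|_F^2$ and $\xi\|\Psi^{(i)}\|_F^2$ are uniformly bounded, whence $\Lambda^{(i)}$ and $\Psi^{(i)}$ are bounded, $f^{(i)} = \Lambda^{(i)}\Psi^{(i)T} - (\gamma^{(i)} - \gamma^{(i-1)})$ is bounded, and $\gamma^{(i)} = (\lambda/\beta)(f^{(i)} - \bar{D}(f^{(i)}))$ is bounded. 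This is the one place where the Frobenius penalties are indispensable rather than merely convenient.

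Finally, boundedness guarantees at least one accumulation point $W^* = (\Lambda^*, \Psi^*, f^*, \gamma^*)$. To conclude part (ii), I would pass to the limit along a convergent subsequence in the exact optimality conditions of $S_\Lambda$ and $S_\Psi$ and in the identity $\beta\gamma^{(i)} = \lambda\nabla\bar{\rho}(f^{(i)})$, using continuity of $\nabla\bar{\rho}$ (from Lipschitz $\bar{D}$), linearity of $\bar{R}$, and the fact that successive differences vanish so that the lagged arguments $\Psi^{(i-1)}, f^{(i-1)}, \gamma^{(i-1)}$ share the same limit; together with asymptotic feasibility $f^* = \Lambda^*\Psi^{*T}$ this reproduces \eqref{eq:LagStationary_f}--\eqref{eq:LagStationary_Feasible}. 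For part (i), since $\gamma^{(i)}$ is bounded and $\Lambda^{(i)}\Psi^{(i)T} - f^{(i)} \to 0$, the gap $\mathcal{L}_\beta^{(i)} - H(f^{(i)}, \Lambda^{(i)}, \Psi^{(i)})$ vanishes, so the already-convergent $\mathcal{L}_\beta^{(i)}$ forces $H \to H^* = H(f^*, \Lambda^*, \Psi^*)$. The main obstacle I anticipate is the dual-increment control: unlike linearly constrained ADMM, the dual variable here is not tied to a surjective linear map but to $\nabla\bar{\rho}(f^{(i)})$, so the entire analysis hinges on the RED-specific identity $\beta\gamma^{(i)} = \lambda\nabla\bar{\rho}(f^{(i)})$ and on the denoiser's Lipschitz and passivity properties being strong enough both to bound the dual increments (producing the threshold $\beta > 2L$) and to keep $\mathcal{L}_\beta$ bounded below; pinning down the precise threshold and verifying that the bilinear coupling, which appears in the objective as well as the constraint, does not spoil the block strong-convexity estimates is the delicate part.
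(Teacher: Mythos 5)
Your proposal is correct and follows essentially the same route as the paper's proof: the same key identity $\beta\gamma^{(i)}=\lambda\nabla\bar{\rho}(f^{(i)})$ from the $f$-step optimality and dual update, the same dual-increment bound, the same sufficient-decrease argument via block strong convexity (with the threshold $\beta>2L$ consumed exactly at the positivity of the $\|\Delta f\|_F^2$ coefficient), the same descent-lemma-plus-passivity argument for lower-boundedness, the same boundedness-of-iterates step, and the same passage to the limit in the subproblem optimality conditions. Your bookkeeping of the decrease constant, $\tfrac{\beta-L}{2}-\tfrac{L^2}{\beta}>0\iff(\beta-2L)(\beta+L)>0$, matches the statement of the paper's Lemma~\ref{lemma:aug_lang_upper_bnd_psm_fidelity} (and is in fact written more carefully than the paper's in-line computation), so no gap remains.
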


The proof of Theorem \ref{thm:stat_pt_conv_thm} is provided in the supplementary material Section 
% \ref{appendix:proof_conv_thm} 
\hyperref[appendix:proof_conv_thm]{D}.
\smallskip

Regarding convergence to the globally optimal solution, more can be said. Problem \eqref{eq:variational_PSM_reg} is non-convex even if $\bar{\rho}$ is convex, because of the product between unknowns $\Lambda$ and $\Psi$. However, similar to {Theorem 7.1 of} \cite{udell2016generalized} {(which  generalizes arguments in \cite{rennie2005fast},\cite{recht2010guaranteed})}, thanks to the inclusion of the Frobenius norms of the factors $\Lambda$ and $\Psi$ in the bilinear form, {when $\bar{\rho}$ is convex (i.e., {when the gradient rule \eqref{eq:red_grad_rule} is assumed to hold}),}
the global minimum $\hat{f}=\hat{\Lambda}\hat{\Psi}^T$ of \eqref{eq:variational_PSM_reg} can be shown to coincide with the global minimum in
\begin{align}
    \hat{f} = \arg\min_{f} \sum_{t=0}^{P-1} &\| R_{\theta_t}f_t - g_t \|_2^2 + \lambda \bar{\rho}(f) + 2\xi \| f \|_* \nonumber \\ 
    &\text{s.t. } \mathrm{rank}(f) \leq K,
    \label{eq:f_nuc_norm_obj}
\end{align}
where the penalty weight $\xi>0$ is identical to the Frobenius norm penalty in our RED-PSM objective \eqref{eq:hard_cnst_RED_obj}.

{Importantly,} Problem \eqref{eq:f_nuc_norm_obj} is non-convex too, because of the rank constraint. {{In other words, it is \emph{not} equivalent to a conventional nuclear norm penalized problem. Hence, although the Frobenius norm penalties on the PSM factors in our problem formulation \eqref{eq:variational_PSM_reg} lead to a connection to nuclear norm, Problem \eqref{eq:variational_PSM_reg} is \emph{not equivalent} to a formulation encouraging a low-rank solution by a nuclear norm penalty.}}

However, returning to the question of determining global optimality of a candidate solution to \eqref{eq:f_nuc_norm_obj}, consider the convex version of Problem \eqref{eq:f_nuc_norm_obj}, without the rank constraint, and denote its solution by $\hat{f}^{\mathrm{Convex}}$. {If it so happens} that $\mathrm{rank}(\hat{f}^{\mathrm{Convex}}) \leq K$, then the low-rank constraint in Problem \eqref{eq:f_nuc_norm_obj} is not active, and $\hat{f}^{\mathrm{Convex}} = \hat{f}$. It then follows
\cite{udell2016generalized} that the optimality conditions of the convex problem {(without the rank constraint)} can be used to assess the global optimality of {{a candidate solution}} $\hat{f}~=~\hat{\Lambda}\hat{\Psi}^T$ produced by the algorithm solving \eqref{eq:variational_PSM_reg}.

\vspace*{-1ex}
\section{Experiments}
\label{sec:experiments}
\subsection{Datasets}
{Three categories of data sets are used in this work.}

    \textit{Walnut Dataset:}
    We use the CT reconstructions of two different (static) walnut objects from the publicly available 3D walnut CT dataset \cite{der2019cone}. We create a dynamic test object by synthetically warping the central axial slice of one of the walnut objects using a sinusoidal piecewise-affine time-varying warp \cite{pwise_affine}. To be precise, the image is divided into a $N \times N$ uniformly spaced rectangular grid, and the following vertical displacement is applied on each row separately to drive the temporally varying warp $  \Delta_{n,t} = -C(t) \sin(3 \pi n/N), \,\, n \in \{0, \ldots, N-1\},$ where $C(t)$ is a linearly increasing function of $t$ and $C(0) = 0$.
   \textit{Static} axial, coronal, and sagittal slices of the other walnut object are used to train the denoiser $D_\phi$.

   \textit{Compressed Object Dataset:} The compressed object data set is obtained from a materials science experiment \cite{personal_comm} with a sequence of nine increasing compression (loading) steps applied to an object, with a full set of radiographic projections collected (using Carl Zeiss Xradia 520 Versa) and reconstructed by the instrument's software at each step. Using this quasi-static data set, a fixed axial slice is extracted from each {of the 9} reconstructions. %
   {These nine extracted slices corresponding to nine time points} are interpolated to $P$ time frames using a recent deep learning-based video interpolation algorithm \cite{reda2022film}. The denoiser {$D_\phi$} for our experiments on this data set was trained using the axial slices of the \textit{static} pre-compression and post-compression versions of the object, which would be available in actual dynamic loading experiments of this type.

In a conference submission \cite{iskender2023dynamic} citing this work, the method has been applied to additional tomographic scenarios of materials compressed under load, further confirming the advantage of the proposed method and algorithm over the compared methods.

We note that all algorithms compared in this paper are agnostic to both the synthetic warp applied on the static walnut slice and to the data-driven interpolation method used for the compressed object. 

Spatio-temporal projection data for each dataset is simulated by a 
parallel-beam projection with $N$=128 detector bins of the dynamic phantoms, a single projection at each of the $P$ time instants. The sequence of projection angles $\{\theta_t\}_{t=0}^{P-1}$ (a free experimental design parameter) was chosen to follow the bit-reversed view angle sampling scheme, which has been shown \cite{iskender2022dynamic} to provide an especially favorable conditioning of the recovery problem.
The simulated measurements are corrupted using AWGN with standard deviation $\sigma$ = $5\cdot10^{-3}$. This noise level leads to the FBP (with Ram-Lak filter) of the full set of \textit{$P$=512 projections at each time instant} having a PSNR of approximately 46 dB. When, in the actual experiments with sequentially sampled data, only $1/P$ of this data is used, the PSNR of the reconstruction may be expected to be lower.

Ground-truth frames for {$P=4$} are shown in Figure \ref{fig:ground_truth_frames}.

{\textit{Cardiac dMRI Dataset:} 
{For a more direct comparison with the setting and data used in dMRI works,} we also test RED-PSM on the ``retrospective" cardiac dMRI data in \cite{yoo2021time}. The data includes 23 distinct time frames for one cardiac cycle. {Details of the data and experiments are} in Section \ref{sec:cardiac_mri_exp}.}
\vspace{-0.25cm}

\begin{figure}[hbtp!]
    \centering
    \includegraphics[width=\linewidth]{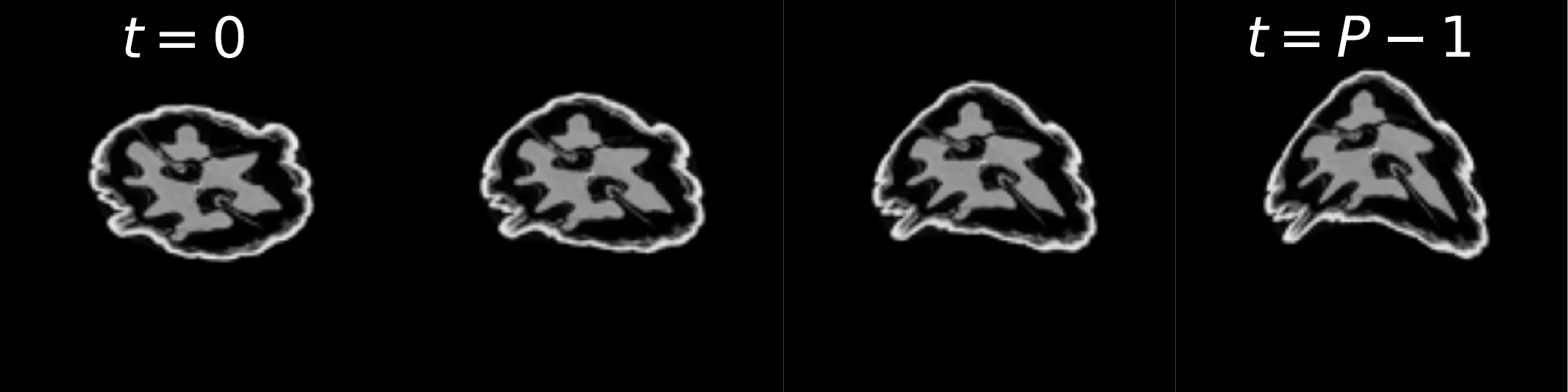} \\
    \vspace{-0.65cm}
    \includegraphics[width=\linewidth, trim={0 0.75cm 0 0}, clip]{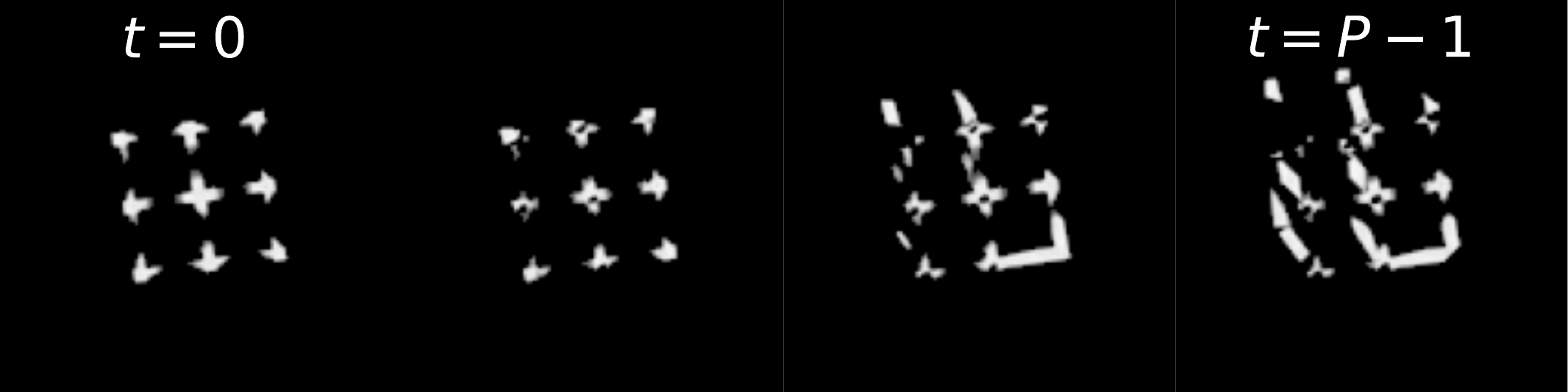} \\
    \setlength{\abovecaptionskip}{2pt}
    \setlength{\belowcaptionskip}{-16pt}
    \caption{\small Ground-truth frames %
    {uniformly sampled in time for $P=4$,} for the time-varying walnut (top) and compressed object (bottom). }
    \label{fig:ground_truth_frames}
\end{figure}

\subsection{{{Comparison Benchmarks}}}
\textbf{PSM-TV:} 
Similar to the proposed approach, this algorithm also uses a PSM to represent the object, but instead of the RED regularizer, the regularization penalizes the discrete 2D total variation of the temporal frames of $f$ at each time instant. To this end, the constraint $f=\Lambda\Psi^T$ is implemented by substitution into the objective in \eqref{eq:hard_cnst_RED_obj}, and the definition of $\rho$ is changed to 
{{$\rho(\cdot) = \text{TV}(\cdot)$},}
and the rest of the objective is kept the same. 
{{We consider spatial (PSM-TV-S) and spatiotemporal (PSM-TV-ST) alternatives of TV. The former computes TV only spatially in a single frame at each $t$ whereas the latter also computes differences between temporally adjacent pixels at $t-1$ and $t+1$.}}
The unconstrained problem is then solved for $\{\hat{\Lambda}, \hat{\Psi}\}$ (using Adam optimizer in Pytorch). Finally, the estimated object is obtained as $\hat{f} = \hat{\Lambda}\hat{\Psi}^T$.

\textbf{TD-DIP \cite{yoo2021time}:}
TD-DIP is a recent method based on the Deep Image Prior (DIP) approach.
\footnote{It would be interesting to include yet another {comparison benchmark} 
(also developed for dMRI) -- the DIP-based PSM approach \cite{ahmed2022dynamic}. However, as an implementation of this method was unavailable, and due to potential issues with replicating its performance and adapting to our CT problem {(specific initialization and framework that use other MRI algorithms)}, we were unable to do so.}
It uses a mapping network $M_\alpha$ and a generative model $N_\beta$ {in cascade} to obtain the estimated object at each time instant $f_t$ from fixed and handcrafted latent representations $\chi_t$. Because TD-DIP was originally proposed for dynamic MRI, we modified the objective minimally for dynamic tomography as
\begin{align}
    \min_{\alpha, \beta} \sum_t \big\| g_t - R_{\theta(t)}((N_\beta \circ M_\alpha) (\chi_t))  \big\|^2.
\end{align}
For the comparisons in this work, {the} mapping network and generator architectures, latent representation dimensionality, optimizer, learning rate, and decay schemes are {are identical to those} in the available implementation \cite{TDDIPGithub2021}. The original work focuses on the beating heart problem and thus proposes a helix-shaped manifold for $\chi_t$ with number of cycles equal to the number of heartbeats during measurement acquisition. Since we do not have a repetition assumption for the motions of the walnut and compressed object, we use a linear manifold as explained in the original paper \cite{yoo2021time}. Thus, {for clarity, in Section \ref{sec:results} the method is sometimes denoted as ``TD-DIP (L)".}
\vspace{-0.25cm}
\subsection{Experimental Settings}
\label{sec:framework}
All methods are run on a workstation with an Intel(R) Xeon(R) Gold 5320 CPU and NVIDIA RTX A6000 GPU. 
In practice, we used a minor variation of Algorithm \ref{alg:admm_psm_red}, where we combined the subproblems for $\Lambda$ and $\Psi$, and minimized with respect to both basis functions simultaneously using gradient descent with Adam \cite{kingma2014adam} optimizer.

\textbf{Denoiser {and} training.} 
{Each convolutional layer in the denoiser network is followed by a ReLU nonlinearity except for the final single-channel output layer. 
We tested both direct and residual DnCNN denoisers, where the former predicts the denoised image and the latter estimates the noise from the input.} We use the denoiser type that performs better for each object, {but} the differences are minor. {Further architectural details for the denoisers in our experiments are in Table \ref{tab:denoiser_arch} in the Supplementary Material. We use three pre-trained denoisers, one for each of the three object types. In each case, the same pre-trained denoiser was used  for all values of $P$.}

The upper limit for noise level used in training the denoiser was set to $\sigma_{\max} = 5 \cdot 10^{-2}$. For the dynamic walnut object, the denoiser $D_\phi$ is trained on the central 200 axial, {200} sagittal, and {200} coronal slices of another {static} walnut CT reconstruction downsampled to size 128$\times$128. For the compressed object, axial slices of pre-compression and post-compression \textit{static} {versions of the object}, containing 462 slices in total, are used to train $D_\phi$. {For the cardiac MRI setting, the denoiser was trained on the \textit{static} MRI training slices of the ACDC dataset \cite{bernard2018deep}.} For all datasets, $D_\phi$ is trained for 500 epochs using the Adam optimizer with a learning rate of $5\cdot10^{-3}$. {As mentioned above, we evaluate both direct and residual DnCNN denoisers.}

\textbf{Temporal Basis.} In {{compressed material and cardiac dMRI data experiments}}, we use the parametrization $\Psi=UZ$ with a fixed basis $U$ that corresponds to a cubic spline interpolator, {{and for the warped walnut we use DCT-II}}, to interpolate the low-dimensional temporal representation $Z$ to $\Psi$.

\textbf{Initialization.} Unless stated otherwise, the spatial and temporal basis functions are initialized using the SVD truncated to {the rank of} the dynamic object estimate produced by a recent projection-domain PSM-based method ``ProSep" \cite{iskender2022dynamic_long}. {If the ProSep estimate has rank smaller than $K$, the remaining basis functions are initialized as 0.} Otherwise, all spatial basis functions are initialized as 0 and the temporal latent representations $z_k$ are initialized randomly as $z_k \sim \mathcal{N}(0, I)$.

\textbf{Tomographic Acquisition Scheme.} All methods mentioned in this paper use the bit-reversed angular sampling scheme, over the range $[0, \pi]$. {For time-sequential acquisition, the bit-reversed scheme was shown \cite{iskender2022dynamic, iskender2022dynamic_long} to provide favorable results via better conditioning of the forward model in comparison to alternatives. In a standard CT scanner, the speed of rotation might have to be significantly increased to implement this scheme, possibly leading to greater motion blur. However, for several scenarios this would not be much of an issue. These include radial acquisition in MRI; a CT scanner with electronic beam deflection \cite{kulkarni_electron_2021}; and settings where the acquisition time is dominated by the time to acquire each view rather than the rotation time, e.g., micro-CT, or imaging of dense materials.}

{To help address the challenge in implementing the bit-reversed scheme in other, more physically constrained settings, the number of distinct view angles $\hat{P}$ can be reduced and these views can be repeated periodically without a performance drop as also shown in Figure \ref{fig:walnut_metrics_PSNR_vs_period}. With a reduced $\hat{P}$, the bit-reversed scheme can be implemented more conveniently, e.g., by multiple source-detector pairs, or by carbon nanotube sources \cite{spronk_evaluation_2021, xu_volumetric_2023}.}

\textbf{Run Times.} For $P=256$ and using the specified computational resources and parameter settings, {to reach the peak PSNR during optimization,} RED-PSM with ProSep initialization requires 50 $<$ iterations $<$ 150 taking about 2 to 6 minutes whereas TD-DIP with batch size $P$ typically requires $>$ 30k steps, taking about 3.5 hours to complete. Hence, RED-PSM provides a speedup over TD-DIP by a factor of 35 to 105. {Depending} on the parameter configuration, the speedup factor may vary. However, the proposed method provides a significant run time reduction in all cases.

{\textbf{Evaluation Metrics.}} Four quantitative metrics were implemented for comparing different method performances: (i) the \textit{peak signal-to-noise ratio} (PSNR) in dB; (ii) the \textit{structural similarity index} (SSIM) \cite{wang2004image}; (iii) the \textit{mean absolute error} (MAE); and (iv) the \textit{high-frequency error norm} (HFEN) \cite{ravishankar2010mr}. The latter is defined as $\text{HFEN}(f, f_{r})~=~\| \text{LoG}(f) - \text{LoG}(f_r) \|_2 \nonumber$ where $\text{LoG}$ is a rotationally symmetric Laplacian of Gaussian filter with a standard deviation of $1.5$ pixels.
\vspace{-0.25cm}

\subsection{Results}
\label{sec:results}

\subsubsection{Reconstruction accuracies for different $P$}
\label{sec:comparison_wrt_P}
% TABLE
    \begin{table*}[hbtp!]
    \small
    \setlength{\tabcolsep}{-2.0pt}
    \renewcommand{\arraystretch}{0.55}
    \centering
    \begin{tabular}{ccccc}
    % \toprule
    \shifttext{-0.3cm}{\raisebox{1.65cm}{\rotatebox[origin=c]{90}{\underline{Walnut}}}} & \includegraphics[width=0.25\linewidth]{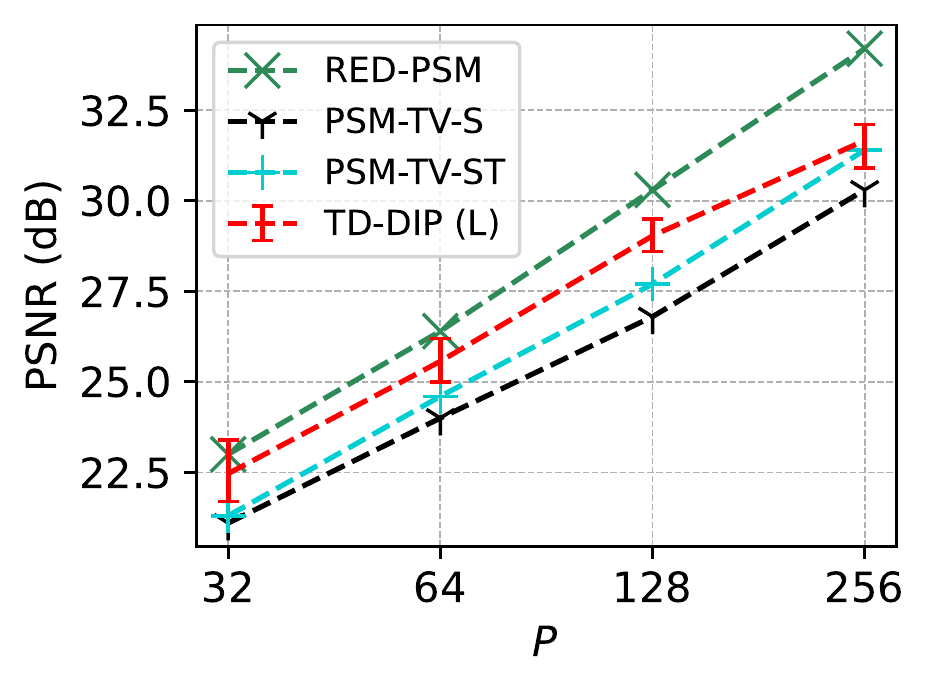} &
    \includegraphics[width=0.25\linewidth]{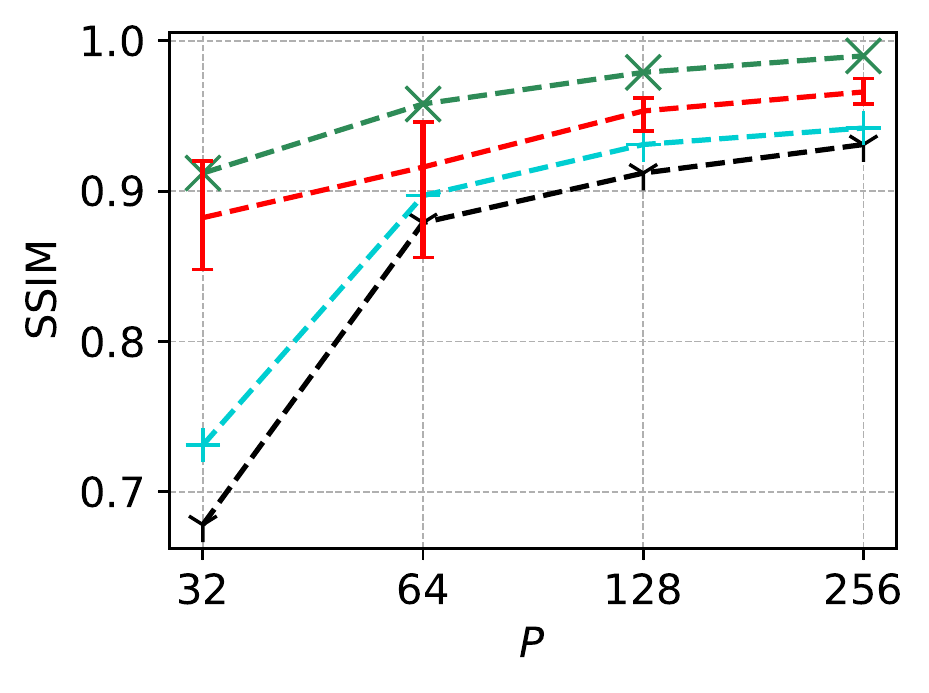} &
    \includegraphics[width=0.25\linewidth]{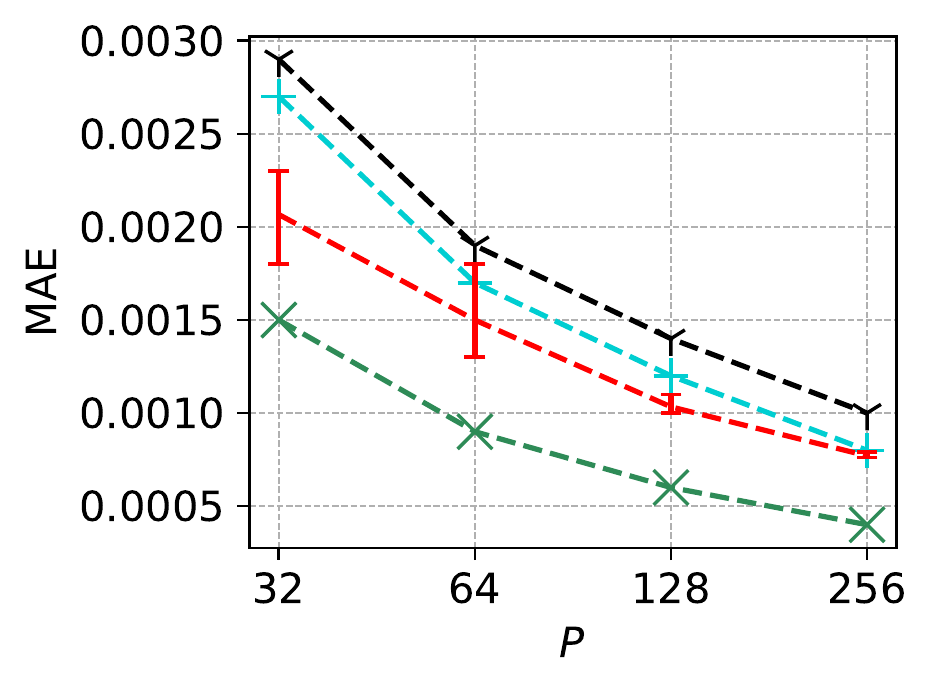} &
    \includegraphics[width=0.25\linewidth]{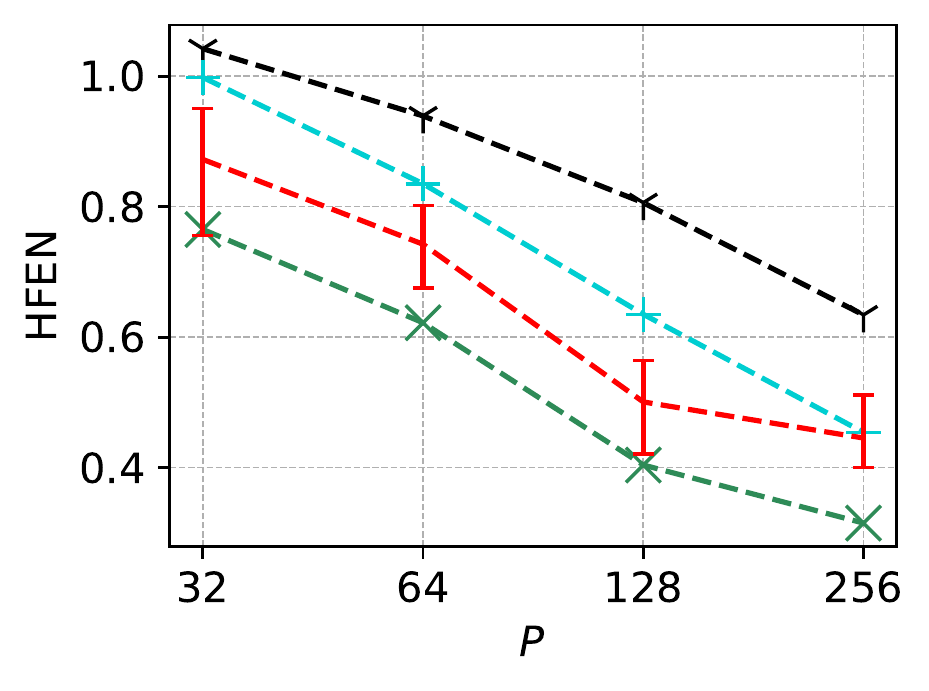} \vspace{-0.48cm} \\
    \shifttext{-0.3cm}{\raisebox{1.65cm}{\rotatebox[origin=c]{90}{\underline{Comp. Material}}}} & \includegraphics[width=0.25\linewidth]{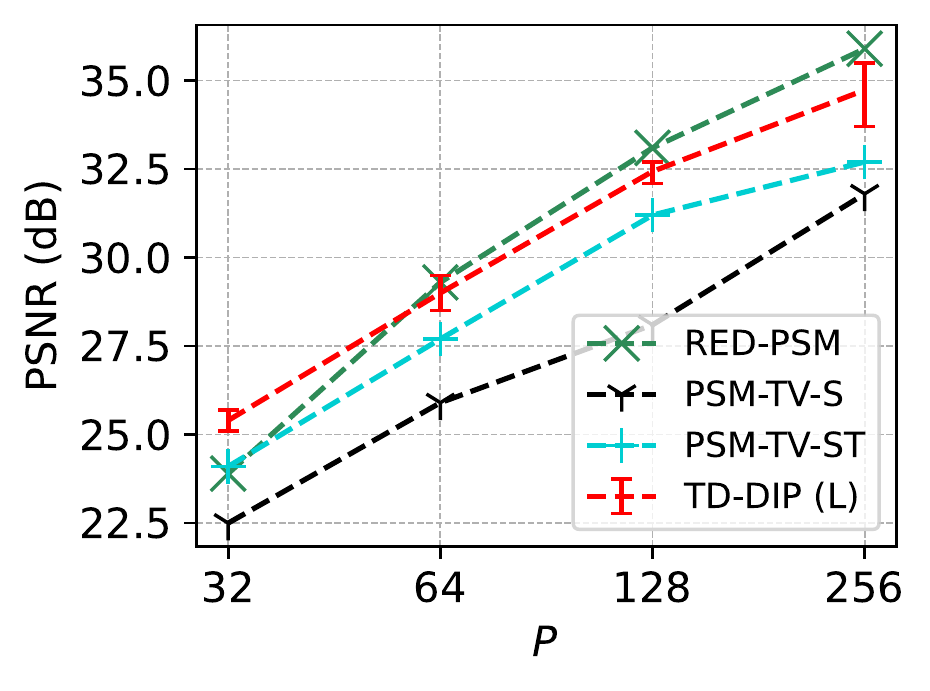} &
    \includegraphics[width=0.25\linewidth]{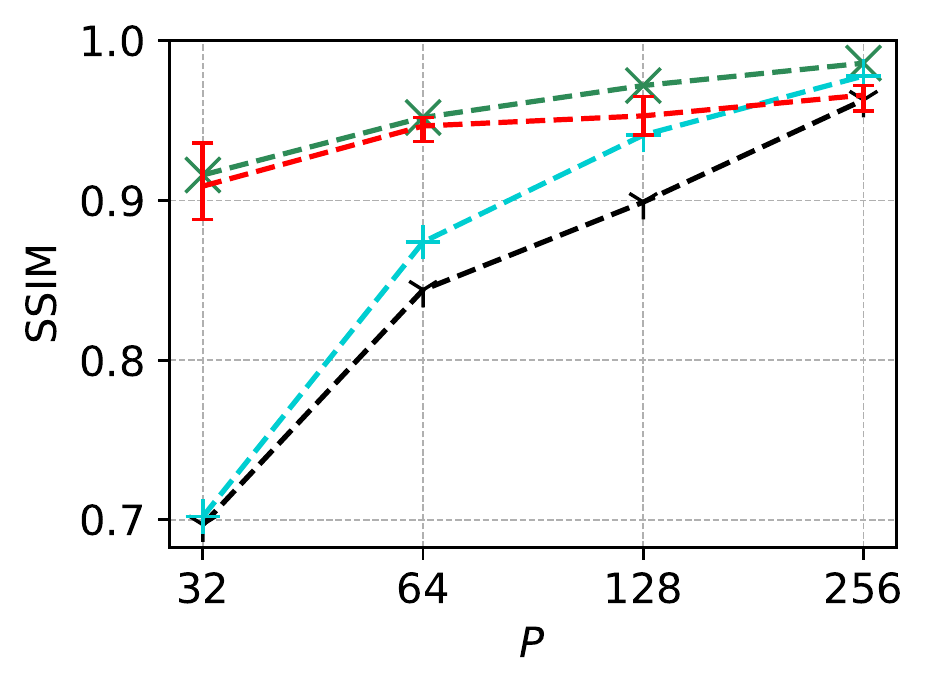} &
    \includegraphics[width=0.25\linewidth]{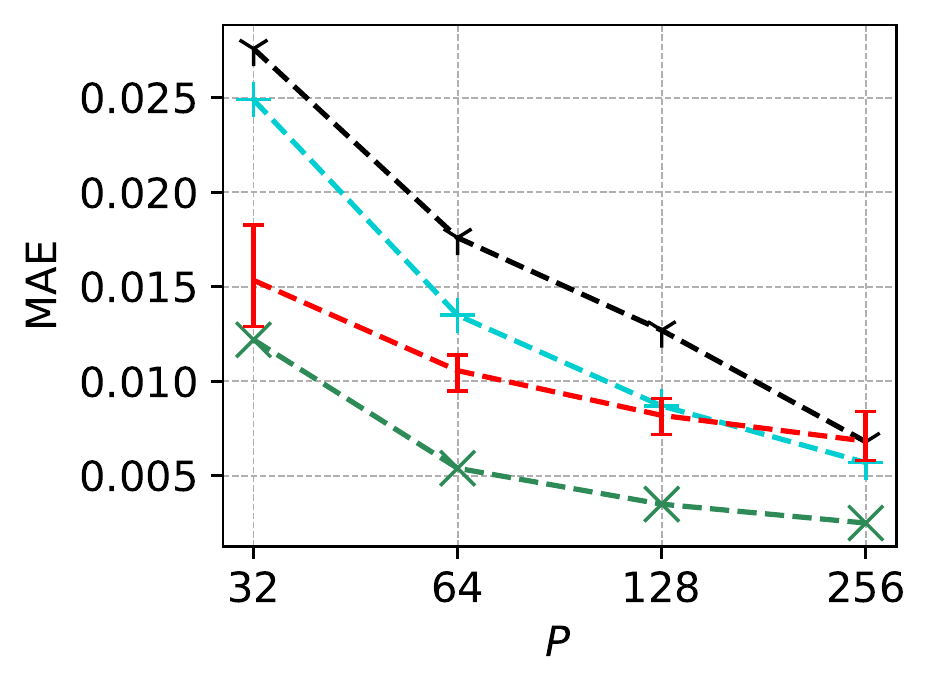} &
    \includegraphics[width=0.25\linewidth]{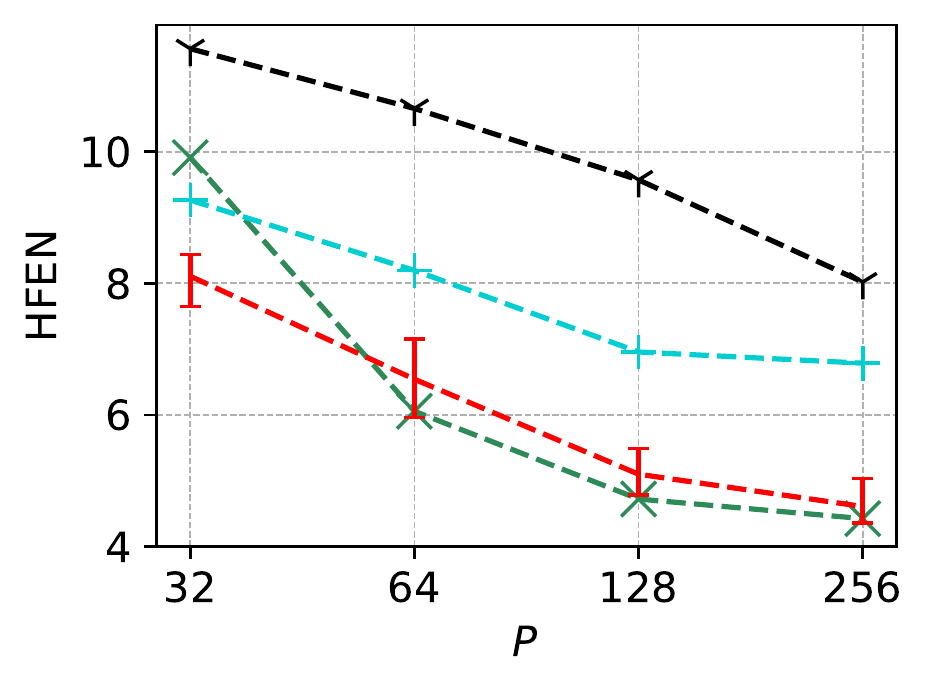}
     % \bottomrule
    \end{tabular}
    \setlength{\abovecaptionskip}{-5pt}
    \setlength{\belowcaptionskip}{-14pt}
    \captionof{figure}{\small{Reconstruction metrics for the time-varying walnut and compressed material vs. $P$ using different methods. For TD-DIP, the metrics reported are assuming a ``stopping oracle" that stops the iterations at the best PSNR reconstruction.
    The minimum and maximum values over three different runs with different random initial values are shown with bars, and the mean values are connected by dashed lines}}
    \label{fig:walnut_metrics_errbar_vs_P}
    \end{table*}
RED-PSM is compared in Figure \ref{fig:walnut_metrics_errbar_vs_P} and Table \ref{tab:avg_rec_acc_comp} with {{PSM-TV-S, PSM-TV-ST}} and TD-DIP (using, for both a-periodic objects, a linear latent representation). Parameter configurations for the experiments are provided in Table \ref{tab:avg_rec_acc_comp_params} in Supplementary Material 
% \ref{supp:exp_config}. 
\hyperref[supp:exp_config]{A}.
While Figure \ref{fig:walnut_metrics_errbar_vs_P} facilitates the assessment and comparison of trends in the metrics for different $P$ {for the various} methods, Table \ref{tab:avg_rec_acc_comp} provides the detail for a more precise quantitative comparison. In fact, since the plotted range of metrics such as PSNR in Figure \ref{fig:walnut_metrics_errbar_vs_P} is very large due to varying $P$, Table \ref{tab:avg_rec_acc_comp} emphasizes important differences between methods for the same $P$.
\bigskip
\bigskip
\bigskip

{\textbf{Remarks:}

1) The scale differences in the MAE and HFEN between the two objects are due to working with un-normalized densities.

2) In all TD-DIP experiments, optimization is stopped early to achieve the best PSNR reconstructions, assuming such a ``stopping oracle" is available. In a more realistic setting, in the absence of this oracle, TD-DIP optimization with continued iterations suffers from overfitting, and produces in these experiments degraded results. {{For instance, for the warped walnut slice with 40k iterations, PSNR, SSIM, and MAE degrade significantly to 19.5 dB, 0.822, and $3.1 \cdot 10^{-3}$ for $P=32$, and 24.1 dB, 0.824, and $2.1 \cdot 10^{-3}$ for $P=64$.} {Thanks to the accurate spatial prior and the convergence properties, we do not encounter such a problem for RED-PSM.}}

3) {Because the performance of TD-DIP varies with initialization, in each experiment we ran it three times, with different random initialization, each time using a ``stopping oracle" to obtain the best PSNR for the given initialization.} Table \ref{tab:avg_rec_acc_comp} reports the 
average of \textit{the best PSNR reconstruction accuracies} for TD-DIP in  these three runs. Figure \ref{fig:walnut_metrics_errbar_vs_P}, {complements this information, by showing} in addition to the average results, also the best and worst of these runs {(still using the stopping oracle to get the best PSNR per run).}

As expected, {for all methods,} the estimates improve with increasing $P$. In terms of PSNR, the proposed algorithm performs on par with or {usually} better than TD-DIP {(for all but the lowest $P$ on the compressed material object)}, and consistently better than {{PSM-TV-S and PSM-TV-ST}}. 
The PSNR improvement of RED-PSM over TD-DIP enhances with increasing $P$, {reaching 2.4 dB for the time-varying walnut with $P=256$}. Moreover, for the other three metrics, SSIM, MAE, and HFEN, the improvement of RED-PSM {{over other algorithms}} is more significant. {{Specifically, the reduction in MAE reaches up to and exceeds $\%50$ for both objects for $P=256$.}} These observations are valid for both objects, however, RED-PSM provides slightly greater improvement over TD-DIP in the walnut case.

{{We would like to emphasize the significance of these reconstruction quality improvements by comparing them to some representative examples. TD-DIP reports up to 3 dB PSNR and 0.005 SSIM improvement with respect to an \textit{older} dMRI method in a \textit{single} scenario. While providing comparable PSNR improvements (2.4 dB), we are able to provide four times larger SSIM (0.02) improvements \textit{relative to TD-DIP itself}. Other recent dMRI \cite{zou2021dynamic} and dynamic photo-acoustic tomography \cite{lozenski2022memory} methods improve 1.5 dB, and at most 1 dB and often no improvement over their respective benchmarks, again in single imaging scenario for each method.}

{Also, based on the run-time improvement ($\sim$3x) of \cite{zou2021dynamic} with respect to TD-DIP, RED-PSM is still much ($\sim$10x-30x) faster, than \cite{zou2021dynamic}, and more interpretable.}}

    % WALNUT DCT BASIS UPDATES
    \begin{table}[hbtp!]
    \footnotesize
    \setlength{\tabcolsep}{1pt}
    \renewcommand{\arraystretch}{0.45}
    \centering
    \begin{tabular}{@{}cl|cccc|cccc@{}}
    \toprule
     \multicolumn{2}{c}{} & \multicolumn{4}{c}{(a) Walnut} & \multicolumn{4}{c}{(b) Compressed Material} \\
     \midrule
    \multicolumn{1}{c}{$P$} & \multicolumn{1}{c}{Method} 
    &  \multicolumn{1}{c}{\makecell{PSNR\\(dB)}} & \multicolumn{1}{c}{SSIM} & \multicolumn{1}{c}{\makecell{MAE\\(1e-3)}} & \multicolumn{1}{c}{HFEN} 
    & \multicolumn{1}{c}{\makecell{PSNR\\(dB)}} & \multicolumn{1}{c}{SSIM} & \multicolumn{1}{c}{\makecell{MAE\\(1e-2)}} & \multicolumn{1}{c}{HFEN} \\
    \midrule
      32 & PSM-TV-S (R) 
      & 21.1 & 0.678 & 2.9 & 1.04 
      & 22.5 & 0.697 & 2.9 & 11.6 \\
      & {PSM-TV-ST (R)}
      & 21.3 & 0.731 & 2.6 &  0.99
      & 24.1 & 0.702 & 2.5 & 9.3 \\
       & TD-DIP (L) 
      & 22.5 & 0.882 & 2.1 & 0.87 
      & \textbf{25.4} & 0.909 & 1.5 & \textbf{8.1} \\
       & RED-PSM (Pr) 
      & \textbf{22.8} & \textbf{0.911} & \textbf{1.5} & \textbf{0.78}
      & 23.9 & \textbf{0.916} & \textbf{1.2} & 9.9 \\
     \midrule
     64 & PSM-TV-S (R) 
     & 24.0 & 0.879 & 1.9 & 0.94 
     & 25.6 & 0.845 & 1.8 & 10.7 \\
     & {PSM-TV-ST (R)}
      & 24.6 & 0.897 & 1.7 & 0.83
      & 27.7 & 0.874 & 1.3 & 8.2 \\
       & TD-DIP (L) 
      & 25.6 & 0.916 & 1.5 & 0.74 
      & 29.0 & 0.947 & 1.1 & 6.5 \\
       & RED-PSM (Pr) 
      & \textbf{26.4} & \textbf{0.958} & \textbf{0.9} & \textbf{0.57} 
      & \textbf{29.3} & \textbf{0.952} & \textbf{0.5} & \textbf{6.0} \\
     \midrule
      128 & PSM-TV-S (R) 
      & 26.8 & 0.912 & 1.4 & 0.78 
      & 29.2 & 0.907 & 1.1 & 9.6 \\
      & {PSM-TV-ST (R)}
      & 27.7 & 0.931 & 1.2 & 0.63 
      & 31.2 & 0.942 & 0.9 & 7.0 \\
       & TD-DIP (L) 
      & 29.0 & 0.953 & 1.0 & 0.50 
      & 32.4 & 0.953 & 0.8 & 5.1 \\
       & RED-PSM (Pr) 
      & \textbf{30.3} & \textbf{0.979} & \textbf{0.6} & \textbf{0.40} 
      & \textbf{33.1} & \textbf{0.972} & \textbf{0.4} & \textbf{4.7} \\
     \midrule
     256 & PSM-TV-S (R) 
     & 30.1 & 0.934 & 1.0 & 0.63 
     & 31.9 & 0.963 & 0.7 & 8.0 \\
     & {PSM-TV-ST (R)}
      & 31.4 & 0.942 & 0.8 &  0.45
      & 32.7 & 0.978 & 0.6 & 6.8 \\
      & TD-DIP (L) 
     & 31.7 & 0.966 & 0.8 & 0.45 
     & 34.7 & 0.966 & 0.7 & 4.6 \\
      & RED-PSM (Pr) 
     & \textbf{34.2} & \textbf{0.989}& \textbf{0.4} & \textbf{0.32} 
     & \textbf{35.9} & \textbf{0.986} & \textbf{0.3} & \textbf{4.4} \\
     \bottomrule
    \end{tabular}
    \setlength{\abovecaptionskip}{2pt}
    \setlength{\belowcaptionskip}{-8pt}
    \caption{\small Reconstruction accuracies for for different $P$.
    Random (R) and ProSep (Pr) initialization for the PSM methods. 
    For TD-DIP, the reported accuracies are for the best PSNR using a ``stopping oracle", averaged over three runs with random initial conditions.}
    \label{tab:avg_rec_acc_comp}
    \end{table}

Figure \ref{fig:recon_est} compares the reconstructions for both objects at two different {values of} $t$ for $P=256$. As expected, {PSM-TV-S} performs the worst among the compared methods and provides, for both objects, blurry reconstructions lacking finer details. The TD-DIP reconstructions improve somewhat over {PSM-TV-S and PSM-TV-ST}, but contain visible noise-like artifacts on the piecewise constant regions of the walnut object, which are alleviated by RED-PSM. This is manifested also in the absolute difference figures, with error for TD-DIP distributed throughout the interior regions of the walnut. Also, around the shell of the walnut, RED-PSM is further able to preserve sharper details. For the compressed material, in comparison to TD-DIP, RED-PSM shows reduced absolute error almost uniformly over the object. This difference is more prominent around the highly dynamic regions of the object, emphasizing the advantage of the proposed method. {For a better understanding of RED-PSM results, we also display the reconstructed spatial and temporal basis $\Lambda$ and $\Psi$ for {the time-varying walnut scenario} in Supplementary Material 
% \ref{sec:supp_sample_basis}
\hyperref[sec:supp_sample_basis]{C}.}

\begin{table*}[hbtp!]
    \small
    \setlength{\tabcolsep}{0.45pt}
    \renewcommand{\arraystretch}{0.55}
    \centering
    \begin{tabular}{cccc}
    & \small{Reconstructed slices} & \small{Absolute reconstruction errors} & \vspace{-0.03cm} \\
    % \multirow{1}{*}[-1em]
    {\begin{sideways} $\quad\quad\quad\quad\quad$Walnut\end{sideways}} &
    \includegraphics[width=0.525\linewidth]{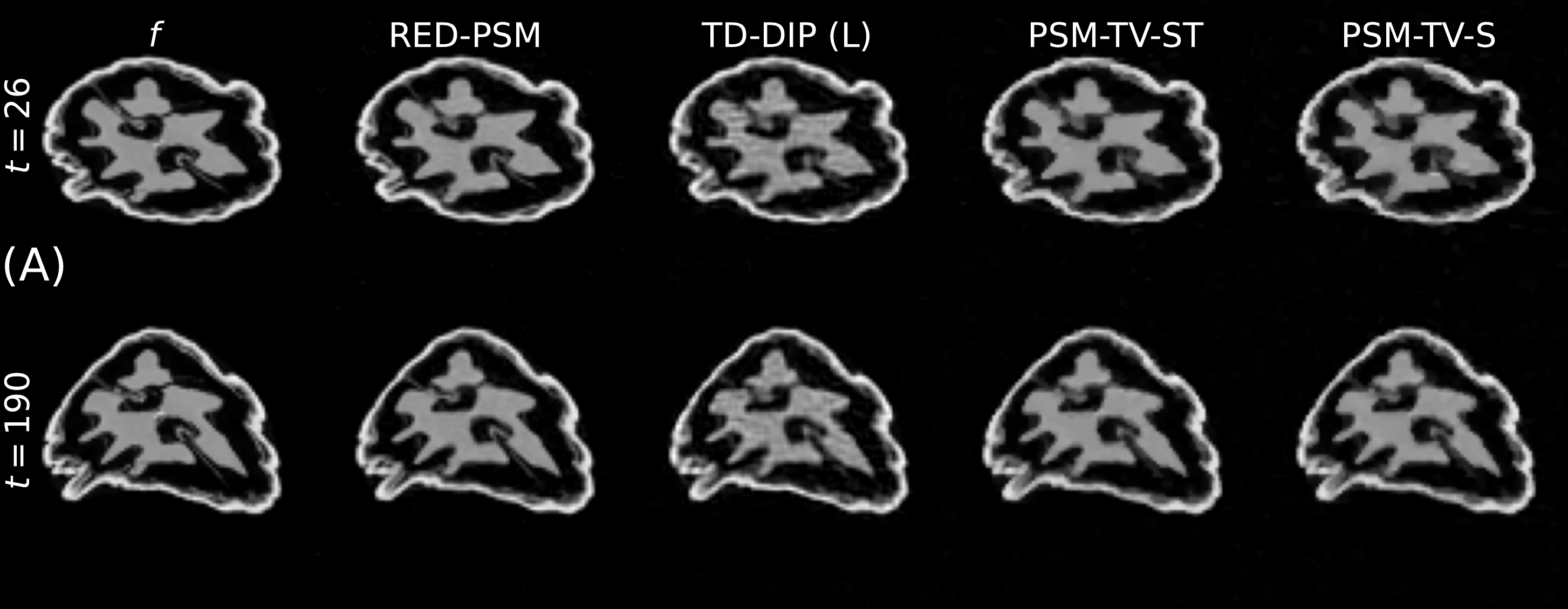} \hspace{-0.15cm} &
    \includegraphics[width=0.42\linewidth]{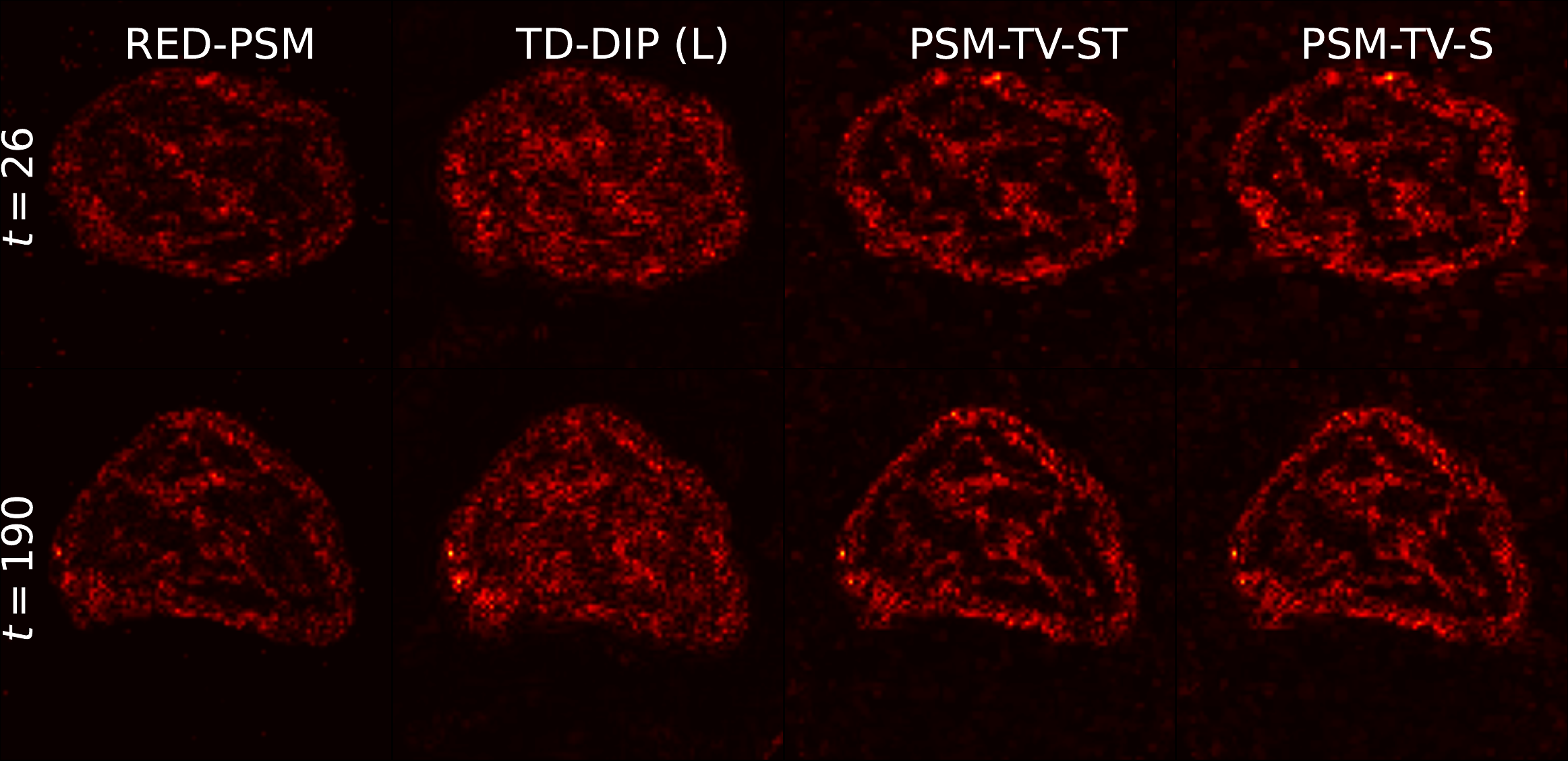} &
    \includegraphics[width=0.045\linewidth]{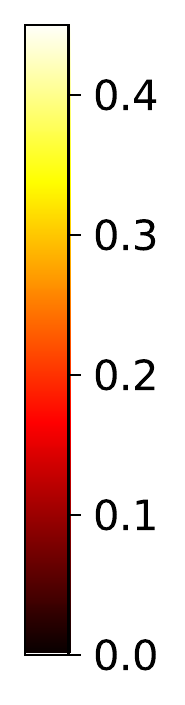}
    \vspace{-0.5cm} \\
    {\begin{sideways} $\quad\quad\quad\quad\quad$Material\end{sideways}} &
    \includegraphics[width=0.525\linewidth]{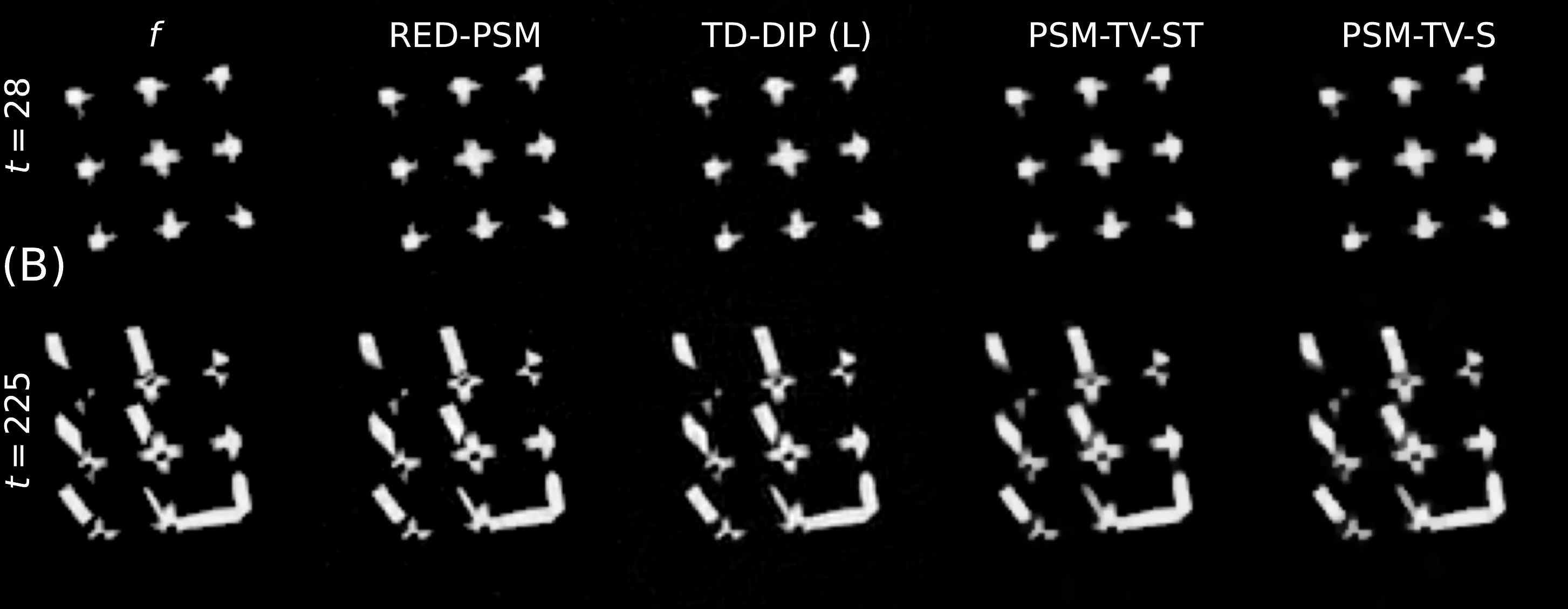} \hspace{-0.15cm} & 
    \includegraphics[width=0.42\linewidth]{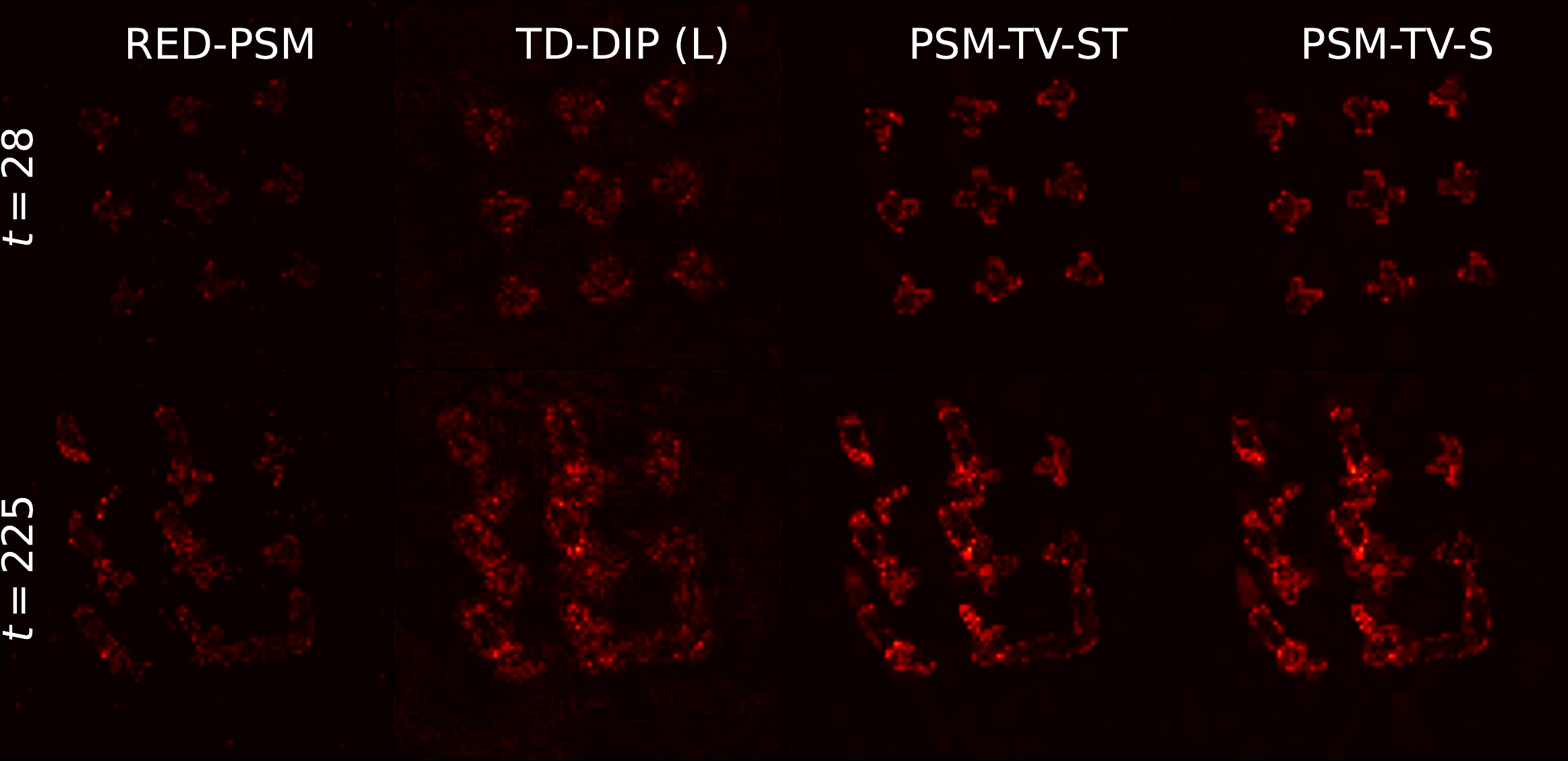} &
    \includegraphics[width=0.045\linewidth, trim=-0cm 0 0.0cm -0.25cm]{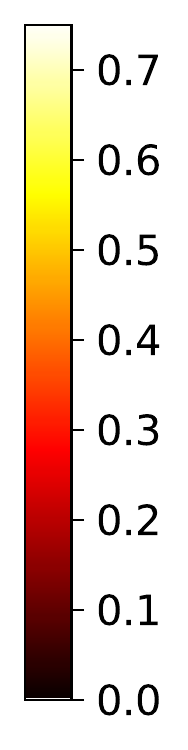}
    \end{tabular}
    \setlength{\abovecaptionskip}{0pt}
    \setlength{\belowcaptionskip}{-10pt}
    \captionof{figure}{\small{Comparison of {reconstructed object frames at two time instants using different methods for $P$=256, and the corresponding normalized absolute reconstruction errors for (A) the time-varying walnut, and (B) compressed object.}}}
    \label{fig:recon_est}
\end{table*}

{Reconstructed $x$-$t$ ``slices" through the dynamic walnut are compared in Figure \ref{fig:walnut_recon_slice}.} The location of the $x$-$t$ slice} is highlighted on the $t=0$ static $x$-$y$ frame by a yellow line. Consistent with the comparison in Figure \ref{fig:recon_est}, RED-PSM {provides} reduced absolute error values throughout the respective {$x$-$t$ slice.} Also, as more apparent on the error figures, TD-DIP leads to higher background errors.

{Finally, a zoomed-in comparison of the time-varying walnut object for another time instant for $P=256$ is provided in Figure {\ref{fig:zoomed_in_comp_manuscript}}. The comparison shows the better performance of RED-PSM at recovering the finer details clearly.}

% TABLE
\begin{table*}[hbtp!]
\small
\setlength{\tabcolsep}{0.05pt}
\renewcommand{\arraystretch}{0.55}
\centering
\begin{tabular}{ccc}
% \toprule
\includegraphics[width=0.16\linewidth]{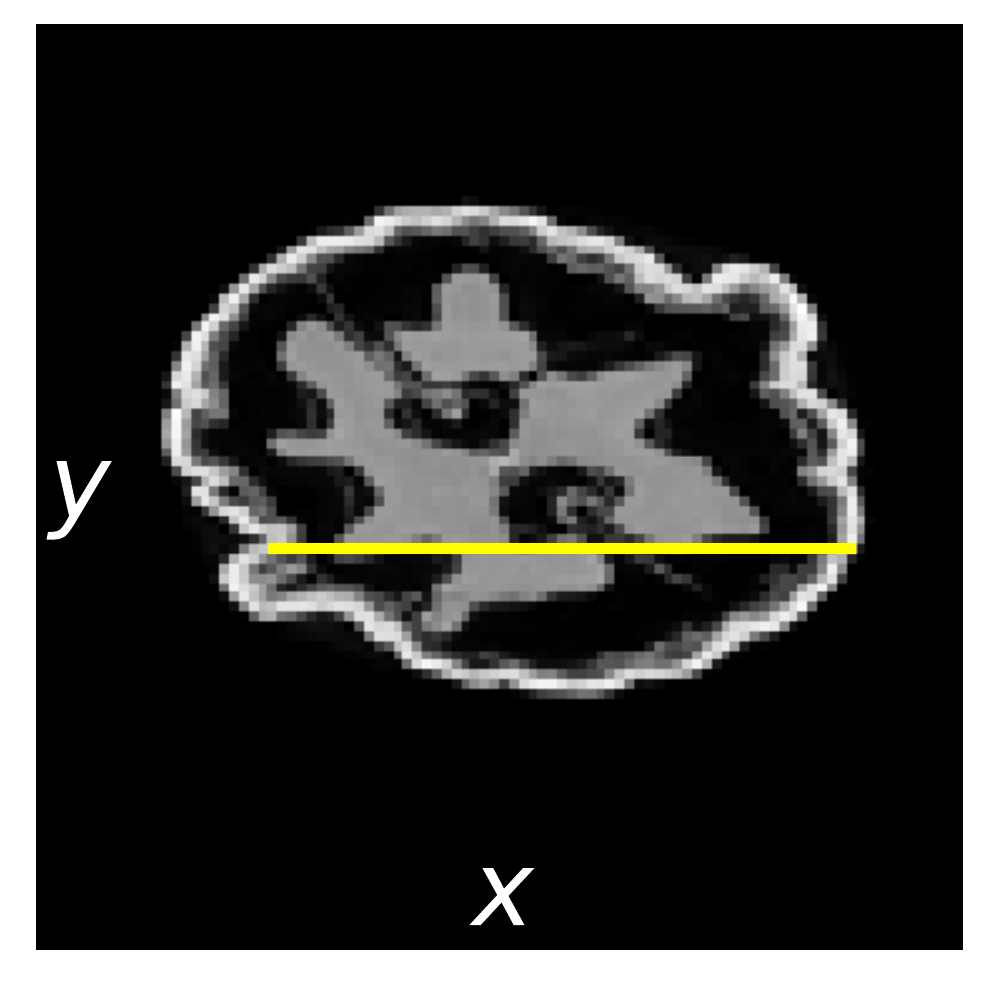} & 
% \hspace{-0.2cm}
\includegraphics[width=0.80\linewidth]{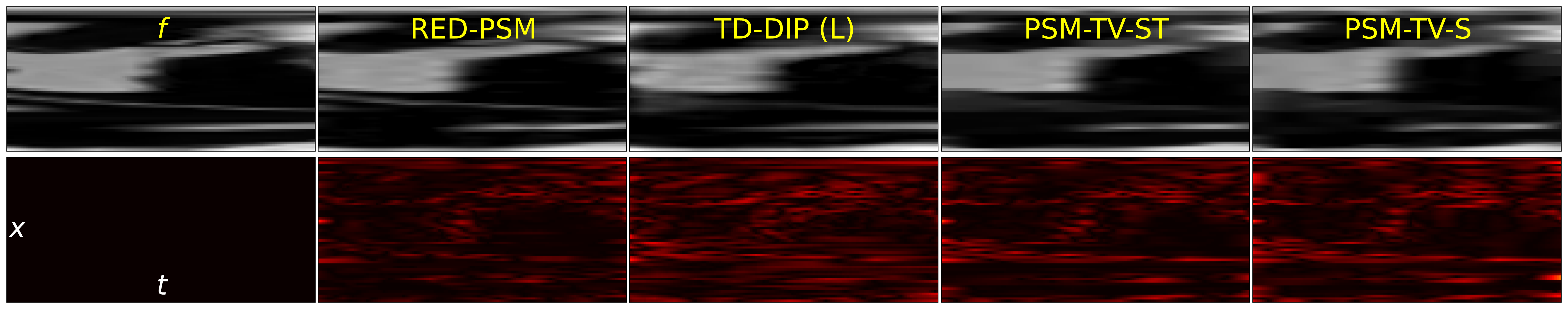}
& \hspace{-0.1cm}
\includegraphics[width=0.052\linewidth]{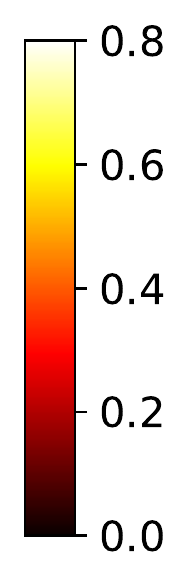}
 % \bottomrule
\end{tabular}
% \vspace{-0.3cm}
\setlength{\abovecaptionskip}{0pt}
\setlength{\belowcaptionskip}{-12pt}
\captionof{figure}{\small {Comparison of reconstructed $x$-$t$ %
{slices (top) and corresponding normalized absolute error (bottom)}
using different methods for $P$=256. The cross-section location is indicated {on the static $t=0$ object with a yellow line.}
The $x$-$y$-$t$ {coordinates are indicated in white text on the static object and bottom left absolute error figure.}}
}
\label{fig:walnut_recon_slice}
\end{table*}

    \begin{table}[hbtp!]
    \small
    \setlength{\tabcolsep}{0.05pt}
    \renewcommand{\arraystretch}{0.55}
    \centering
    \begin{tabular}{c}
    \includegraphics[width=0.95\linewidth]{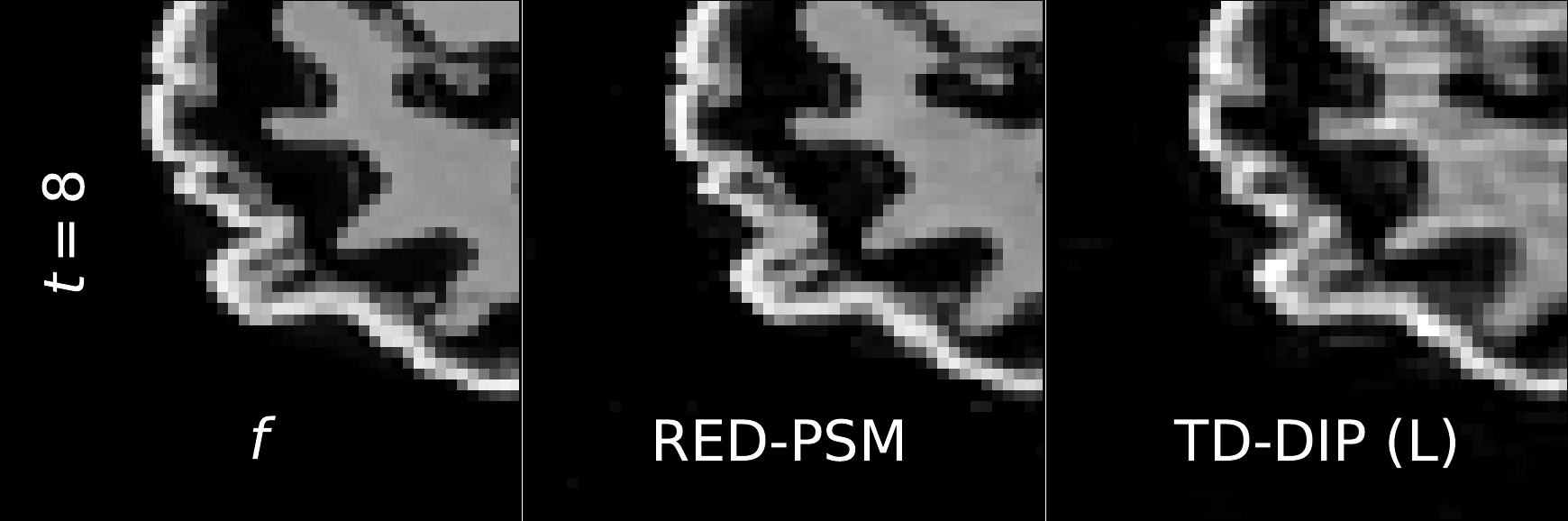} 
    \end{tabular}
    \setlength{\abovecaptionskip}{0pt}
    \setlength{\belowcaptionskip}{-10pt}
    \captionof{figure}{\small {Zoomed-in walnut reconstruction comparison for $P$=256.}}
    \label{fig:zoomed_in_comp_manuscript}
    \end{table}

\subsubsection{PSNR vs. $t$ Comparisons}
\label{sec:psnr_vs_t}
{To complement the cumulative metrics in  Figure \ref{fig:walnut_metrics_errbar_vs_P} and Table \ref{tab:avg_rec_acc_comp} and the ``snapshot" qualitative comparisons in Figure \ref{fig:recon_est}, we study how the reconstructed image frame PSNRs vary over the reconstructed time interval.} {The per frame PSNRs (in dB) of the walnut and compressed object reconstructed with $P=256$ by the different methods are shown in Figure~\ref{fig:walnut_metrics_PSNR_vs_t_supp} as a function of $t$.} {For TD-DIP, the best PSNR obtained using a ``stopping oracle" is reported, with the red shading indicating, for each $t$, the interval between the highest and lowest PSNR in three runs with different random initial conditions.}
{For the warped walnut object,} RED-PSM provides consistently better PSNR {than the best-case TD-DIP for all $t$.}
{For the compressed object, the same is true at about $70\%$ of $t$ points.} {Figure~\ref{fig:walnut_metrics_PSNR_vs_t_supp} also shows transient effects at the beginning and the end for both objects and all methods. In scenarios such as the object compression experiment, in which the initial and final state are static and could be measured using multiple projections, such transients could be eliminated. Similarly, in quasi-periodic scenarios such as cardiac imaging the effect of such transients would be minimal.}

        % TABLE
    \begin{table}[hbtp!]
    \small
    \setlength{\tabcolsep}{-0pt}
    \renewcommand{\arraystretch}{0.55}
    \centering
    \begin{tabular}{cccc}
    {\begin{sideways} $\quad\quad\quad$(a) Walnut\end{sideways}}&\includegraphics[width=0.47\linewidth]{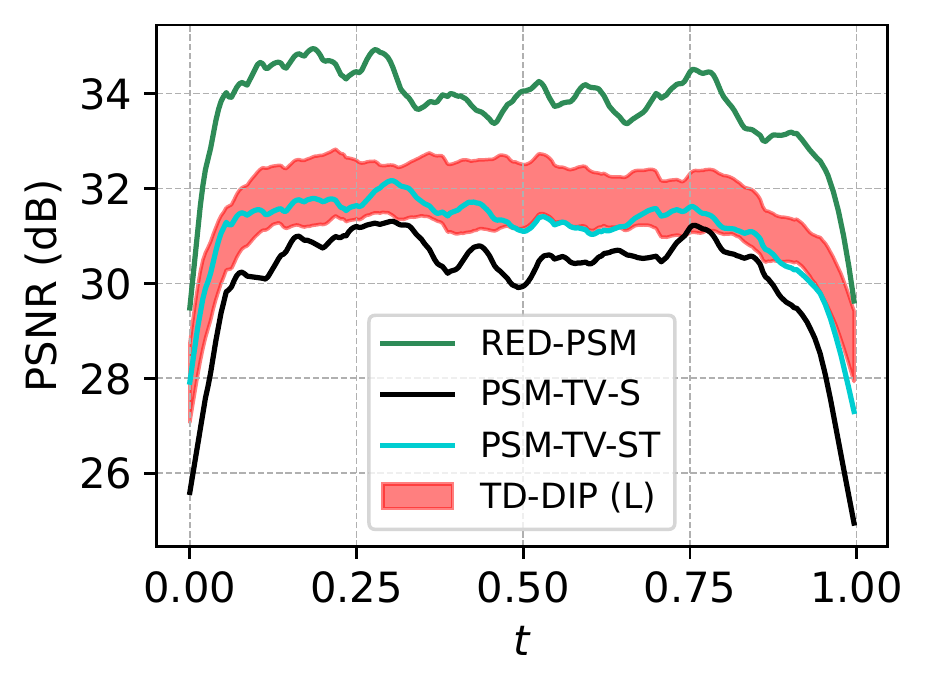} &
    {\begin{sideways} $\quad\quad\quad$(b) Material\end{sideways}} &
    \includegraphics[width=0.47\linewidth]{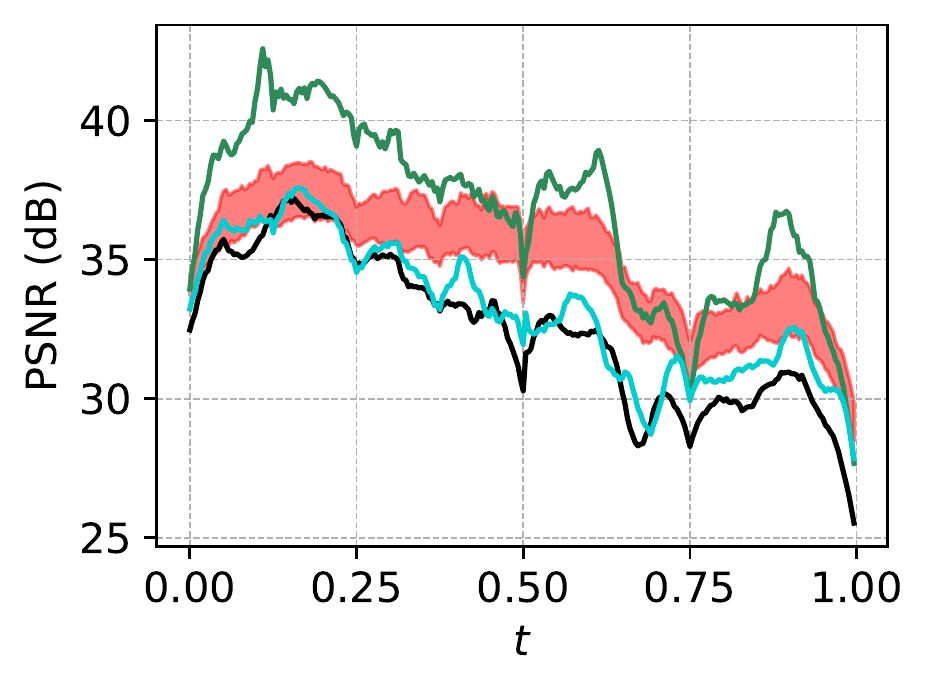}
    \end{tabular}
    \setlength{\abovecaptionskip}{0pt}
    \setlength{\belowcaptionskip}{-12pt}
    \captionof{figure}{\small {Reconstruction PSNR vs. $t$ for the (a) time-varying walnut, and (b) compressed material for $P$=256. {The red shading for TD-DIP indicates, for each $t$, the interval between the best and worst PSNR with the early stopping oracle explained in Section \ref{sec:comparison_wrt_P} in three runs with different random initial conditions.
    }}}\label{fig:walnut_metrics_PSNR_vs_t_supp}
    \end{table}

\subsubsection{Effect of Initialization}
\label{sec:prosep_init_exp}
The initialization of $\Lambda$, $\Psi$, and $f$ plays an important role in the performance and convergence speed of RED-PSM. We observe significant speed-up when {rather than a random initialization,} we initialize the algorithm with ProSep \cite{iskender2022dynamic} estimated reconstruction. Figure \ref{fig:random_vs_prosep_init} shows PSNR vs. iterations comparison for different initialization techniques for the dynamic walnut object with $P=256$. The rest of the parameters were selected identical to those indicated in Supplementary Material Table \ref{tab:avg_rec_acc_comp_params}. This experiment highlights the advantages of initializing with ProSep estimated basis functions: eliminating the need for multiple runs for a best-case result; and speeding up convergence considerably.

{Combined also with the convergent algorithm eliminating the need for an unrealistic stopping oracle and the theoretical analysis, RED-PSM provides improved reliability which is of practical significance.}

% \vspace{-0.5cm}
\begin{figure}[hbtp!]
    \centering
    \includegraphics[width=0.62\linewidth]{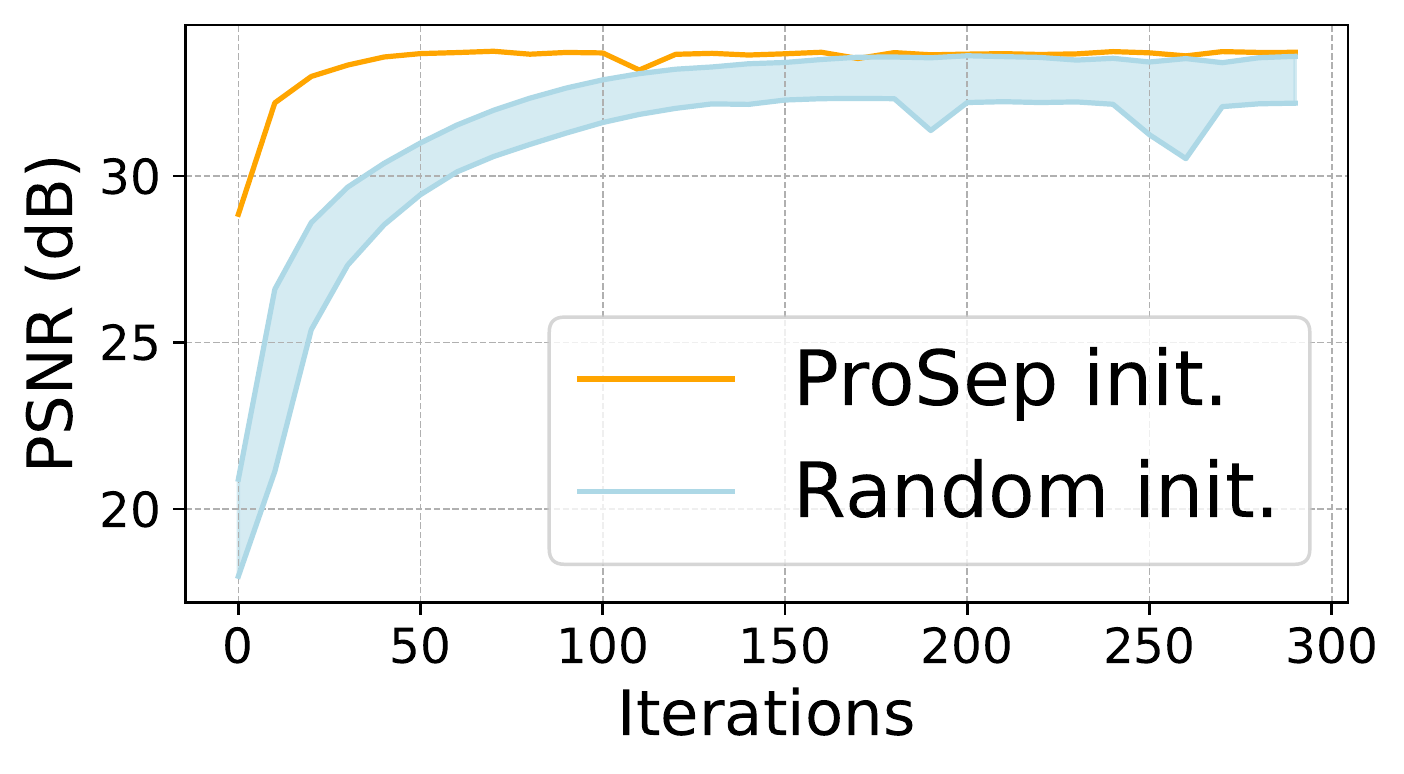}
    \setlength{\abovecaptionskip}{0pt}
    \setlength{\belowcaptionskip}{-10pt}
    \caption{\small 
    {{Advantage of ProSep-based vs. random initialization of RED-PSM} {(see Sec.~\ref{sec:framework})}. 
    {For random initialization,}
    the area {in blue} between the best and the worst PSNR for each iteration {highlights} the varying performances of five different runs with {different} random initializations.}}
\label{fig:random_vs_prosep_init}
\end{figure}
\subsubsection{Cardiac dMRI data experiments}
\label{sec:cardiac_mri_exp}
{In this setting, different to previous experiments, we used 4 $k$-space radial lines (``spokes") per frame at the bit-reversed angles. We used 1.4 and 2.8 cardiac cycles, with 23 $\times$ 4 = 92 spokes/cycle, for a total of $P$=$128$ and $P$=$256$ spokes. The problem is still severely undersampled compared to the experiment in \cite{yoo2021time} where 13 spokes are used per frame for 13 cycles, for a total of $13 \times 23 \times 13 = 3,887$ spokes.

Since the data is periodic, we also tested the helix latent scheme (H) for TD-DIP.

The metrics in Table \ref{tab:acc_mri} and the qualitative comparison in Figure \ref{fig:comp_mri} with zoomed-in reconstructions and absolute error maps, show that RED-PSM performs better than both versions of TD-DIP.

% TABLE
\begin{table}[hbtp!]
\footnotesize
\setlength{\tabcolsep}{2.5pt}
\renewcommand{\arraystretch}{0.45}
\centering
\begin{tabular}{@{}clcccc@{}}
\toprule
\multicolumn{1}{c}{$P$}&\multicolumn{1}{c}{Method}
& \multicolumn{1}{c}{PSNR (dB)} & \multicolumn{1}{c}{SSIM} & \multicolumn{1}{c}{MAE (1e-2)} & \multicolumn{1}{c}{HFEN}\\
\midrule
  128 & TD-DIP (L) & 34.6 & 0.923 & 1.7 & 3.38 \\
  & TD-DIP (H) & 34.6 & 0.928 & 1.7 & 3.32 \\
   & \textbf{RED-PSM} & \textbf{36.6} & \textbf{0.939} & \textbf{1.4} & \textbf{2.73} \\
   \midrule
  256 & TD-DIP (L) & 36.2 & 0.948 & 1.4 & 3.47 \\
  & TD-DIP (H) & 36.4 & 0.947 & 1.4 & 3.44 \\
   & \textbf{RED-PSM} & \textbf{38.4} & \textbf{0.962} & \textbf{1.1} & \textbf{2.99} \\
 \bottomrule
\end{tabular}
\setlength{\abovecaptionskip}{2pt}
\setlength{\belowcaptionskip}{-8pt}
\caption{\small Reconstruction accuracies for RED-PSM and TD-DIP for the retrospective dMRI data \cite{yoo2021time}.}
\label{tab:acc_mri}
\end{table}

\begin{table}[hbtp!]
\small
\setlength{\tabcolsep}{0.05pt}
\renewcommand{\arraystretch}{0.55}
\centering
\begin{tabular}{c}
 \vspace{-0.075cm} \includegraphics[width=0.95\linewidth]{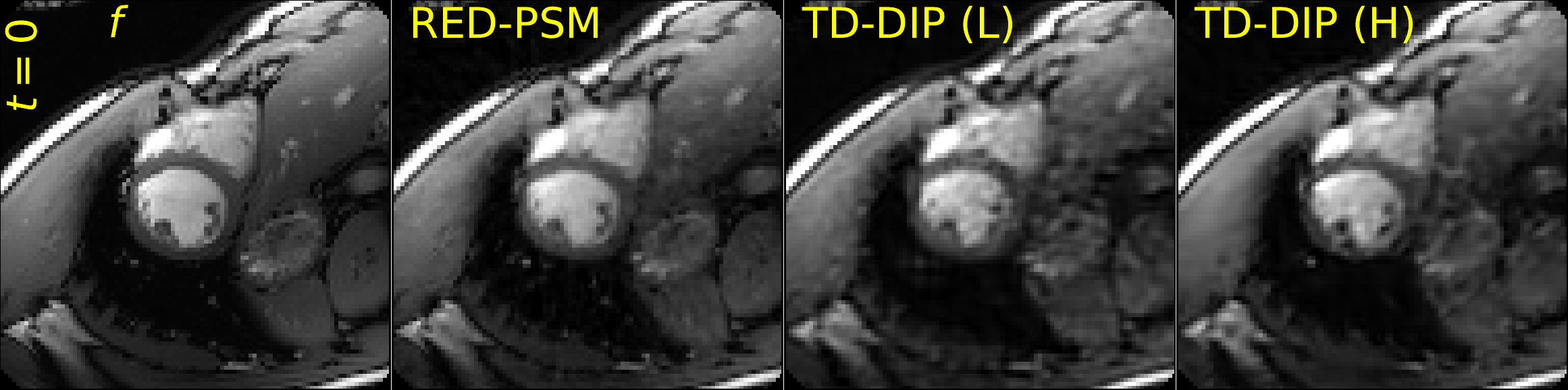} 
\\
 \vspace{-0.075cm}
\includegraphics[width=0.95\linewidth]{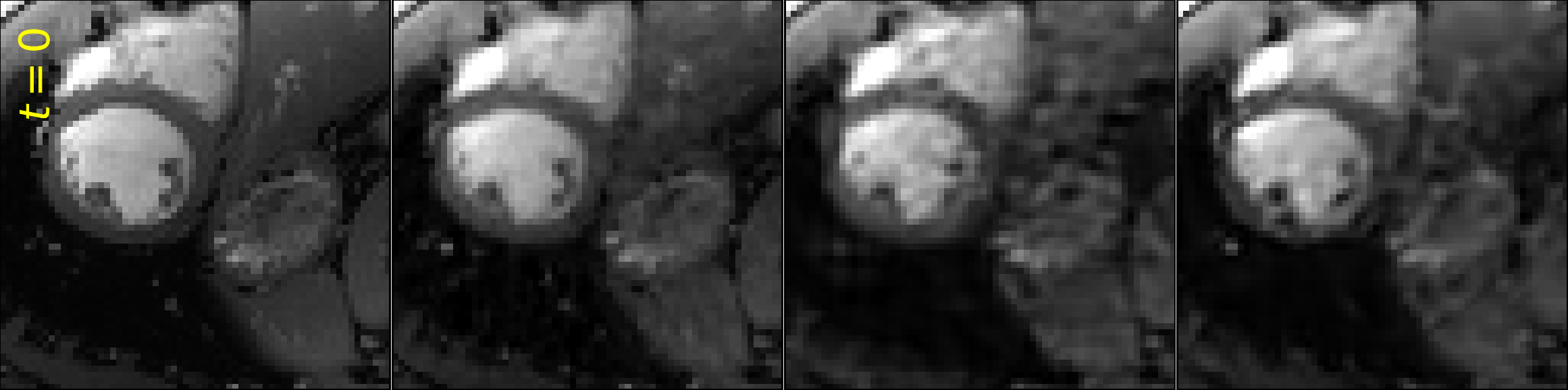}
\\
\includegraphics[width=0.95\linewidth]{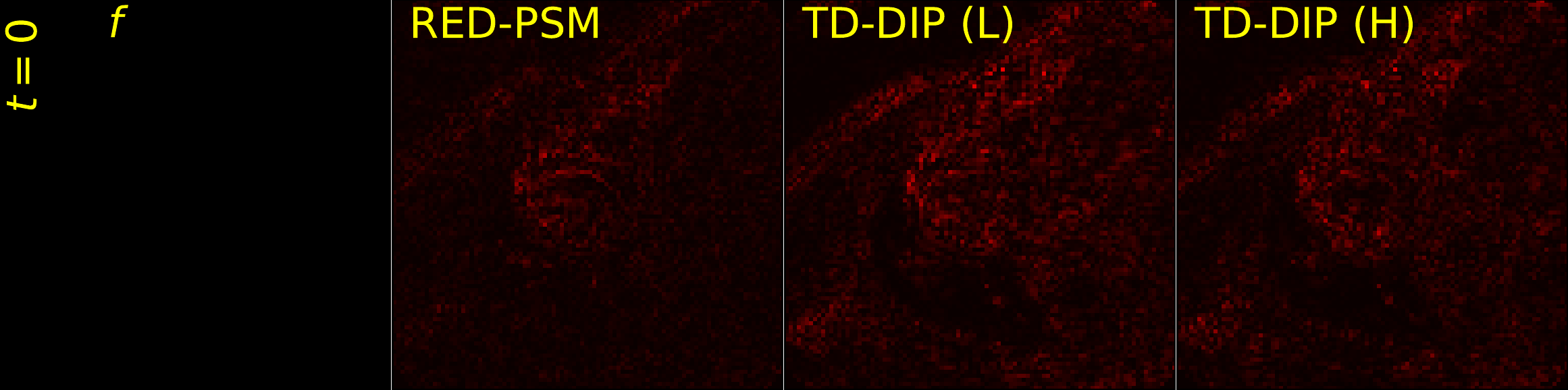}
\end{tabular}
\setlength{\abovecaptionskip}{0pt}
\setlength{\belowcaptionskip}{-14pt}
\captionof{figure}{
\small
Reconstructed frames for $P$=256 for retrospective dMRI data \cite{yoo2021time} with zoomed-in frames (middle row), and absolute reconstruction errors (last row).
}
\label{fig:comp_mri}
\end{table}
}

{In \cite{zou2021dynamic}, the authors compare their method to an \textit{older} version of TD-DIP without the improved latent representation prior scheme for cardiac dMRI and report 1.5 dB PSNR improvement. In RED-PSM, we exceed this improvement on the latest version of TD-DIP in multiple scenarios.}

\subsubsection{Acquisition with smaller number of distinct view angles}
\label{sec:acquisition_period}
{Since obtaining time-sequential projections from different angles in a sufficiently short time period can be physically challenging, we also test the performance of RED-PSM with an acquisition scheme that may be easier to realize physically: keeping the same total number of $P$ projections, but taken at a smaller number $\hat{P}$ (called ``period") of \emph{distinct} view angles, which are also obtained using the bit-reversed angular sampling scheme. The comparison in Figure \ref{fig:walnut_metrics_PSNR_vs_period} shows that it is possible to use up to 1/8-th of the distinct view angles without performance loss for RED-PSM.}

    % TABLE
    \begin{table}[hbtp!]
    \small
    \setlength{\tabcolsep}{-0pt}
    \renewcommand{\arraystretch}{0.55}
    \centering
    \begin{tabular}{cc}
    \includegraphics[width=0.47\linewidth]{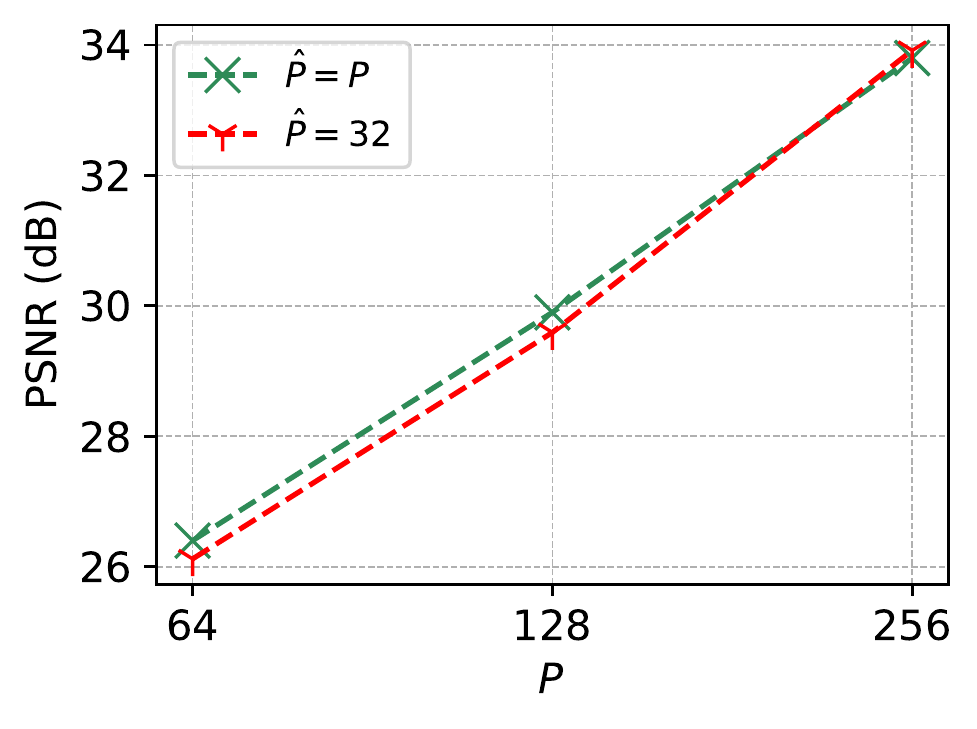} &
    \includegraphics[width=0.47\linewidth]{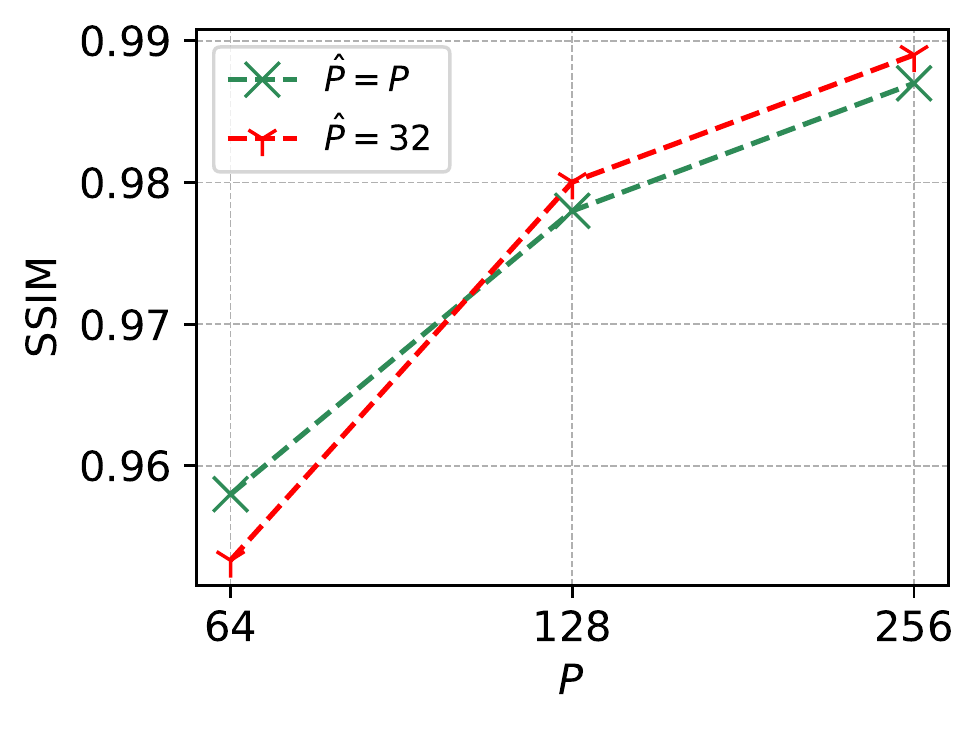}
    \end{tabular}
    \setlength{\abovecaptionskip}{0pt}
    \setlength{\belowcaptionskip}{-6pt}
    \captionof{figure}{
    \small{Reconstruction PSNR and SSIM for the time-varying walnut vs. the number of distinct view angles $\hat{P}$ using RED-PSM with different total number of views $P$.}
    }
    \label{fig:walnut_metrics_PSNR_vs_period}
    \end{table}

\subsubsection{Patch-based RED denoiser}
\label{sec:patch-based_denoiser}
To improve the scalability of the method to higher resolution and/or 3D dynamic object, conveniently, the objective in \eqref{eq:hard_cnst_RED_obj} can be manipulated to operate on the patches of temporal {image frames} of the time-varying object. {This circumvents the need to store the complete image frame at a given time, and also enables the denoiser $D_\phi$ to be both trained and to operate on patches of image frames.} To showcase the potential of the suggested scheme, we replace the full-size $D_\phi$ with a patch-based counterpart in the RED step and compare the performance with the originally proposed method for 2D dynamic objects.

The patch-based denoiser for RED updates is trained using
\begin{equation}
\label{eq:patch_based_denoiser_train}
\min_{\phi} \sum_i \sum_l \| B_lf_i - D_\phi (B_l\tilde{f}_i) \|_F^2 \,\, \text{s.t.} \,\, \tilde{f}_i = f_i + \eta_i, \, \forall i, \nonumber
\end{equation}
where $\eta_i, \sigma_i$ are set as in \eqref{eq:red_denoiser_train}, and $B_l$, with $l \in \{0, \ldots, L-1\}$, is the operator to extract the $l$-th patch of the image. The denoiser $D_\phi$ operates separately on each patch.

To train the patch-based $D_\phi$, uniformly random rotations (multiples of $\frac{\pi}{2}$), and random horizontal/vertical flips, each with $\frac{1}{2}$ probability, were used for data augmentation. {The patch size was chosen 8$\times$8 with a stride of 2.}

Table \ref{tab:avg_rec_acc_red_admm_patchbased} compares the results for the two denoiser types for both objects, using the same denoiser training policy as in Section \ref{sec:framework}, and experimental configurations of Table \ref{tab:avg_rec_acc_red_admm_patchbased_params} in the Supplementary Material Section~\hyperref[supp:exp_config]{A}.
% ~\ref{supp:exp_config}. 
The results show little difference {{between the patch-based denoiser and full frame-based denoiser RED-PSM variants,}} thus verifying the effectiveness of the patch-based version. The analysis of the computational requirements of the patch-based RED-PSM variant in Section~\ref{sec:complexity_analysis} shows its potential for a highly scalable implementation. 

We note that in a divergent beam scenario, the contribution of a patch to a projection will be position-dependent and this would require accurate bookkeeping. {Details for doing so can be found in} tile-based {methods for} fan-beam \cite{xiao2002n} and cone-beam \cite{bresler2003fast} tomography. However, we leave such analysis for future studies.

\begin{table}[hbtp!]
    \footnotesize
    \setlength{\tabcolsep}{2.5pt}
    \renewcommand{\arraystretch}{0.4}
    \centering
    \begin{tabular}{@{}cclcccc@{}}
    \toprule
    \multicolumn{1}{c}{Object} & \multicolumn{1}{c}{$P$} & \multicolumn{1}{c}{Denoiser} & \multicolumn{1}{c}{PSNR(dB)} & \multicolumn{1}{c}{SSIM} & \multicolumn{1}{c}{MAE(1e-3)} & \multicolumn{1}{c}{HFEN}\\
    \midrule
      Walnut & 256 & Full-image & 33.8 & 0.987 & 0.4 & 0.32 \\
      & & Patch-based & 33.7 & 0.989 & 0.4 & 0.35 \\
     \midrule
      Material & 256 & Full-image & 35.9 & 0.986 & 2.5 & 4.43 \\
      & & Patch-based & 35.6 & 0.981 & 2.7 & 4.56 \\
     \bottomrule
    \end{tabular}
    \setlength{\abovecaptionskip}{2pt}
    \setlength{\belowcaptionskip}{-16pt}
    \caption{\small Performance comparison for different denoiser types for RED-PSM. Patch-based denoisers also use DnCNN and have the same configuration and training policy as the full image denoiser.}
    \label{tab:avg_rec_acc_red_admm_patchbased}
    \end{table}
    
 \section{Conclusions}
We proposed RED-PSM, the first PSM-based approach to dynamic imaging using a pre-trained and learned (RED-based) spatial prior. The objective in the proposed variational formulation is optimized using a novel and effective bi-convex ADMM algorithm, which enforces the PSM as a hard constraint. Unlike existing PSM-based techniques, RED-PSM is supported by theoretical analysis, with a convergence guarantee to a stationary point of the objective. The results of the numerical experiments show better reconstruction accuracy and considerably faster run times compared to a recent DIP-based algorithm. A patch-based regularizer version of RED-PSM provides almost equivalent performance with a massive reduction of storage requirements, indicating the potential of our framework for dynamic high-resolution 2D or 3D settings.

Possible directions for future work include the application of RED-PSM to different imaging scenarios other than tomography and MRI, and robust denoiser training for RED framework since the deep denoisers encounter {varying} artifact distributions during optimization. This could also improve the generalizability of the framework to different input types.

\smallskip

{\textbf{Acknowledgements.} We thank Dr. Brian M. Patterson for providing the compressed material experiment data.}

{
\scriptsize
\bibliographystyle{IEEEtran}
\bibliography{preambley, egbib, dMRI_NoDup_egbib, dCT}
}

\clearpage
\section*{Supplementary Material}
\subsection{Experimental configurations}
\label{supp:exp_config}
The PSM-TV and PSM-RED parameter selections for the experiments listed in Table \ref{tab:avg_rec_acc_comp} are provided in Table \ref{tab:avg_rec_acc_comp_params}. Likewise, Table \ref{tab:avg_rec_acc_red_admm_patchbased_params} shows the parameter configurations for the denoiser type comparison experiments for RED in Table \ref{tab:avg_rec_acc_red_admm_patchbased}. 
Finally, the architectural information for the DnCNN denoisers used throughout this work is in Table \ref{tab:denoiser_arch}.

    \begin{table}[hbtp!]
    \footnotesize
    \setlength{\tabcolsep}{1.75pt}
    \renewcommand{\arraystretch}{0.4}
    \centering
    \begin{tabular}{@{}cl|ccccc|ccccc@{}}
    \toprule
     \multicolumn{2}{c}{} & \multicolumn{5}{c}{(a) Walnut} & \multicolumn{5}{c}{(b) Comp. Material} \\
     \midrule
    \multicolumn{1}{c}{$P$} & \multicolumn{1}{c}{Method} & \multicolumn{1}{c}{$K$} & \multicolumn{1}{c}{$d$} & \multicolumn{1}{c}{$\lambda$} & \multicolumn{1}{c}{$\tilde{\lambda}$} & \multicolumn{1}{c}{$\beta$} & \multicolumn{1}{c}{$K$} & \multicolumn{1}{c}{$d$} & \multicolumn{1}{c}{$\lambda$} & \multicolumn{1}{c}{$\tilde{\lambda}$} & \multicolumn{1}{c}{$\beta$}\\
    \midrule
      32 & PSM-TV-S (R) & 3 & 4 & 5e-2 & - & - & 3 & 4 & 10 & - & - \\
      32 & {PSM-TV-ST (R)} & 4 & 5 & 5e-2 & 5e-2 & - & 3 & 4 & 1e1 & 1e2 & - \\
      32 & PSM-RED (P) & 3 & 7 & 1e-4 & - & 1e-4 & 3 & 9 & 5.12e-2 & - & 1.6e-2 \\
     \midrule
     64 & PSM-TV-S (R) & 3 & 4 & 5e-2 & - & - & 5 & 6 & 10 & - & - \\
     64 & {PSM-TV-ST (R)} & 4 & 5 & 5e-2 & 5e-2 & - & 5 & 6 & 1e1 & 1e2 & - \\
      64 & PSM-RED (P) & 4 & 7 & 1e-4 & - & 5e-4 & 5 & 9 & 32e-4 & - & 4e-3 \\
     \midrule
      128 & PSM-TV-S (R) & 5 & 7 & 5e-2 & - & - & 8 & 9 & 5 & - &  - \\
      128 & {PSM-TV-ST (R)} & 6 & 7 & 5e-2 & 1e-1 & - & 8 & 9 & 5 & 5e1 & - \\
      128 & PSM-RED (P) & 6 & 9 & 2e-4 & - & 2e-4 & 8 & 11 & 4e-4 & - & 2e-3  \\
     \midrule
     256 & PSM-TV-S (R) & 10 & 11 & 5e-2 & - & - & 11 & 13 & 10 & - & -  \\
     256 & {PSM-TV-ST (R)} & 10 & 11 & 5e-2 & 5e-2 & - & 11 & 13 & 1e1 & 1e2 & - \\
     256 & PSM-RED (P) & 10 & 11 & 5e-5 & - & 1e-4 & 11 & 13 & 1e-4 & - & 1e-3 \\
     \bottomrule
    \end{tabular}
    \caption{\small The parameter selections for the reconstructions in Table \ref{tab:avg_rec_acc_comp}. The latent penalty weight was selected as $\xi$=$10^{-1}$ for the walnut, and $\xi$=$10^{-3}$ for the compressed object experiments.}
    \label{tab:avg_rec_acc_comp_params}
    \end{table}

    % TABLE
    \begin{table}[hbtp!]
    \footnotesize
    \setlength{\tabcolsep}{2.5pt}
    \renewcommand{\arraystretch}{0.5}
    \centering
    \begin{tabular}{@{}cl|cccc|cccc@{}}
    \toprule
    \multicolumn{1}{c}{}&\multicolumn{1}{c}{}&\multicolumn{4}{c}{(a) Walnut}&\multicolumn{4}{c}{(b) Comp. Material}\\
    \midrule
    \multicolumn{1}{c}{$P$}&\multicolumn{1}{c}{Denoiser}&\multicolumn{1}{c}{$K$}&\multicolumn{1}{c}{$d$}&\multicolumn{1}{c}{$\lambda$}&\multicolumn{1}{c}{$\beta$}&\multicolumn{1}{c}{$K$}&\multicolumn{1}{c}{$d$}&\multicolumn{1}{c}{$\lambda$}&\multicolumn{1}{c}{$\beta$}\\
    \midrule
      256 & Full-image & 10 & 11 & 5e-5 & 1e-4 & 11 & 13 & 1e-4 & 1e-3 \\
      256 & Patch-based & 10 & 11 & 5e-5 & 1e-4 & 11 & 13 & 1e-4 & 1e-3 \\
     \bottomrule
    \end{tabular}
    \caption{\small Parameter configurations for the denoiser type study experiments in Table \ref{tab:avg_rec_acc_red_admm_patchbased}.}
    \label{tab:avg_rec_acc_red_admm_patchbased_params}
    \end{table}

    % TABLE
    \begin{table}[hbtp!]
    \footnotesize
    \setlength{\tabcolsep}{2.5pt}
    \renewcommand{\arraystretch}{0.5}
    \centering
    \begin{tabular}{@{}lccc@{}}
    \toprule
    \multicolumn{1}{c}{Dataset} & 
    \multicolumn{1}{c}{$\#$ of layers} & 
    \multicolumn{1}{c}{$\#$ of channels} & 
    \multicolumn{1}{c}{Denoising}\\
    \midrule
       Walnut & 6 & 64 & Direct \\
       Compressed material & 3 & 32 & Residual \\
       Cardiac dMRI & 6 & 64 & Residual \\
     \bottomrule
    \end{tabular}
    \caption{\small Denoiser DnCNN configurations for different datasets.}
    \label{tab:denoiser_arch}
    \end{table}

\subsection{Time and space complexity analyses}
\label{sec:time_space_comp}
In this section, we analyze the operation count and storage requirements for the proposed algorithm and its patch-based version. 

In this analysis, $M_i$ represents the number of inner iterations for iteratively solved subproblems, assumed common to the different subproblems, and we assume $K \ll N$. Also, $R_{\theta(t)}$ and $R_{\theta(t)}^T$ are implemented as operators, each requiring $O(N^2)$ when applied {to a $N \times N$ image} or a projection of size $N$.

\subsubsection{Storage requirements for the input quantities}
When calculations are performed sequentially in $t$ as in the following sections, the storage requirements for the primal and dual variables are calculated for $f_t$, $\gamma e_t$, $\Psi^T e_t$ and $\Lambda$. The {storage of the} spatial basis functions $\Lambda$ dominates the overall {storage cost at} $O(KN^2)$. 
The space complexity analyses of different subproblems in the following sections consider terms other than these input quantities.

\subsubsection{RED-PSM}
Expanding the $\Lambda$ and $\Psi$ subproblems in Algorithm \ref{alg:admm_psm_red} in $t$, we have
\begin{align}
\label{eq:lambda_RED_obj_over_t}
    \min_{\Lambda} &\sum_t \left( \| R_{\theta(t)}\Lambda \Psi^Te_t - g_t \|_2^2 + {\beta}\| (\Lambda\Psi^T - f + \gamma)e_t \|_2^2 \right) \nonumber \\ 
    &+ \xi \| \Lambda \|_{F}^2,
\end{align}
and
\begin{align}
\label{eq:psi_RED_obj_over_t}
    \min_{\Psi} &\sum_t \| R_{\theta(t)}\Lambda \Psi^Te_t - g_t \|_2^2 + \xi\| \Psi^Te_t \|_2^2 \nonumber \\ 
    &+ {\beta}\| (\Lambda\Psi^T - f + \gamma)e_t \|_2^2.
\end{align}
{We first} analyze the operation counts for gradients with respect to $\Lambda$ and $\Psi^T e_t$, respectively. 

We note that in these gradient computations, {the term $R_{\theta_t}^T g_t$ is pre-computed and stored before the start of the algorithm, and $(\gamma - f)e_t$ and $(R_{\theta(t)}^T g_t + (\gamma - f)e_t)$ are computed and stored at each bilinear ADMM outer iteration} so that they do not contribute to the respective costs {in each inner iteration, i.e, their cost does not scale with $M_i$.}

Starting with the gradient of \eqref{eq:psi_RED_obj_over_t} with respect to $\Psi^Te_t$, we have
\begin{align}
\label{eq:grad_psi_psm_fidelity}
    &2((R_{\theta(t)} \Lambda)^T(R_{\theta(t)} \Lambda) + \Lambda^T\Lambda + \xi I) \Psi^Te_t - 2(R_{\theta(t)} \Lambda)^Tg_t \nonumber \\
    &+ 2\Lambda^T (\gamma - f)e_t.
\end{align}
The overall {operation count} for this term is $O(K N^2 P)$ due to the computation of $R_{\theta(t)}\Lambda^T$ and $\Lambda^T(\gamma - f)e_t$ at each $t$. These terms do not need to be computed for each inner iteration.

When $\Psi = UZ$ is used, the gradient for $Z$ requires the multiplication of \eqref{eq:grad_psi_psm_fidelity} with $U^T e_t$, adding a minor operation count $O(dK)$.

The most efficient implementation in terms of space complexity is the sequential computation of the gradients for each $t$ without increasing the operation count. It is also possible to parallelize the operations along $t$ with the trade-off between the run time and the storage requirements.

Considering sequential computation for each $t$, the additional storage requirement is $O(NK)$ due to $R_{\theta(t)}\Lambda$.

{Next,} we consider the gradient of \eqref{eq:lambda_RED_obj_over_t} with respect to $\Lambda$ 
\begin{align}
\label{eq:grad_lambda_psm_fidelity}
    2\xi\Lambda + &\sum_t 2(R_{\theta(t)}^TR_{\theta(t)} + I) \Lambda (\Psi^T e_t)(e_t^T \Psi) \nonumber \\
    &+ (R_{\theta(t)}^T g_t + (\gamma - f)e_t)(e_t^T \Psi).
\end{align}
The operation count for this gradient term is $O(K N^2 P M_i)$ due to $R_{\theta(t)}^TR_{\theta(t)} \Lambda (\Psi^T e_t)(e_t^T \Psi)$.

Assuming sequential computations in $t$, the additional storage requirement is {determined} by the storage of the terms $R_{\theta(t)}^T R_{\theta(t)} \Lambda$, $R_{\theta(t)}^T R_{\theta(t)} \Lambda(\Psi^T e_t)(e_t^T \Psi)$, and $(R_{\theta(t)}^T g_t  + (\gamma - f)e_t)(e_t^T \Psi)$, each requiring $O(KN^2)$.

The operation count for the efficient implementation of the $f$-step in Line \ref{line:f_step_psm_fidelity_alg2} of Algorithm \ref{alg:admm_psm_red} is dominated by the single forward computation of the denoiser for each $t$, leading to $O(P C_D)$ where $C_D$ is the operation count of the denoiser $D_\phi$ for a single frame. In addition to the input variables, the only storage requirement is due to the term $\Lambda\Psi^Te_t$ which is $O(N^2)$.

\subsubsection{Patch-based RED-PSM}
\label{sec:patch_based_red_psm_complexity}
The patch-based {variant} of RED-PSM uses a reduced spatial size $N_B$ for its inputs where $N_B \ll N$. Compared to the originally proposed RED-PSM formulation in \eqref{eq:hard_cnst_RED_obj}, the operation count is also dependent on the stride $s$ of the patch extraction operators and the depth of the deep denoiser architecture. 
Assuming equal-depth or shallower deep convolutional denoisers for patch-based inputs, the total operation count due to the efficient denoiser step stays the same or decreases. 
Thus, the operation count is again {determined} by the gradient of the data fidelity term with respect to the patch-based spatial basis functions, and the result approximately scales by the increase in the total stored number of pixels as $O(K N^2 P M_i (\frac{N_B}{s})^2)$ where the stride $s \leq N_B$. 

The gradients with respect to the data fidelity term require the computation of projections of multiple spatial patches for a given view angle at time $t$. However, these projections can be computed separately and accumulated. Thus, the space complexity reduces to $O(KN_B^2)$ since we only need to store a single spatial patch of $\Lambda$ at a given time.

{A summary comparison of the two variants of RED-PSM of this section is provided in Section \ref{sec:complexity_analysis} of the manuscript.}

\subsection{Sample reconstructed spatial and temporal basis functions}
\label{sec:supp_sample_basis}
{To help interpret the operation of the proposed RED-PSM algorithm, we display the reconstructed spatial and temporal basis functions $\Lambda$ and $\Psi$ for the time-varying walnut {scenario} with $P=256$ and DCT-II latent temporal basis $U$ in Figure \ref{fig:sample_basis_fcts}. The energy corresponding to each basis pair $E_k = \| \Lambda_k \Psi_k^T \|_F^2$ is shown in Figure \ref{fig:sample_engs}.}

\begin{table*}[hbtp!]
\small
\setlength{\tabcolsep}{0.05pt}
\renewcommand{\arraystretch}{0.55}
\centering
\begin{tabular}{c}
\includegraphics[width=0.99\linewidth]{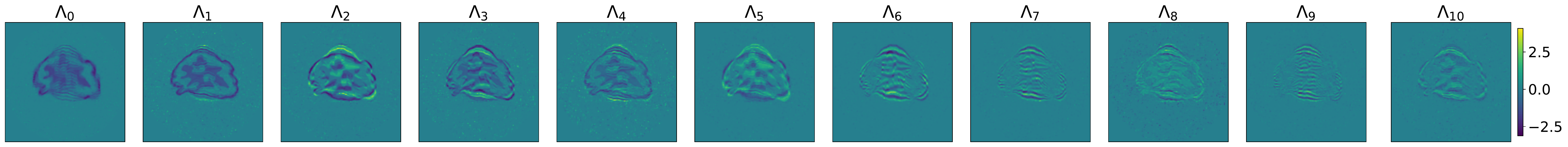} \\ 
\hspace{-0.75cm}
\includegraphics[width=0.99\linewidth]{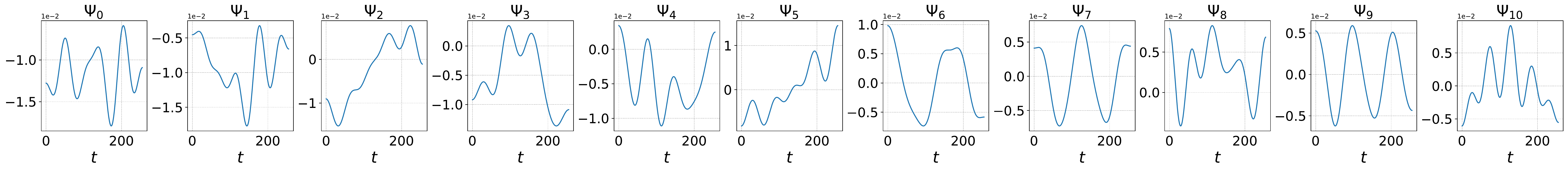}
\end{tabular}
\captionof{figure}{\small {Reconstructed $K=11$ spatial ($\Lambda$) and temporal ($\Psi$) basis functions for the time-varying walnut with $P=256$ and $U$ as DCT-II basis.}}
\label{fig:sample_basis_fcts}
\end{table*}

\begin{figure}
    \centering
    \includegraphics[width=0.75\linewidth]{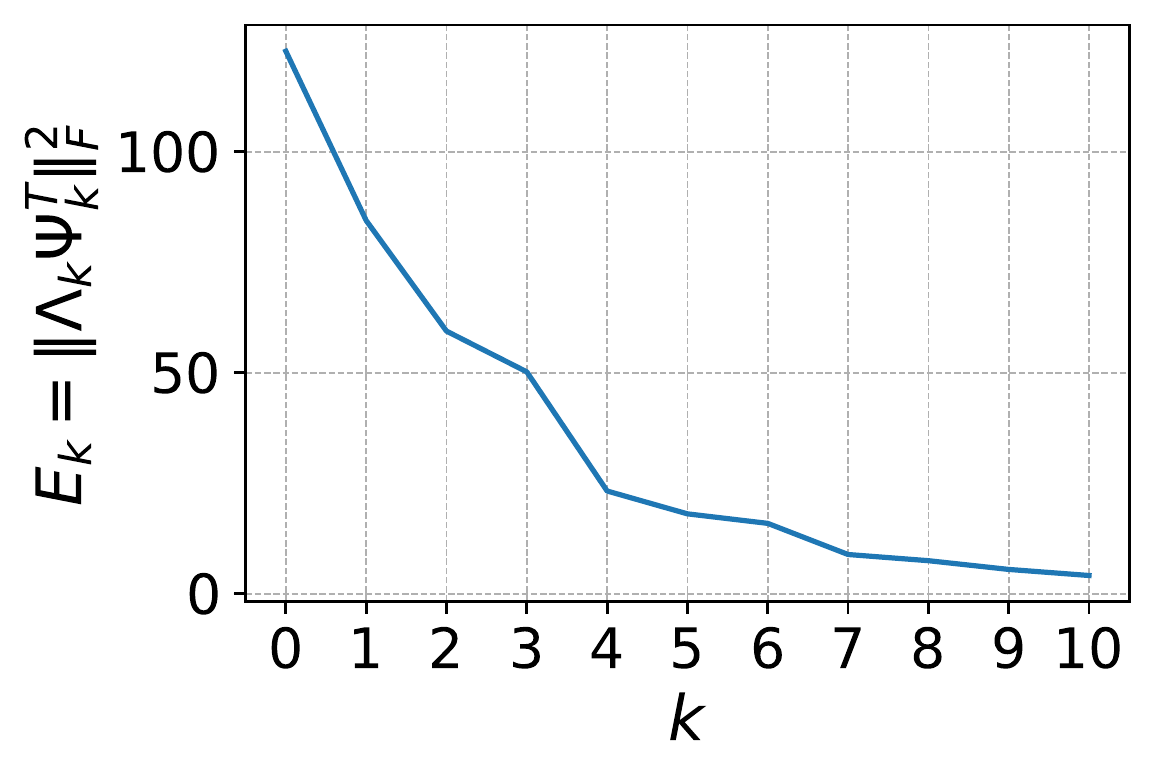}
    \caption{\small{$E_k = \| \Lambda_k \Psi_k^T \|_F^2$ for each spatiotemporal basis pair for the setting described in Section \ref{sec:supp_sample_basis}.}}
    \label{fig:sample_engs}
\end{figure}

%%%%%%%%%%%%%%%%%%%%%%%%%%%%%%%%%%%%%%%%%%%%%%%%%%%%%%%%%%%%%%%%%%%%%%%%%%%%%%%%%%%%%%%%%%%%%%%%%%%%%%%%%%%%%%%%%%%%%%%%

\subsection{{Proof of Theorem \ref{thm:stat_pt_conv_thm}}}
\label{appendix:proof_conv_thm}
In this section, we provide the detailed steps for the proof of Theorem \ref{thm:stat_pt_conv_thm}.

We begin by establishing some simple consequences of the assumed properties of the denoiser. 
First, the objective $H(f, \Lambda, \Psi)$ \eqref{eq:obj_fidelity_psm} is lower bounded by $\underbar{H} \in \mathbb{R}$ because the RED regularizer {$\bar{\rho}(f)$} is non-negative, as follows from the strong passivity assumption on {${D}_\phi$.}
\begin{align}
    \bar{\rho}(f) & = \langle f, (f - \bar{D}_\phi(f)) \rangle = \|f\|_F^2 - \langle f, \bar{D}_\phi(f) \rangle \nonumber \\
    & = \sum_t \|f_t\|_2^2 - f_t^T D_\phi(f_t) %\nonumber 
    \\ 
    & \geq \sum_t \|f_t\|_2^2 - \|f_t\|_2 \|D_\phi(f_t)\|_2 \geq 0.
    \label{eq:red_lower_bnd}
\end{align}

Second, we have the following result.
\begin{lemma}
\label{lemma:Lipschitz_Gradient_D}
If the denoiser ${D}_\phi$ 
satisfies the gradient rule in \eqref{eq:red_grad_rule} and
$L_D$ Lipschitz continuous, then the regularizer $\bar{\rho}(f)$ is gradient Lipschitz continuous with gradient Lipschitz constant $L_{\nabla \rho} = 1 + L_D$.
\end{lemma}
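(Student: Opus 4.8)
The plan is to derive the gradient of the aggregated regularizer directly from the gradient rule, and then bound the difference of gradients at two points by the triangle inequality, using the Lipschitz property of the denoiser to control the denoiser term. First I would note that since $\bar{\rho}(f) = \sum_{t} \rho(f_t)$ is separable over the columns indexed by $t$, and since each summand obeys the gradient rule $\nabla \rho(f_t) = f_t - D_\phi(f_t)$ from \eqref{eq:red_grad_rule}, the gradient of the aggregate takes the clean columnwise form
\begin{align}
\label{eq:aggregate_grad_rule}
\nabla \bar{\rho}(f) = f - \bar{D}_\phi(f),
\end{align}
where $\bar{D}_\phi$ applies $D_\phi$ to each column. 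This identity reduces the problem to bounding how fast $f - \bar{D}_\phi(f)$ changes.

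Next I would take any two points $f_1, f_2 \in \mathbb{R}^{N^2 \times P}$ and estimate the Frobenius-norm distance between their gradients. Substituting \eqref{eq:aggregate_grad_rule} and applying the triangle inequality gives
\begin{align}
\| \nabla \bar{\rho}(f_1) - \nabla \bar{\rho}(f_2) \|_F
&= \| (f_1 - f_2) - (\bar{D}_\phi(f_1) - \bar{D}_\phi(f_2)) \|_F \nonumber \\
&\leq \| f_1 - f_2 \|_F + \| \bar{D}_\phi(f_1) - \bar{D}_\phi(f_2) \|_F.
\end{align}
The only nontrivial ingredient is to show that the aggregated denoiser $\bar{D}_\phi$ inherits the Lipschitz constant $L_D$ of the per-frame denoiser $D_\phi$. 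Since $\bar{D}_\phi$ acts columnwise, the squared Frobenius norm splits as a sum over $t$, and applying the assumed $\ell_2$ Lipschitz bound $\| D_\phi(f_{1,t}) - D_\phi(f_{2,t}) \|_2 \leq L_D \| f_{1,t} - f_{2,t} \|_2$ columnwise yields
\begin{align}
\| \bar{D}_\phi(f_1) - \bar{D}_\phi(f_2) \|_F^2
= \sum_t \| D_\phi(f_{1,t}) - D_\phi(f_{2,t}) \|_2^2
\leq L_D^2 \sum_t \| f_{1,t} - f_{2,t} \|_2^2
= L_D^2 \| f_1 - f_2 \|_F^2,
\end{align}
so that $\| \bar{D}_\phi(f_1) - \bar{D}_\phi(f_2) \|_F \leq L_D \| f_1 - f_2 \|_F$.

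Combining the two displays then immediately gives $\| \nabla \bar{\rho}(f_1) - \nabla \bar{\rho}(f_2) \|_F \leq (1 + L_D) \| f_1 - f_2 \|_F$, which is the claimed bound with $L_{\nabla \rho} = 1 + L_D$. I do not anticipate any genuine obstacle here: the argument is a direct consequence of the gradient rule and the triangle inequality. The one point deserving explicit care, rather than hard work, is the passage from the per-column $\ell_2$-Lipschitz assumption on $D_\phi$ to the Frobenius-norm Lipschitz bound on $\bar{D}_\phi$; the columnwise (separable) structure of the denoiser makes this a routine sum-of-squares estimate as shown above.
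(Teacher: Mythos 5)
Your proof is correct and follows essentially the same route as the paper's: apply the gradient rule to write $\nabla\bar{\rho}(f) = f - \bar{D}_\phi(f)$, then use the triangle inequality together with the Lipschitz property of the denoiser to obtain the constant $1+L_D$. The only difference is that you explicitly verify that the columnwise denoiser $\bar{D}_\phi$ inherits the constant $L_D$ in Frobenius norm via the sum-of-squares decomposition over $t$, a step the paper takes for granted; making it explicit is a slight improvement in rigor, not a different argument.
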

\begin{proof}
Let $f_{(1)}$ and $f_{(2)}$ denote two different spatio-temporal objects. 
By the gradient rule of ${\rho}$, \eqref{eq:red_grad_rule}, and the assumed $L_D$ Lipschitz continuity of the denoiser ${D_\phi}$ we have
\begin{align*}
    &\| \nabla \bar{\rho}(f_{(1)}) - \nabla \bar{\rho}(f_{(2)}) \|_F \nonumber \\ &= \| f_{(1)} - f_{(2)} - (\bar{D}_\phi(f_{(1)}) - \bar{D}_\phi(f_{(2)})) \|_F \nonumber \\
    &\leq \| f_{(1)} - f_{(2)} \|_F + \| \bar{D}_\phi(f_{(1)}) - \bar{D}_\phi(f_{(2)}) \|_F \nonumber \\
    &\leq (1 + L_D)\| f_{(1)} - f_{(2)} \|_F.
\end{align*}
\vspace{-0.25cm}
\end{proof}

Third, we establish strong convexity of the objectives in the different subproblems in Algorithm \ref{alg:admm_psm_red}.
Clearly, for any choice of positive constant $\xi$ the objectives $S_\Lambda$ {\eqref{eq:SLambda}} and $S_\Psi$ {\eqref{eq:SPsi}} in the subproblems for $\Lambda$ and $\Psi$ are strongly convex with {moduli $\alpha_\Lambda \geq \xi$ and $\alpha_\Psi \geq \xi$,} respectively.

As we show next, with the assumed $\beta>L$, 
the objective $S_f$ {in \eqref{eq:Sf}} 
is strongly convex too with modulus $\alpha_f \geq \beta-L >0$. 

To prove this, we rewrite $S_f=\omega(f) + \frac{1}{2} (\beta - L) {\|f\|_F^2}$
where 
\begin{align*}
\omega(f) &\triangleq
\lambda \bar{\rho}(f) + \frac{L}{2}\| f\|_F^2- \beta\langle f, \Lambda^{(i)} \Psi^{(i)T} + \gamma^{(i-1)} \rangle \\
&+
\frac{\beta}{2}\|\Lambda^{(i)} \Psi^{(i)T} + \gamma^{(i-1)} \|_F^2
\end{align*}
and show that $\omega(f)$ is a convex function. 
Since $\frac{1}{2} (\beta - L){\|f\|_F^2}$ is convex quadratic for $\beta > L$, it then follows (by one of the alternative definitions of strong convexity) that $S_f$ is a strongly convex function with modulus $\alpha_f \geq \beta - L$.

All that remains, is to show that $\omega(f)$ is a convex function. This follows immediately by recalling Lemma~\ref{lemma:Lipschitz_Gradient_D}, that {$\lambda \bar{\rho}(f)$} is gradient Lipschitz continuous with gradient Lipschitz constant $L = \lambda(1 + L_D)$, and applying Lemma~\ref{lemma:thm21} below to the first two terms of $\omega(f)$.

\begin{lemma}{\normalfont{(Thm. 2.1 of \cite{zlobec2005liu})}}
\label{lemma:thm21}
    If {$\bar{\rho}(f): \mathbb{R}^{n} \rightarrow \mathbb{R}$} is Lipschitz continuously differentiable on a convex set $C$ with some Lipschitz constant $L$, then {$\phi(f) = \bar{\rho}(f) + \frac{1}{2}\beta {\|f\|_F^2}$} is a convex function on $C$ for every $\beta \geq L$.
\end{lemma}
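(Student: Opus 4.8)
The plan is to prove convexity of $\phi$ through the first-order characterization: a $C^1$ function on a convex set is convex if and only if it lies above each of its tangent hyperplanes. Since $\nabla\phi(f) = \nabla\bar{\rho}(f) + \beta f$, the argument reduces to controlling how far $\bar{\rho}$ can dip below its own tangent plane and showing that the added quadratic $\tfrac{\beta}{2}\|f\|_F^2$ more than compensates for this dip once $\beta \geq L$. Note that I would avoid any Hessian-based argument, since $\bar{\rho}$ is assumed merely Lipschitz-continuously differentiable and need not be twice differentiable.

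The key ingredient is the two-sided ``descent-lemma'' bound obtained directly from the Lipschitz gradient hypothesis: for any $x,y \in C$,
$$
\left| \bar{\rho}(y) - \bar{\rho}(x) - \langle \nabla\bar{\rho}(x),\, y-x\rangle \right| \leq \tfrac{L}{2}\|y-x\|_F^2 .
$$
I would establish this via the fundamental theorem of calculus along the segment $x + t(y-x)$, which stays in $C$ by convexity: writing the left-hand quantity as $\int_0^1 \langle \nabla\bar{\rho}(x+t(y-x)) - \nabla\bar{\rho}(x),\, y-x\rangle\, dt$ and bounding the integrand by $L\,t\,\|y-x\|_F^2$ using Cauchy--Schwarz followed by the $L$-Lipschitz continuity of $\nabla\bar{\rho}$, integration in $t$ yields the factor $\tfrac{L}{2}$. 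Only the lower bound is actually needed in the sequel.

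Next I would evaluate the tangent-plane residual of $\phi$ directly. Expanding $\phi(y) - \phi(x) - \langle\nabla\phi(x),\, y-x\rangle$ and using the elementary identity $\tfrac{\beta}{2}\|y\|_F^2 - \tfrac{\beta}{2}\|x\|_F^2 - \beta\langle x,\, y-x\rangle = \tfrac{\beta}{2}\|y-x\|_F^2$, the residual collapses to
$$
\bigl[\,\bar{\rho}(y) - \bar{\rho}(x) - \langle\nabla\bar{\rho}(x),\, y-x\rangle\,\bigr] + \tfrac{\beta}{2}\|y-x\|_F^2 \;\geq\; \tfrac{\beta - L}{2}\|y-x\|_F^2 \;\geq\; 0,
$$
where the first inequality applies the descent-lemma lower bound and the second uses $\beta \geq L$. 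This is exactly the first-order convexity condition for $\phi$ on $C$. The only real obstacle is the descent lemma itself: because $\bar{\rho}$ is only $C^1$, the segment-integral estimate is essential and cannot be replaced by a second-derivative sign argument. An equally clean alternative route proves convexity through monotonicity of $\nabla\phi$: from Cauchy--Schwarz and Lipschitzness one has $\langle\nabla\bar{\rho}(x)-\nabla\bar{\rho}(y),\, x-y\rangle \geq -L\|x-y\|_F^2$, whence $\langle\nabla\phi(x)-\nabla\phi(y),\, x-y\rangle \geq (\beta-L)\|x-y\|_F^2 \geq 0$, and a $C^1$ function whose gradient is a monotone operator on a convex set is convex.
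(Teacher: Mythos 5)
Your proof is correct. Note, however, that the paper does not prove this lemma at all: it is imported as Theorem~2.1 of \cite{zlobec2005liu} and used as a black box to conclude that $\omega(f)$ is convex in the strong-convexity argument for $S_f$. Your argument therefore supplies a self-contained, elementary replacement for that citation, and both of your routes are sound: the descent-lemma route, where the two-sided bound $\left|\bar{\rho}(y)-\bar{\rho}(x)-\langle\nabla\bar{\rho}(x),y-x\rangle\right| \leq \tfrac{L}{2}\|y-x\|_F^2$ is obtained by integrating $\langle \nabla\bar{\rho}(x+t(y-x))-\nabla\bar{\rho}(x), y-x\rangle$ along the segment (which stays in $C$ by convexity) and is combined with the exact identity $\tfrac{\beta}{2}\|y\|_F^2-\tfrac{\beta}{2}\|x\|_F^2-\beta\langle x, y-x\rangle = \tfrac{\beta}{2}\|y-x\|_F^2$; and the gradient-monotonicity route, $\langle\nabla\phi(x)-\nabla\phi(y), x-y\rangle \geq (\beta-L)\|x-y\|_F^2 \geq 0$, followed by the standard fact that a $C^1$ function with monotone gradient on a convex set is convex (restrict to segments). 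You are also right to avoid a Hessian argument: Lipschitz continuous differentiability does not give twice differentiability everywhere, so a second-derivative sign test is unavailable. Beyond making the analysis self-contained, your proof has a small bonus the citation does not advertise: the surplus $\tfrac{\beta-L}{2}\|y-x\|_F^2$ in the tangent-plane residual shows that $\phi$ is in fact $(\beta-L)$-strongly convex when $\beta > L$, which is precisely the modulus $\alpha_f \geq \beta - L$ that the paper asserts for $S_f$ immediately after invoking the lemma.
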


With these preliminaries established, our proof of Theorem~\ref{thm:stat_pt_conv_thm} follows steps similar to \cite{hajinezhad2018alternating}(Sec. 4.2), but for our own Algorithm \ref{alg:admm_psm}:\\
\indent 1. Bounding the size of successive differences of the dual variables by those of the primal ones.
    
    2. Showing that $\mathcal{L}_\beta[f^i, \Lambda^i, \Psi^i; \gamma^i]$, the augmented Lagrangian, is a lower-bounded decreasing function.
     
    3. Combining the first two steps and showing convergence to a stationary solution.

\smallskip

The following result upper bounds the successive differences of the dual variable by those of the primal.

\begin{lemma}
\label{lemma:dual_primal_bnd_psm_fidelity}
Using the update rules in Algorithm \ref{alg:admm_psm}, the following holds:
\begin{align*}
\| \gamma^{(i)} - \gamma^{(i-1)} \|_F
\leq {\frac{L}{\beta}} 
\| f^{(i)} - f^{(i-1)} \|_F, 
\,\, \forall i \geq 1,
\end{align*}
where $L = \lambda (1+L_D)$.
\end{lemma}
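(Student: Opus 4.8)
The plan is to turn the $f$-update into an exact algebraic identity relating the dual variable $\gamma^{(i)}$ to the regularizer gradient $\nabla\bar\rho(f^{(i)})$, and then to apply the gradient-Lipschitz bound of Lemma~\ref{lemma:Lipschitz_Gradient_D}. First I would write the first-order optimality condition for the $f$-subproblem: since $f^{(i)}$ minimizes the differentiable (and, as shown just above, strongly convex) objective $S_f$ in \eqref{eq:Sf}, we have $\nabla_f S_f(f^{(i)}) = 0$, that is,
\[
\lambda \nabla \bar\rho(f^{(i)}) - \beta\big(\Lambda^{(i)}\Psi^{(i)T} - f^{(i)} + \gamma^{(i-1)}\big) = 0.
\]

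Next I would substitute the dual update of Line~\ref{line:dual_step_psm_fidelity_alg1} of Algorithm~\ref{alg:admm_psm}, which gives $\Lambda^{(i)}\Psi^{(i)T} - f^{(i)} = \gamma^{(i)} - \gamma^{(i-1)}$, into the parenthesized term. The two copies of $\gamma^{(i-1)}$ cancel, collapsing the optimality condition to the clean proportionality
\[
\lambda \nabla \bar\rho(f^{(i)}) = \beta\, \gamma^{(i)}, \qquad i \geq 1,
\]
which is precisely the stationarity relation \eqref{eq:LagStationary_f} evaluated at the current iterate. Writing this identity at indices $i$ and $i-1$ and subtracting yields $\gamma^{(i)} - \gamma^{(i-1)} = (\lambda/\beta)\big(\nabla\bar\rho(f^{(i)}) - \nabla\bar\rho(f^{(i-1)})\big)$; taking the Frobenius norm and invoking Lemma~\ref{lemma:Lipschitz_Gradient_D}, whose gradient-Lipschitz constant is $1+L_D$, produces $\|\gamma^{(i)}-\gamma^{(i-1)}\|_F \le \tfrac{\lambda(1+L_D)}{\beta}\,\|f^{(i)}-f^{(i-1)}\|_F$, i.e.\ the claim with $L=\lambda(1+L_D)$.

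The one subtlety worth flagging is the base case: the proportionality identity relies on the $f$-optimality condition holding at \emph{both} endpoints of the difference, so it is immediate for $i\ge 2$, whereas the $i=1$ case needs either an initialization $\gamma^{(0)}$ compatible with $\lambda\nabla\bar\rho(f^{(0)})=\beta\gamma^{(0)}$ or a short separate argument. Apart from this, no genuine difficulty arises; the crux of the proof is simply the observation that the dual update reduces the $f$-subproblem optimality condition to a direct proportionality between $\gamma$ and $\nabla\bar\rho(f)$, after which the gradient-Lipschitz property does the remaining work.
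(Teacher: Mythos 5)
Your proof is correct and follows essentially the same route as the paper: the $f$-step optimality condition combined with the dual update collapses to $\lambda\nabla\bar{\rho}(f^{(i)}) = \beta\gamma^{(i)}$, after which subtracting the identities at consecutive indices and applying the gradient-Lipschitz bound of Lemma~\ref{lemma:Lipschitz_Gradient_D} gives the claim. Your remark about the base case $i=1$ is a genuine subtlety that the paper's own proof glosses over (it, too, needs the identity at index $i-1$, hence at index $0$ when $i=1$), so flagging the need for a compatible initialization of $\gamma^{(0)}$ is a point in your favor rather than a defect.
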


\begin{proof}
Since $f^{(i)}$ is the optimal solution for Line~\ref{line:f_step_psm_fidelity_alg1} of Algorithm \ref{alg:admm_psm}, it should satisfy the optimality condition 
\begin{align}
    &\lambda \nabla \bar{\rho}(f^{(i)}) - \beta (\gamma^{(i-1)} - \Lambda\Psi^{(i)T} + f^{(i)}) = 
    0 \nonumber \\
    \label{eq:dual_grad_d_rln}
    &\lambda \nabla \bar{\rho}(f^{(i)}) - \beta\gamma^{(i)} = 0
\end{align}
where \eqref{eq:dual_grad_d_rln} is due to the dual variable update Step~\ref{line:dual_step_psm_fidelity_alg1} in Algorithm \ref{alg:admm_psm}.
Then, using Lemma~\ref{lemma:Lipschitz_Gradient_D}, we have
\begin{align}
    \| \gamma^{(i)}& - \gamma^{(i-1)} \|_F = \frac{\lambda}{\beta} \| \nabla \bar{\rho}(f^{(i)}) - \nabla \bar{\rho}(f^{(i-1)}) \|_F \nonumber \\
    &\leq \frac{\lambda}{\beta}(1 + L_D)\| f^{(i)} - f^{(i-1)} \|_F = \frac{L}{\beta}\| f^{(i)} - f^{(i-1)} \|_F. \nonumber
\end{align}
\vspace{-0.25cm}
\end{proof}

Next, using Lemma \ref{lemma:dual_primal_bnd_psm_fidelity}, we show that the augmented Lagrangian is decreasing and bounded below.
\begin{lemma}
\label{lemma:aug_lang_upper_bnd_psm_fidelity}
    In Algorithm~\ref{alg:admm_psm}, if $\beta > 2L$,
    then the following two propositions hold true. \\
    1. Successive differences of the augmented Lagrangian function \eqref{eq:aug_lag_1} are bounded above by
    \begin{align}
    \label{eq:aug_lang_upper_bnd_psm_fidelity}
        \mathcal{L}_\beta&[f^{(i)}, \Lambda^{(i)}, \Psi^{(i)}; \gamma^{(i)}] - \mathcal{L}_\beta[f^{(i-1)}, \Lambda^{(i-1)}, \Psi^{(i-1)}; \gamma^{(i-1)}] \nonumber \\ 
        &\leq -C_f \| f^{(i)} - f^{(i-1)} \|_F^2 - C_\Psi \| \Psi^{(i)} - \Psi^{(i-1)} \|_F^2 \nonumber \\&\quad\, - C_\Lambda \| \Lambda^{(i)} - \Lambda^{(i-1)} \|_F^2,
    \end{align}
    where $C_f = \frac{\alpha_f}{2} - \frac{L^2}{\beta}$, $C_\Lambda = \frac{\alpha_\Lambda}{2}$, and $C_\Psi = \frac{\alpha_\Psi}{2}$  are positive constants. That is, the augmented Lagrangian is monotone decreasing.
    
    2. There exists an $\underline{\mathcal{L}_\beta}$ such that
    \begin{align*}
        \mathcal{L}_\beta&[f^{(i)}, \Lambda^{(i)}, \Psi^{(i)}; \gamma^{(i)}] \geq \underline{\mathcal{L}_\beta}.
    \end{align*}
    That is, the augmented Lagrangian is lower bounded.
\end{lemma}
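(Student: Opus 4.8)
The plan is to prove both parts by decomposing the one-step change of the augmented Lagrangian \eqref{eq:aug_lag_1} into the contributions of the four block updates of Algorithm~\ref{alg:admm_psm}, taken in the order $\Lambda \to \Psi \to f \to \gamma$. I would write $\mathcal{L}_\beta[f^{(i)}, \Lambda^{(i)}, \Psi^{(i)}; \gamma^{(i)}] - \mathcal{L}_\beta[f^{(i-1)}, \Lambda^{(i-1)}, \Psi^{(i-1)}; \gamma^{(i-1)}]$ as a telescoping sum in which exactly one block is advanced at a time while the others are held at their current values, so that each summand is the change caused by a single update. The enabling observation is that the restriction of $\mathcal{L}_\beta$ to each primal block differs from the corresponding subproblem objective $S_\Lambda$ \eqref{eq:SLambda}, $S_\Psi$ \eqref{eq:SPsi}, $S_f$ \eqref{eq:Sf} only by an additive constant (the $\gamma$-dependent terms split off cleanly), so each primal change equals the change of the relevant subproblem objective.

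For Part~1, I would invoke the strong convexity established earlier: since $\Lambda^{(i)}, \Psi^{(i)}, f^{(i)}$ are the exact minimizers of $S_\Lambda, S_\Psi, S_f$, which are strongly convex with moduli $\alpha_\Lambda, \alpha_\Psi \geq \xi$ and $\alpha_f \geq \beta - L$, the inequality $g(x^*) \leq g(x) - \tfrac{\alpha}{2}\|x - x^*\|_F^2$ for a strongly convex $g$ with minimizer $x^*$ yields the three descent terms $-\tfrac{\alpha_\Lambda}{2}\|\Delta\Lambda\|_F^2$, $-\tfrac{\alpha_\Psi}{2}\|\Delta\Psi\|_F^2$, $-\tfrac{\alpha_f}{2}\|\Delta f\|_F^2$. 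The $\gamma$-update supplies the only ascent term: as only the inner product in \eqref{eq:aug_lag_1} depends on $\gamma$, its change is $\beta\langle \gamma^{(i)} - \gamma^{(i-1)}, \Lambda^{(i)}\Psi^{(i)T} - f^{(i)} \rangle$, and the dual update identity $\Lambda^{(i)}\Psi^{(i)T} - f^{(i)} = \gamma^{(i)} - \gamma^{(i-1)}$ turns this into $\beta\|\gamma^{(i)} - \gamma^{(i-1)}\|_F^2$. I would then apply Lemma~\ref{lemma:dual_primal_bnd_psm_fidelity} to bound this by $\tfrac{L^2}{\beta}\|\Delta f\|_F^2$. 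Collecting terms gives exactly $C_f = \tfrac{\alpha_f}{2} - \tfrac{L^2}{\beta}$, $C_\Lambda = \tfrac{\alpha_\Lambda}{2}$, $C_\Psi = \tfrac{\alpha_\Psi}{2}$, and positivity of $C_f$ reduces (using $\alpha_f \geq \beta - L$) to $(\beta - 2L)(\beta + L) > 0$, i.e. precisely the hypothesis $\beta > 2L$.

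For Part~2, I would lower-bound $\mathcal{L}_\beta^{(i)}$ directly at the end of iteration $i$. Setting $h^{(i)} \triangleq \Lambda^{(i)}\Psi^{(i)T}$ and substituting the $f$-optimality condition \eqref{eq:dual_grad_d_rln}, $\beta\gamma^{(i)} = \lambda\nabla\bar{\rho}(f^{(i)})$, the combination $\lambda\bar{\rho}(f^{(i)}) + \beta\langle \gamma^{(i)}, h^{(i)} - f^{(i)} \rangle + \tfrac{\beta}{2}\|h^{(i)} - f^{(i)}\|_F^2$ becomes a first-order expansion of $\lambda\bar{\rho}$ about $f^{(i)}$. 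The descent lemma for the $L$-gradient-Lipschitz function $\lambda\bar{\rho}$ (Lemma~\ref{lemma:Lipschitz_Gradient_D}) then bounds it below by $\lambda\bar{\rho}(h^{(i)}) + \tfrac{\beta - L}{2}\|h^{(i)} - f^{(i)}\|_F^2$, which is non-negative since $\beta > 2L > L$. What remains, $\|\bar{R}h^{(i)} - g\|_F^2 + \xi\|\Lambda^{(i)}\|_F^2 + \xi\|\Psi^{(i)}\|_F^2 + \lambda\bar{\rho}(h^{(i)})$, is precisely the objective $H$ evaluated at the feasible point $f = h^{(i)}$, which is bounded below by $\underbar{H}$ via the non-negativity of $\bar{\rho}$ from \eqref{eq:red_lower_bnd}. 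Hence $\underline{\mathcal{L}_\beta} = \underbar{H}$ works.

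The main obstacle is the dual ascent term, which destroys the monotonicity that a pure block-coordinate descent would enjoy; controlling it is exactly what couples $\beta$ to the denoiser's gradient-Lipschitz constant $L = \lambda(1 + L_D)$. Both the positivity of $C_f$ in Part~1 and the non-negativity of the $\tfrac{\beta - L}{2}$ residual in Part~2 rest on the same quantitative condition $\beta > 2L$, so the bookkeeping must carry $L$ faithfully through Lemma~\ref{lemma:dual_primal_bnd_psm_fidelity} and the descent lemma. A secondary point needing care is confirming that restricting $\mathcal{L}_\beta$ to each primal block genuinely reproduces the strongly convex subproblem objective up to an additive constant, so that the exact-minimizer descent inequalities are applicable to the full Lagrangian rather than merely to the subproblems in isolation.
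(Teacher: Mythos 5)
Your proposal is correct and follows essentially the same route as the paper's proof: the same one-block-at-a-time telescoping of $\mathcal{L}_\beta$ through the intermediate points obtained by advancing $\Lambda$, then $\Psi$, then $f$, then $\gamma$, with the strong-convexity/exact-minimizer inequalities for the three primal updates and Lemma~\ref{lemma:dual_primal_bnd_psm_fidelity} to absorb the dual ascent term, followed in Part~2 by the same substitution of the $f$-optimality condition \eqref{eq:dual_grad_d_rln} and the descent lemma to obtain $\mathcal{L}_\beta[W^{(i)}] \geq H(\Lambda^{(i)}\Psi^{(i)T},\Lambda^{(i)},\Psi^{(i)}) \geq \underbar{H}$. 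If anything, your constant bookkeeping is tighter than the paper's text: you correctly record the dual change as $\beta\|\gamma^{(i)}-\gamma^{(i-1)}\|_F^2 \leq \frac{L^2}{\beta}\|f^{(i)}-f^{(i-1)}\|_F^2$ and obtain positivity of $C_f$ from $\beta(\beta-L)-2L^2=(\beta-2L)(\beta+L)>0$, consistent with the stated $C_f=\frac{\alpha_f}{2}-\frac{L^2}{\beta}$, whereas the paper's proof drops some $\beta$ factors in these same steps.
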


\begin{proof}
\textbf{Part 1.}
 
For conciseness define $W^{(i)}=(\Psi^{(i)}, \Lambda^{(i)}, f^{(i)}; \gamma^{(i)})$. Then the successive difference of the augmented Lagrangian can be expressed by adding and subtracting the term $\mathcal{L}_\beta [\Psi^{(i)}, \Lambda^{(i)}, f^{(i)}; \gamma^{(i-1)}]$, 
\begin{align}
\label{eq:aug_lang_split_psm_fid}
    &\mathcal{L}_\beta[W^{(i)}] - \mathcal{L}_\beta[W^{(i-1)}] \nonumber \\
    &\quad= \mathcal{L}_\beta[W^{(i)}] -\mathcal{L}_\beta[\Psi^{(i)}, \Lambda^{(i)}, f^{(i)}; \gamma^{(i-1)}] \nonumber \\ 
    &\quad+ \mathcal{L}_\beta[\Psi^{(i)}, \Lambda^{(i)}, f^{(i)}; \gamma^{(i-1)}] - \mathcal{L}_\beta[W^{(i-1)}].
\end{align}
For the first two terms on the RHS, using Lemma \ref{lemma:dual_primal_bnd_psm_fidelity} 
we have 
\begin{align}
    \mathcal{L}_\beta&[W^{(i)}] -\mathcal{L}_\beta[\Psi^{(i)}, \Lambda^{(i)}, f^{(i)}; \gamma^{(i-1)}] \nonumber \\ 
    &= {\beta}\langle\gamma^{(i)} - \gamma^{(i-1)}, (\Lambda \Psi^T)^{(i)} - f^{(i)}\rangle \nonumber \\
    &= \| \gamma^{(i)} - \gamma^{(i-1)} \|_F^2 \leq {L^2}\|f^{(i)} - f^{(i-1)}\|_F^2. \label{eq:dual_pr_bnd}
\end{align}
where the first equality in \eqref{eq:dual_pr_bnd} follows from the  dual variable update in Step \eqref{line:dual_step_psm_fidelity_alg1} of Algorithm \ref{alg:admm_psm}.
The last two terms on the RHS of \eqref{eq:aug_lang_split_psm_fid} are further split into
\begin{align}
\label{eq:second_split_psm_fidelity}
    &\mathcal{L}_\beta[\Psi^{(i)}, \Lambda^{(i)}, f^{(i)}; \gamma^{(i-1)}] - \mathcal{L}_\beta[W^{(i-1)}] \nonumber \\ &= \mathcal{L}_\beta[\Psi^{(i)}, \Lambda^{(i)}, f^{(i)}; \gamma^{(i-1)}] 
    - \mathcal{L}_\beta[\Psi^{(i)}, \Lambda^{(i)}, f^{(i-1)}; \gamma^{(i-1)}] \nonumber \\
    &\quad + \mathcal{L}_\beta[\Psi^{(i)}, \Lambda^{(i)}, f^{(i-1)}; \gamma^{(i-1)}] \nonumber \\
    &\quad- \mathcal{L}_\beta[\Psi^{(i-1)}, \Lambda^{(i)}, f^{(i-1)}; \gamma^{(i-1)}]\nonumber \\
    &\quad + \mathcal{L}_\beta[\Psi^{(i-1)}, \Lambda^{(i)}, f^{(i-1)}; \gamma^{(i-1)}] - \mathcal{L}_\beta[W^{(i-1)}]
\end{align}
The first two terms on the RHS of \eqref{eq:second_split_psm_fidelity} are upper bounded using the strong convexity of $S_f$
with modulus $\alpha_f$:
\begin{align}
\label{eq:f_step_bnd}
    \mathcal{L}&_\beta[\Psi^{(i)}, \Lambda^{(i)}, f^{(i)}; \gamma^{(i-1)}] - \mathcal{L}_\beta[\Psi^{(i)}, \Lambda^{(i)}, f^{(i-1)}; \gamma^{(i-1)}] \nonumber \\
    &\leq \langle \nabla_f S_f[\Psi^{(i)}, \Lambda^{(i)}, f^{(i)}; \gamma^{(i-1)}], f^{(i)} - f^{(i-1)} \rangle \nonumber \\ 
    &\quad\, - \frac{\alpha_f}{2} \| f^{(i-1)} - f^{(i)} \|_F^2 = - \frac{\alpha_f}{2} \| f^{(i-1)} - f^{(i)} \|_F^2,
\end{align}
where the last line of \eqref{eq:f_step_bnd} is due to the optimality condition of Line \ref{line:f_step_psm_fidelity_alg1} of Algorithm \ref{alg:admm_psm} for $f^{(i)}$.

The next two terms on the RHS of \eqref{eq:second_split_psm_fidelity} are upper bounded using the strong convexity of $S_\Psi$
with modulus $\alpha_\Psi$ as
\begin{align}
    &\mathcal{L}_\beta[\Psi^{(i)}, \Lambda^{(i)}, f^{(i-1)}; \gamma^{(i-1)}] - \mathcal{L}_\beta[\Psi^{(i-1)}, \Lambda^{(i)}, f^{(i-1)}; \gamma^{(i-1)}] \nonumber \\
    &= S_\Psi(\Lambda^{(i)}, \Psi^{(i)}, f^{(i-1)}; \gamma^{(i-1)}) \nonumber \\
    &\quad - S_\Psi(\Lambda^{(i)}, \Psi^{(i-1)}, f^{(i-1)}; \gamma^{(i-1)}) \nonumber
    \\
    &\leq \langle \nabla_\Psi S_\Psi[\Psi^{(i)}, \Lambda^{(i)}, f^{(i-1)}; \gamma^{(i-1)}], \Psi^{(i)} - \Psi^{(i-1)} \rangle \nonumber \\ 
    &\quad - \frac{\alpha_\Psi}{2}\| \Psi^{(i)} - \Psi^{(i-1)}\|_F^2 = - \frac{\alpha_\Psi}{2} 
    \| \Psi^{(i)} - \Psi^{(i-1)}\|_F^2.
    \nonumber
\end{align}
Repeating similar steps for the last two terms on the RHS of \eqref{eq:second_split_psm_fidelity} using the objective $S_\Lambda$ leads to the upper bound of $- \frac{\alpha_\Lambda}{2} \| \Lambda^{(i)} - \Lambda^{(i-1)}\|_F^2$.

Finally, we establish the positivity of the constants in \eqref{eq:aug_lang_upper_bnd_psm_fidelity}. We have $c_\Psi = \alpha_\Psi \geq \xi >0$ and $c_\Lambda = \alpha_\Lambda \geq \xi >0$. Next, because $\alpha_f \geq \beta-L$, we have $C_f = \frac{\alpha_f}{2} - {L^2} \geq \frac{\beta-L}{2} - {L^2} >0$, where the last inequality follows by the assumption $\beta > 2L$.

Combining these results, we achieve the inequality in \eqref{eq:aug_lang_upper_bnd_psm_fidelity}, which concludes Part 1.

\textbf{Part 2.}
For Part 2, we need to show that the augmented Lagrangian is lower bounded.
\begin{align}
    &L_\beta[W^{(i)}] = H(f^{(i)}, \Lambda^{(i)}, \Psi^{(i)}) 
    \nonumber\\
    & + {\beta}\langle \gamma^{(i)}, (\Lambda\Psi^T)^{(i)} - f^{(i)} \rangle 
    + \frac{\beta}{2}\| (\Lambda\Psi^T)^{(i)} - f^{(i)} \|^2_F \label{eq:AugLag} \\
    \label{eq:dualvar_lips_rlt}
    &\geq H(f^{(i)}, \Lambda^{(i)}, \Psi^{(i)}) \nonumber 
    + \langle \lambda\nabla {\bar{\rho}}(f^{(i)}), (\Lambda\Psi^T)^{(i)} - f^{(i)} \rangle 
    \nonumber \\ 
    & 
    + \frac{L}{2}\| (\Lambda\Psi^T)^{(i)} - f^{(i)} \|^2_F \\
    \label{eq:D_lip_cont}
    &\geq H(f^{(i)}, \Lambda^{(i)}, \Psi^{(i)}) 
    + \lambda {\bar{\rho}}((\Lambda\Psi^{T})^{(i)}) - \lambda {\bar{\rho}}(f^{(i)}) \\ 
    &= H(\Lambda^{(i)}\Psi^{(i)T}, \Lambda^{(i)}, \Psi^{(i)}). 
     \label{eq:lower_bnd_aug_lang}
\end{align}
where \eqref{eq:dualvar_lips_rlt} follows by using \eqref{eq:dual_grad_d_rln} to
 replace the second term of \eqref{eq:AugLag} by that of \eqref{eq:dualvar_lips_rlt} and assuming $\beta > L$, and \eqref{eq:D_lip_cont} is due to the gradient Lipschitz continuity of the regularizer {$\bar{\rho}(\cdot)$},
\begin{align*}
    \lambda {\bar{\rho}}(\Lambda\Psi^T) &\leq \lambda {\bar{\rho}}(f) + \langle \lambda \nabla_f {\bar{\rho}}(f), \Lambda\Psi^T - f \rangle \nonumber 
    + \frac{L}{2}\| \Lambda\Psi^T - f \|_F^2.
\end{align*}

Then, since we initially showed that our minimization objective in \eqref{eq:obj_fidelity_psm} is lower bounded by $\underbar{H}$, we have 
\begin{align*}
H(\Lambda^{(i)}\Psi^{(i)T}, \Lambda^{(i)}, \Psi^{(i)}) \geq \underbar{H},
\end{align*}
which leads, by combining with \eqref{eq:lower_bnd_aug_lang}, to $L_\beta[W^{(i)}] \geq \underline{\mathcal{L}_\beta} = \underbar{H}$ and completes the proof of Lemma~\ref{lemma:aug_lang_upper_bnd_psm_fidelity}.
\end{proof}

By Lemma \ref{lemma:aug_lang_upper_bnd_psm_fidelity}, the sequence $L_\beta[W^{(i)}]$ of augmented Lagrangian values converges. What remains is to address the convergence of the objective $H$ and the sequence of \emph{iterates} $(f^{(i)}, \Lambda^{(i)}, \Psi^{(i)}, \gamma^{(i)})$ to a stationary solution of \eqref{eq:hard_cnst_RED_obj}, which we do in the third and final step of the proof of Theorem~\ref{thm:stat_pt_conv_thm}. We start by showing that the duality gap shrinks to zero.
\begin{lemma}
\label{lemma:duality_gap}
Consider Algorithm \ref{alg:admm_psm} 
for solving problem \eqref{eq:hard_cnst_RED_obj}. If $\beta > 2L$, then the duality gap goes to zero, i.e.
\begin{align*}
    \lim_{i \rightarrow \infty} \| (\Lambda\Psi^T)^{(i)} - f^{(i)} \|_F \rightarrow 0.
\end{align*}
\end{lemma}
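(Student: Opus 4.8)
The plan is to exploit the monotone decrease and lower boundedness of the augmented Lagrangian established in Lemma~\ref{lemma:aug_lang_upper_bnd_psm_fidelity} to force the successive differences of the primal iterates to vanish, and then to identify the primal feasibility residual (the duality gap) with the successive difference of the dual variable.

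First, by Parts 1 and 2 of Lemma~\ref{lemma:aug_lang_upper_bnd_psm_fidelity}, under $\beta > 2L$ the sequence $\{\mathcal{L}_\beta[W^{(i)}]\}$ is monotone decreasing and bounded below by $\underline{\mathcal{L}_\beta}$, hence convergent. Summing the inequality \eqref{eq:aug_lang_upper_bnd_psm_fidelity} over $i=1,\ldots,\infty$ telescopes the left-hand side to $\mathcal{L}_\beta[W^{(0)}] - \lim_{i\to\infty}\mathcal{L}_\beta[W^{(i)}] < \infty$, while the right-hand side yields
\begin{align*}
\sum_{i=1}^\infty \Big( C_f \| f^{(i)} - f^{(i-1)} \|_F^2 + C_\Psi \| \Psi^{(i)} - \Psi^{(i-1)} \|_F^2 + C_\Lambda \| \Lambda^{(i)} - \Lambda^{(i-1)} \|_F^2 \Big) < \infty.
\end{align*}
Since $C_f, C_\Psi, C_\Lambda$ are all strictly positive, each series is separately summable, so in particular its general term tends to zero. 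This gives $\| f^{(i)} - f^{(i-1)} \|_F \to 0$ (together with the analogous statements for $\Psi$ and $\Lambda$, which will be reused in the final step of the proof of Theorem~\ref{thm:stat_pt_conv_thm}).

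Second, I would connect the duality gap to the dual increment. The dual update in Step~\ref{line:dual_step_psm_fidelity_alg1} of Algorithm~\ref{alg:admm_psm} reads $\gamma^{(i)} = \gamma^{(i-1)} + \Lambda^{(i)}\Psi^{(i)T} - f^{(i)}$, so the primal residual is exactly the successive difference of the dual variable,
\begin{align*}
(\Lambda\Psi^T)^{(i)} - f^{(i)} = \gamma^{(i)} - \gamma^{(i-1)}.
\end{align*}
Invoking Lemma~\ref{lemma:dual_primal_bnd_psm_fidelity} then controls this by the primal difference,
\begin{align*}
\| (\Lambda\Psi^T)^{(i)} - f^{(i)} \|_F = \| \gamma^{(i)} - \gamma^{(i-1)} \|_F \leq \frac{L}{\beta}\| f^{(i)} - f^{(i-1)} \|_F,
\end{align*}
and letting $i \to \infty$ the right-hand side vanishes by the first step, establishing the claim.

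I do not expect a genuine obstacle: once Lemmas~\ref{lemma:dual_primal_bnd_psm_fidelity} and~\ref{lemma:aug_lang_upper_bnd_psm_fidelity} are in hand, this is a standard telescoping/summability argument. The only point requiring care is recognizing that the scaled-form dual update makes the feasibility residual \emph{equal} to the dual increment, so that it need not be estimated directly; everything then reduces to the square-summability already guaranteed by the sufficient decrease of the augmented Lagrangian.
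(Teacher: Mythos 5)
Your proposal is correct and follows essentially the same route as the paper's proof: Lemma~\ref{lemma:aug_lang_upper_bnd_psm_fidelity} forces the successive primal differences to vanish (the paper argues this via the vanishing decrease of the convergent augmented Lagrangian sequence, while you use the equivalent telescoping/square-summability argument), and then Lemma~\ref{lemma:dual_primal_bnd_psm_fidelity} together with the dual update in Step~\ref{line:dual_step_psm_fidelity_alg1} identifies the feasibility residual with the dual increment and bounds it by $\frac{L}{\beta}\| f^{(i)} - f^{(i-1)} \|_F \to 0$.
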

\begin{proof}
    Lemma \ref{lemma:aug_lang_upper_bnd_psm_fidelity} implies that when $\beta > 2L$, the sequence of augmented Lagrangian values is convergent. Thus, the LHS of \eqref{eq:aug_lang_upper_bnd_psm_fidelity} converges to zero. This also means that each of the norms on the RHS converges to zero to satisfy the inequality. Hence, the following successive differences converge to zero,
    \begin{align}
        f^{(i)} - f^{(i-1)} \rightarrow 0; \Lambda^{(i)} - \Lambda^{(i-1)} \rightarrow 0; \Psi^{(i)} - \Psi^{(i-1)} \rightarrow 0. \nonumber
    \end{align}
    Furthermore, applying Lemma \ref{lemma:dual_primal_bnd_psm_fidelity}, {we have $\gamma^{(i)} - \gamma^{(i-1)} \rightarrow 0.$} 
    Finally, using the update equation for the dual variable $\gamma$, we obtain
    \begin{align}
        (\Lambda\Psi^T)^{(i)} - f^{(i)} \rightarrow 0,
        \label{eq:DualGapToZero}
    \end{align}
    which completes the proof of Lemma~\ref{lemma:duality_gap}.
\end{proof}
\vspace{-0.25cm}

It follows from Lemma~\ref{lemma:duality_gap} that the gap $L_\beta[W^{(i)}] - H(\Lambda^{(i)}\Psi^{(i)T}, \Lambda^{(i)}, \Psi^{(i)})$ between the augmented Lagrangian and the objective converges to zero, and therefore thanks to the convergence of $L_\beta[W^{(i)}]$, the objective $H$ converges too. This establishes Part (i) of Theorem~\ref{thm:stat_pt_conv_thm}.

Turning to Part (ii) of Theorem~\ref{thm:stat_pt_conv_thm}, we first show that the iterates are bounded.

\begin{lemma}
\label{lemma:iter_bnd}
The iterates $(f^{(i)}, \Lambda^{(i)}, \Psi^{(i)}, \gamma^{(i)})$ of Algorithm \ref{alg:admm_psm} for solving \eqref{eq:hard_cnst_RED_obj} are bounded for $i > 0$.
\end{lemma}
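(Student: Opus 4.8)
The plan is to exploit the coercivity supplied by the Frobenius-norm penalties on the factors, which is precisely why those terms were put into the objective. First I would record that, by Part~1 of Lemma~\ref{lemma:aug_lang_upper_bnd_psm_fidelity}, the augmented Lagrangian is monotone decreasing, so $\mathcal{L}_\beta[W^{(i)}] \leq \mathcal{L}_\beta[W^{(0)}]$ for every $i$, giving a finite, iteration-independent upper bound. Combining this with the chain of inequalities from Part~2 of the same lemma (culminating in \eqref{eq:lower_bnd_aug_lang}), namely $\mathcal{L}_\beta[W^{(i)}] \geq H(\Lambda^{(i)}\Psi^{(i)T}, \Lambda^{(i)}, \Psi^{(i)})$, produces an upper bound on the objective $H$ evaluated at $f=\Lambda^{(i)}\Psi^{(i)T}$.

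Next I would isolate the factor norms. Since the data-fidelity term $\|\bar{R}\Lambda^{(i)}\Psi^{(i)T} - g\|_F^2$ is nonnegative and, by the strong-passivity argument in \eqref{eq:red_lower_bnd}, $\bar{\rho}\geq 0$, the two remaining terms obey
\begin{align*}
\xi\big(\|\Lambda^{(i)}\|_F^2 + \|\Psi^{(i)}\|_F^2\big) \leq H(\Lambda^{(i)}\Psi^{(i)T}, \Lambda^{(i)}, \Psi^{(i)}) \leq \mathcal{L}_\beta[W^{(0)}].
\end{align*}
As $\xi > 0$, this bounds $\{\Lambda^{(i)}\}$ and $\{\Psi^{(i)}\}$ uniformly in $i$, hence the products $\{(\Lambda\Psi^T)^{(i)}\}$ as well.

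To bound $\{f^{(i)}\}$ I would route through Lemma~\ref{lemma:duality_gap}: the sequence $(\Lambda\Psi^T)^{(i)} - f^{(i)}$ converges (to $0$) and is therefore bounded, so writing $f^{(i)} = (\Lambda\Psi^T)^{(i)} - \big((\Lambda\Psi^T)^{(i)} - f^{(i)}\big)$ exhibits $f^{(i)}$ as a difference of two bounded sequences. Finally, for the dual variable I would invoke the $f$-step optimality condition \eqref{eq:dual_grad_d_rln}, $\gamma^{(i)} = \tfrac{\lambda}{\beta}\nabla\bar{\rho}(f^{(i)}) = \tfrac{\lambda}{\beta}\big(f^{(i)} - \bar{D}_\phi(f^{(i)})\big)$, so that strong passivity gives $\|\gamma^{(i)}\|_F \leq \tfrac{2\lambda}{\beta}\|f^{(i)}\|_F$, which is bounded by the previous step.

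The main obstacle is the bound on $f^{(i)}$: unlike the factors, $f$ carries no explicit Frobenius penalty, and the regularizer $\bar{\rho}$ only provides nonnegativity (it can vanish for arbitrarily large $f$ whenever $\bar{D}_\phi(f)=f$), so no coercive control of $\|f\|_F$ is available directly from the objective. The resolution is that boundedness of $f^{(i)}$ must be inferred indirectly, from the already-established vanishing of the duality gap together with the boundedness of the factors, rather than from a penalty term. This is exactly the step that relies on Lemma~\ref{lemma:duality_gap} having been proved first and, crucially, independently of boundedness of the iterates, so that no circularity is introduced.
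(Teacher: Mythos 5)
Your proof is correct, and for the primal variables it follows essentially the same route as the paper: monotonicity of the augmented Lagrangian gives $\mathcal{L}_\beta[W^{(i)}] \leq \mathcal{L}_\beta[W^{(0)}]$, the bound \eqref{eq:lower_bnd_aug_lang} transfers this to $H(\Lambda^{(i)}\Psi^{(i)T},\Lambda^{(i)},\Psi^{(i)})$, nonnegativity of the fidelity term and of $\bar{\rho}$ isolates the coercive terms $\xi\bigl(\|\Lambda^{(i)}\|_F^2+\|\Psi^{(i)}\|_F^2\bigr)$ (the paper phrases this as coercivity of $H$ in its second and third arguments; your explicit inequality is the same fact made concrete), and $f^{(i)}$ is bounded as the difference of the bounded sequence $(\Lambda\Psi^T)^{(i)}$ and the vanishing, hence bounded, duality gap from Lemma~\ref{lemma:duality_gap} --- which, as you correctly note, was proved from convergence of the augmented Lagrangian values alone, so no circularity arises. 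Where you genuinely depart from the paper is the dual variable. The paper argues that $\gamma^{(i)}$ is bounded because the augmented Lagrangian is sandwiched between two bounds and $\gamma^{(i)}$ enters only through the linear term $\beta\langle\gamma^{(i)},(\Lambda\Psi^T)^{(i)}-f^{(i)}\rangle$; this is looser, since a bounded inner product against a vanishing gap does not by itself preclude $\|\gamma^{(i)}\|_F\to\infty$. Your argument instead reads $\gamma^{(i)}$ off the $f$-step optimality condition \eqref{eq:dual_grad_d_rln}, $\gamma^{(i)}=\tfrac{\lambda}{\beta}\bigl(f^{(i)}-\bar{D}_\phi(f^{(i)})\bigr)$, and applies strong passivity to obtain $\|\gamma^{(i)}\|_F\le\tfrac{2\lambda}{\beta}\|f^{(i)}\|_F$; this is self-contained, quantitative, and arguably tighter than the paper's reasoning. (Strong passivity is stated per frame, but summing over columns gives $\|\bar{D}_\phi(f)\|_F\le\|f\|_F$, so your factor of $2$ is justified.)
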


\begin{proof}

We use the following facts:
\begin{itemize}
    \item[(F1)] $\mathcal{L}_\beta [W^{(i)}]$ is a non-increasing function as $i$ increases for all $i > 0$, as shown in Lemma \ref{lemma:aug_lang_upper_bnd_psm_fidelity}.
    \item[(F2)] The objective $H(f, \Lambda, \Psi)$ is coercive with respect to the second and the third variables.
\end{itemize}

It follows from (F1) that
$\mathcal{L}_\beta[W^{(i)}] \leq \mathcal{L}_\beta[W^{(0)}] , \,\, \forall i \geq 0$.
Combining this with \eqref{eq:lower_bnd_aug_lang}, yields
\begin{align}
    &H((\Lambda\Psi^{T})^{(i)}, \Lambda^{(i)}, \Psi^{(i)}) \leq \mathcal{L}_\beta[W^{(i)}] \leq
    \mathcal{L}_\beta[W^{(0)}]  \label{eq:bnd_iterate_aug_lang}.
\end{align}

Now, note  that by \eqref{eq:hard_cnst_RED_obj}, the first variable $(\Lambda\Psi^T)^{(i)}$  in $H((\Lambda\Psi^{T})^{(i)}, \Lambda^{(i)}, \Psi^{(i)})$ only appears as the argument of ${\bar{\rho}}(\cdot)$, and by \eqref{eq:red_lower_bnd}
${\bar{\rho}}(\cdot) \geq 0$. This implies by (F2) that  
$H((\Lambda\Psi^{T})^{(i)}, \Lambda^{(i)}, \Psi^{(i)})$ is coercive with respect to $\Lambda^{(i)}$ and $\Psi^{(i)}$. 
Combining this with the boundedness of $H$ in \eqref{eq:bnd_iterate_aug_lang} implies the boundedness of these iterates. 

Next, the boundedness of $f^{(i)}$ follows from Lemma \ref{lemma:duality_gap}, since we showed that $\Lambda^{(i)}$ and $\Psi^{(i)}$, thus $(\Lambda\Psi^T)^{(i)}$ are all bounded, and the duality gap shrinks to zero as the iterates progress.

Finally, since all the primary variables are bounded, the augmented Lagrangian is
bounded both from below (by Part 2 of Lemma \ref{lemma:aug_lang_upper_bnd_psm_fidelity}) and from above by \eqref{eq:bnd_iterate_aug_lang}, and the dual variable iterates $\gamma^{(i)}$ appear only in the linear Lagrangian term, they are also bounded.

This concludes the proof of Lemma~\ref{lemma:iter_bnd}.
\end{proof}
\vspace{-0.2cm}

It follows that the sequences of iterates $\{f^{(i)}, \Lambda^{(i)}, \Psi^{(i)}, \gamma^{(i)})\}$ has at least one accumulation point $(f^*, \Lambda^*, \Psi^*, \gamma^*)$. Next, we establish the properties of all such accumulation points.
    
Taking the limit over $i$ in \eqref{eq:dual_grad_d_rln} yields \eqref{eq:LagStationary_f}.

Next, since $\Lambda^{(i)}$ is optimal for the subproblem in Line \ref{line:lambda_step_psm_fidelity_alg1} of Algorithm \ref{alg:admm_psm}, we have
\begin{align}
    &\nabla_\Lambda S_\Lambda(\Psi^{(i-1)}, \Lambda^{(i)}, f^{(i-1)}; \gamma^{(i-1)}) = 0 \nonumber \\ 
    &2\bar{R}^T(\bar{R}\Lambda^{(i)}\Psi^{(i-1)T}\Psi^{(i-1)} - g\Psi^{(i-1)}) \nonumber \\
    &\,\, + \beta (\Lambda^{(i)}\Psi^{(i-1)T} - f^{(i-1)} + {\gamma^{(i-1)}})\Psi^{(i-1)} + \xi \Lambda^{(i)}
    = 0 \nonumber
\end{align}
Taking the limit over $i$ yields \eqref{eq:LagStationary_Lambda}. Similarly, taking the limit in the optimality condition with respect to $\Psi^{(i)}$, $\nabla_\Psi S_\Psi(\Psi^{(i)},\Lambda^{(i-1)},f^{(i-1)};\gamma^{(i-1)}) =0$ yields \eqref{eq:LagStationary_Psi}. Finally, taking the limit with respect to $i$ in \eqref{eq:DualGapToZero} verifies \eqref{eq:LagStationary_Feasible}. These results complete the proof of Theorem \ref{thm:stat_pt_conv_thm}.
\qed
\vspace{-0.25cm}

\end{document}